\newcommand{\name}[1]{\textsc{#1}}
\newcommand{\KD}{\name{$K_4$-minor cover}}
\newcommand{\KC}{{$K_4$-minor cover}}
\newcommand{\KCC}{\name{disjoint $K_{4}$-minor cover}}
\newcommand{\mMSO}{\name{$p$-min-MSO}}
\newcommand{\YES}{\name{YES}}
\newcommand{\parfuncdefn}[4]{
\begin{tabbing}
\name{#1}\\
\emph{Input:} \hspace{1.2cm} \= \parbox[t]{14cm}{#2} \\
\emph{Parameter:}            \> \parbox[t]{14cm}{#3} \\
\emph{Output:}             \> \parbox[t]{14cm}{#4} \\
\end{tabbing}
}
\newcommand{\mc}{\mathcal}
\newtheorem{lemma}{Lemma}
\newtheorem{claim}{Claim}
\newtheorem{definition}{Definition}
\newtheorem{theorem}{Theorem}
\newtheorem{observation}{Observation}
\newtheorem{reduction}{Reduction Rule}
\newtheorem{branching}{Branching Rule}
\newtheorem{numclaim}{Claim}
\newenvironment{proofof}{\noindent \textit{Proof of claim. }}{\hfill$\Diamond$}
\newenvironment{reminder}[1]{\smallskip \noindent {\bf Reminder of #1  }\em}{}
\begin{document}

\title{\textbf{A single-exponential FPT algorithm \\for the \KD{}
    problem}\thanks{This work is supported by the ANR project
    AGAPE (ANR-09-BLAN-0159).}}

\author{Eun Jung Kim\footnotemark[1], Christophe
  Paul\footnotemark[2], Geevarghese Philip\footnotemark[3]}

\date{}

\footnotetext[1]{CNRS, LAMSADE, Paris,
  France.\texttt{\small{\{eunjungkim78\}@gmail.com}}}

\footnotetext[2]{CNRS, LIRMM, Montpellier,
  France.\texttt{\small{\{paul\}@lirmm.fr}}}

\footnotetext[3]{MPII, Saarbr\"{u}cken, Germany.\texttt{\small{gphilip@mpi-inf.mpg.de}}}

\maketitle

\vspace{-0.5cm}
\begin{abstract}
Given an input graph $G$ and an integer $k$, the parameterized \KD{} problem asks whether there is a set $S$ of at most $k$ vertices whose deletion results in a $K_4$-minor-free graph, or equivalently in a graph of treewidth at most $2$. This problem is inspired by two well-studied parameterized vertex deletion problems, \textsc{Vertex Cover} and \textsc{Feedback Vertex Set}, which can also be expressed as \textsc{Treewidth-$t$ Vertex Deletion} problems: $t=0$ for {\sc Vertex Cover} and $t=1$ for {\sc Feedback Vertex Set}.
While a single-exponential FPT algorithm has been known for a long time for \textsc{Vertex Cover}, such an algorithm for \textsc{Feedback Vertex Set} was devised comparatively recently. While it is known to be unlikely that \textsc{Treewidth-$t$ Vertex Deletion} can be solved in time $c^{o(k)}\cdot n^{O(1)}$, it was open whether the \KD{} could be solved in single-exponential FPT time, i.e. in $c^k\cdot n^{O(1)}$ time. This paper answers this question in the affirmative.
\end{abstract}

\vspace{-0.4cm}
\section{Introduction}


\vspace{-0.2cm}
Given a set $\mathcal{F}$ of graphs, the parameterized \textsc{$\mathcal{F}$-minor cover} problem is to identify a set $S$ of at most $k$ vertices --- if it exists --- in an input graph $G$ such that the deletion of $S$ results in a graph which does not have any graph from $\mathcal{F}$ as a minor; the parameter is $k$. Such a set $S$ is called an \textit{$\mathcal{F}$-minor cover} (or an \textit{$\mathcal{F}$-hitting set}) of \(G\).  A number of fundamental graph problems can be viewed as \textsc{$\mathcal{F}$-minor cover} problems. Well-known examples include \textsc{Vertex Cover} ($\mathcal{F}=\{K_2\}$), \textsc{Feedback Vertex Set} ($\mathcal{F}=\{K_3\}$), and more generally the \textsc{Treewidth-$t$ Vertex Deletion} for any constant $t$, which asks whether an input graph can be converted to one with treewidth at most $t$ by deleting at most $k$ vertices. Observe that for $t=0$ and $1$, \textsc{Treewidth-$t$ Vertex Deletion} is equivalent to \textsc{Vertex Cover} and \textsc{Feedback Vertex Set}, respectively.
The importance of  \textsc{Treewidth-$t$ Vertex Deletion} is not only theoretical. For example, even for small values of $t$, efficient algorithms for this problem would improve algorithms for inference in Bayesian Networks as a subroutine of the \emph{cutset conditioning method}~\cite{BidyukD07}. This method is practical only with small value $t$ and efficient algorithms for small treewidth $t$, though not for any fixed $t$, are desirable.

In this paper we consider the parameterized \textsc{$\mathcal{F}$-minor cover} problem for $\mathcal{F}=\{K_4\}$, which is equivalent to the \textsc{Treewidth-$2$ Vertex Deletion}. The NP-hardness of this problem is due to~\cite{LewisYannakakis1980}. Fixed-parameter tractability (\textit{i.e.} can be solved in time $f(k)\cdot n^{O(1)}$ for some computable function $f$) follows from two celebrated meta-results: the Graph Minor Theorem of Robertson and Seymour~\cite{RS04} and Courcelle's theorem~\cite{Cou90}. Unfortunately, the resulting algorithms involve huge exponential functions in $k$ and are impractical even for small values of $k$.

In recent years, single-exponential time parameterized algorithms --- those which run in $c^k\cdot n^{O(1)}$ time for some constant \(c\) --- and also sub-exponential time parameterized algorithms have been developed for a wide variety of problems. Of special interest is the bidimensionality theory introduced by Demaine et al.~\cite{DFH05} as a tool to obtain sub-exponential parameterized algorithms for the so-called bidimensional problems on \(H\)-minor-free graphs.  It is also known to be unlikely that \emph{every} fixed parameter tractable problem can be solved in sub-exponential time~\cite{CCF05}. For problems which probably do not allow sub-exponential time algorithms, ensuring a single exponential upper bound on the time complexity is highly desirable. For example, Bodleander et al.~\cite{BodlaenderFLPST09} proved that all problems that have finite integer index and satisfy some compactness conditions admit a linear kernel on graphs of bounded genus~\cite{BodlaenderFLPST09}, implying single-exponential running times for such problems.  More recently Cygan et al. developed the ``cut-and-count'' technique to derive (randomized) single-exponential parameterized algorithms for many connectivity problems parameterized by treewidth~\cite{CNP11}. In contrast, some problems are unlikely to have single-exponential algorithms \cite{LokshtanovMS11}.

For \textsc{treewidth-$t$ vertex deletion}, single-exponential parameterized algorithms are known only for $t=0$ and $t=1$. Indeed, for $t=0$ (\textsc{Vertex Cover}), the $O(2^k\cdot n)$-time bounded search tree algorithm is an oft-quoted first example for a parameterized algorithm~\cite{DF99,Nie06,FG06}. For \(t=1\) (\textsc{Feedback Vertex Set}), no single-exponential algorithm was known for many years until Guo \textit{et al.}~\cite{GGH06} and Dehne \textit{et al.}~\cite{DFL07} independently discovered such algorithms. The fastest known deterministic algorithm for this problem runs in time $O(3.83^k\cdot n^2)$~\cite{CCL10}. The fastest known \emph{randomized} algorithm, developed by Cygan et al., runs in $O(3^k\cdot n^{O(1)})$ time~\cite{CNP11}. Very recently, Fomin et al.~\cite{FominLMS11} presented $2^{O(k\log k)}\cdot n^{O(1)}$-time algorithms for \textsc{treewidth-$t$ vertex deletion}. In this paper we prove the following result for $t=2$:

\begin{theorem}\label{th:runtime}
The \KD{} problem can be solved in $2^{O(k)}\cdot n^{O(1)}$ time.
\end{theorem}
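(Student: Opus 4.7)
The plan is to follow the two-phase template that underlies the known single-exponential FPT algorithms for \textsc{Vertex Cover} and \textsc{Feedback Vertex Set}: first reduce the problem to its disjoint variant via iterative compression, and then attack the disjoint variant with a combination of reduction rules and a bounded-depth branching procedure.

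For iterative compression, I would build up $G$ one vertex at a time, maintaining at each stage a $K_4$-minor cover of size at most $k+1$ for the current subgraph (taking the old cover together with the newly added vertex gives a size-$(k+2)$ cover, which must then be compressed back to size $k+1$). Each compression step hands us a $K_4$-minor cover $W$ of size $k+1$ and asks for one of size at most $k$; guessing which subset of $W$ belongs to the new solution and marking the remainder as forbidden reduces this in $2^{|W|} = 2^{O(k)}$ ways to an instance of \KCC, where we must find a $K_4$-minor cover of size at most $k' \le k$ disjoint from a prescribed set $W$ such that $G - W$ is already $K_4$-minor-free (equivalently, has treewidth at most $2$).

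The heart of the proof is therefore to solve \KCC in $2^{O(k)} \cdot n^{O(1)}$ time. Since $G - W$ is series-parallel, it has a well-behaved decomposition into $2$-connected blocks and, inside each block, an SPQR-like decomposition into $3$-edge-connected pieces. I would exploit this structure in two stages. First, exhaustively apply reduction rules that eliminate locally trivial structures: components of $G - W$ attached to $W$ by a single vertex, degree-$1$ and certain degree-$2$ vertices in $G - W$, pieces of $G - W$ whose interaction with $W$ cannot possibly participate in a surviving $K_4$-minor, and vertices of $V(G) \setminus W$ that must belong to every solution (for instance, the unique candidate of a cheap $K_4$-subdivision meeting $W$). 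Second, on an irreducible instance, branch on a carefully chosen small witness: either on a vertex $v \in V(G) \setminus W$ whose removal destroys many $K_4$-minors (either $v \in S$ or we permanently move $v$ to the "forbidden" side, which effectively enlarges the $K_4$-minor-free part and shrinks $|W|$), or on a constant-size set of candidate hitters for a minimal $K_4$-subdivision that uses at least one $W$-vertex as a branch vertex. In each branch either $k'$ decreases by one or $|W|$ decreases by at least one, so a standard measure such as $\mu = k' + \alpha |W|$ for a suitable constant $\alpha$ drops by a positive amount at every node of the search tree, yielding the $2^{O(k)}$ bound.

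The main obstacle, I expect, is exactly the design of reduction rules strong enough to guarantee that every branch of the final search tree makes constant progress in $\mu$. For \textsc{Feedback Vertex Set}, cycles can be destroyed by reasoning about paths and low-degree vertices; for \KC{}, the obstruction is a $3$-connected object ($K_4$), and local modifications can create or destroy $K_4$-minors in subtle ways. Pinning down the right combinatorial invariants on the SPQR decomposition of $G - W$, and proving that on a fully reduced instance there always exists a constant-size branching witness whose every branch shrinks $\mu$, will require case analysis on how blocks of $G - W$ can attach to $W$ and on which configurations of attachment vertices can complete a $K_4$-minor through $W$. Once this structural classification is in place, the two-phase algorithm and its $2^{O(k)} \cdot n^{O(1)}$ running time follow in a standard way.
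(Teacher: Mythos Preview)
Your high-level plan is sound and matches the paper's architecture: iterative compression to reduce to the disjoint problem, then reduction rules plus bounded branching on the disjoint instance, exploiting that $G-S$ (your $G-W$) has treewidth at most two. Where your proposal goes wrong is precisely at the point you flag as the main obstacle: the measure.

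You propose $\mu = k' + \alpha|W|$ and claim that in the branch where $v$ is declared \emph{not} in the solution, $|W|$ shrinks. It does not. In the disjoint problem, $W$ (the paper's $S$) is the set of vertices forbidden from the solution, and $G-W$ is the $K_4$-minor-free side. When you commit a vertex $v\in V(G)\setminus W$ to the ``not in solution'' side, the only sensible bookkeeping is to \emph{add} $v$ (or a small connected set $X$ containing $v$) to $W$; this is exactly what the paper's Branching Rules~\ref{rule:br2} and~\ref{rule:br3} do. In that branch $k$ is unchanged and $|W|$ strictly increases, so your $\mu$ goes up, and the recursion does not terminate on that measure.

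The paper's fix is to track not $|S|$ but the connectivity structure of $G[S]$: the measure is
\[
\mu \;=\; (2c_1+2)\,k \;+\; (2c_1+2)\,\#cc(G[S]) \;+\; \#bc(G[S]),
\]
where $\#cc$ and $\#bc$ count connected and biconnected components. Branching Rules~\ref{rule:br2} and~\ref{rule:br3} are only invoked on a connected set $X\subseteq F$ whose $S$-neighbours lie in different connected (resp.\ biconnected) components of $G[S]$; adding $X$ to $S$ then merges components and forces $\#cc$ (resp.\ $\#bc$) to drop, which compensates for the possible rise in $\#bc$ when two connected components merge. Since $\#cc(G[S])$ and $\#bc(G[S])$ are each at most $|S|=k$ initially, $\mu=O(k)$ and the search tree has depth $O(k)$.

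Two further ingredients of the paper are absent from your sketch and are not easily absorbed into ``reduction rules strong enough''. First, to guarantee that the branching set $X$ has \emph{constant} size (so the branching degree is bounded), the paper uses a protrusion-replacement rule adapted to the disjoint setting: whenever a $4$-protrusion disjoint from $S$ exceeds a fixed constant $\gamma(9)$, it is replaced by a bounded-size equivalent gadget. This is what ultimately bounds $|V_\alpha|$ for marked nodes of the extended SP-decomposition and hence bounds $|X|$. Second, when none of the branching rules applies, the instance need not be trivial; the paper shows it is then an \emph{independent instance} in which the remaining task is equivalent to \textsc{Vertex Cover} on an auxiliary graph on $F$, solvable in $2^{k}\cdot n^{O(1)}$. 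Your plan has no analogue of this endgame.
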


\vspace{-0.2cm}
Our single-exponential parameterized algorithm for \KC{} is based on iterative compression. This allows us, with a single-exponential time overhead, to focus on the \emph{disjoint version} of the \KC{} problem: given a solution $S$, find a smaller solution disjoint from $S$. We employ a search tree method to solve the disjoint problem. Although our algorithm shares the spirit of Chen et al.'s search tree algorithm for {\sc Feedback Vertex Set} \cite{ChenFLLV08}, that we want to cover $K_4$-minor instead of $K_3$ requires a nontrivial effort. In order to bound the branching degree by a constant, three key ingredients are exploited. First, we employ protrusion replacement, a technique developed to establish a meta theorem for polynomial-size kernels~\cite{BodlaenderFLPST09,FLST10,FominLMPS11}. We need to modify the existing notions so as to use the protrusion technique in the context of iterative compression. Second, we introduce a notion called the extended SP-decomposition, which makes it easier to explore the structure of treewidth-two graphs. Finally, the technical running time analysis depends on the property of the extended SP-decomposition and a measure which keeps track of the biconnectivity.

\vspace{-0.4cm}
\section{Notation and preliminaries}\label{sec:preliminary}

\vspace{-0.2cm}
We follow standard graph terminology as found in, e.g., Diestel's textbook~\cite{diestel}. Any graph considered in this paper is undirected, loopless and may contain parallel edges. A {\em cut vertex} (resp. {\em cut edge}) is a vertex (resp. an edge) whose deletion strictly increases the number of connected components in the graph. A connected graph without a cut vertex is {\em biconnected}. A subgraph of $G$ is called a {\em block} if it is a maximal biconnected subgraph. A biconnected graph is itself a block. In particular, an edge which is not a part of any cycle is a block as well. For a vertex set $X$ in a graph $G=(V,E)$, the \emph{boundary} $\partial_G(X)$ of $X$ is the set $N(V\setminus X)$, i.e. the set of vertices in $X$ which are adjacent with at least one vertex in $V\setminus X$. We may omit the subscript when it is clear from the context.

\smallskip
\noindent
\textbf{Minors.}
The \emph{contraction} of an edge $e=(u,v)$ in a graph $G$ results in a graph denoted $G/e$ where vertices $u$ and $v$ have been replaced by a single vertex \(uv\) which is adjacent to all the former neighbors of \(u\) and \(v\). A {\em subdivision} of an edge $e$ is the operation of deleting $e$ and introducing a new vertex $x_e$ which is adjacent to both the end vertices of $e$.  A subdivision of a graph $H$ is a graph obtained from $H$ by a series of edge subdivisions. A graph $H$ is a {\em minor} of graph $G$ if it can be obtained from a subgraph of $G$ by contracting edges. A graph $H$ is a {\em topological minor} of $G$ if a subdivision of $H$ is isomorphic to a subgraph $G'$ of $G$. In these cases we say that $G$ contains $H$ as a (topological) minor and that $G'$ is an {\em $H$-subdivision} in $G$. In an $H$-subdivision \(G'\) of $G$, the vertices which correspond to the original vertices of \(H\) are called {\em branching nodes}; the other vertices of \(G'\) are called {\em subdividing nodes}.
It is well known that if the maximum degree of $H$ is at most three, then $G$ contains $H$ as a minor if and only if it contains $H$ as a topological minor~\cite{diestel}. A \emph{$\theta_3$-subdivision} is a graph which consists of three
vertex disjoint paths between two branching vertices.

\smallskip
\noindent
\textbf{Series-parallel graphs and treewidth-two graphs.}
A two-terminal graph is a triple $(G,s,t)$ where $G$ is a graph and the \emph{terminals} $s$, $t$. 
The \emph{series composition} of $(G_1,s_1,t_1)$ and $(G_2,s_2,t_2)$ is obtained by taking the disjoint union of $G_1$ and $G_2$ and identifying $t_1$ with $s_2$. The resulting graph has $s_1$ and $t_2$ as terminals.
The \emph{parallel composition} of $(G_1,s_1,t_1)$ and $(G_2,s_2,t_2)$ is obtained by  taking the disjoint union of $G_1$ and $G_2$ and identifying $s_1$ with $s_2$ and $t_1$ with $t_2$. Series and parallel compositions generalize to any number of two-terminal graphs. \emph{Two-terminal series-parallel graphs} are formed from the single edge and successive series or parallel compositions. A graph $G$ is a \emph{series-parallel graph} (SP-graph) if $(G,s,t)$ is a two-terminal series-parallel graph for some $s,t\in V(G)$.


The recursive construction of a series-parallel graph $G$ defines an \emph{SP-tree} $(T,\mc{X}=\{X_{\alpha}:\alpha \in V(T)\})$, where $T$ is a tree whose leaves correspond to the edges of $G$. Every internal node $\alpha$ is either an \emph{S-node} or a \emph{P-node} and represents the subgraph $G_{\alpha}$ resulting from the series composition or the parallel composition, respectively, of the graphs associated with its children. Every node \(\alpha\) of $T$ is labelled by the set $X_{\alpha}$ of the \emph{terminals} of $G_{\alpha}$. Interested readers are referred to Valdes et al.'s seminal paper on the subject~\cite{VTL82}.
We may assume that an SP-tree satisfies additional conditions. We use, for example, \emph{canonical}\footnote{Full definition, proofs of lemmas, theorems \dots marked by \textbf{$\star$} are also deferred to the appendix} SP-trees for the purpose of analysis, whose definition will not be given in the extended abstract. We remark that any SP-graph can be represented as a canonical SP-tree~\cite{BodlaenderF96} and it can be computed in linear time.




We refer to Diestel's textbook~\cite{diestel} for the definition of the treewidth of a graph $G$ which we denote $tw(G)$.
It is well known that a graph has treewidth at most two graphs if and only if it is $K_4$-minor-free. We also make use of the following alternative characterization: $tw(G)\leqslant 2$ if and only if every block of $G$ is a \emph{series-parallel graph}~\cite{Bodlaender98,BodlaenderF96}.

\smallskip
\noindent
\textbf{Extended SP-decompositon.}
A connected graph $G$ can be decomposed into blocks which are joined by the cut vertices of $G$ in a tree-like manner. To be precise, we can associate a {\em block tree} $\mc{B}_G$ to $G$, in which the node set consists of all blocks and cut vertices of $G$, and a block $B$ and a cut vertex $c$ are adjacent in $\mc{B}_G$ if and only if $B$ contains $c$. To explore the structure of a treewidth-two graph $G$ efficiently, we combine its block tree $\mc{B}_G$ with (canonical) SP-trees of its blocks into an \emph{extended SP-decomposition} as described below. We assume that $G$ is connected: in general, an extended SP-decomposition of $G$ is a collection of extended SP-decompositions of its connected components.

Let $\mc{B}_G$ be the block tree of a treewidth-two graph $G$. We fix an arbitrary cut node $c_{root}$ of $\mc{B}_G$ if one exists. The {\em oriented block tree} $\vec{\mc{B}}_G$ is obtained by orienting the edges of $\mc{B}_G$ outward from $c_{root}$. If $\mc{B}_G$ consists of a single node, it is regarded as an oriented block tree itself.



We construct an extended SP-decomposition of a connected graph $G$ by replacing the nodes of $\vec{\mc{B}}_G$ by the corresponding SP-trees and connecting distinct SP-trees to comply the orientations of edges in $\vec{\mc{B}}_G$. To be precise, an {\em extended SP-decomposition} is a pair $(T, \mc{X}=\{X_{\alpha}:\alpha\in V(T)\})$, where $T$ is a rooted tree whose vertices are called {\em nodes} and $\mc{X}=\{X_{\alpha}:\ \alpha\in V(T)\}$ is a collection of subsets of $V(G)$, one for each node in \(T\). We say that $X_{\alpha}$ is the \emph{label} of node $\alpha$.
\begin{itemize}
\vspace{-0.2cm}
\item For each block $B$ of $G$, let $(T^B,\mc{X}^B)$ be a (canonical) SP-tree of $G[B]$ such that $c(B)$ is one of the terminal associated to the root node of $T^B$. A leaf node of $T^B$ is called an {\em edge node}.
\vspace{-0.2cm}
\item For each cut vertex $c$ of $G$, add to $(T,\mathcal{X})$ a \emph{cut node} $\alpha$ with $X_{\alpha}=\{c\}$.
\vspace{-0.2cm}
\item For each block $B$ of $G$, let the root node of $(T^B,\mathcal{X}^B)$ be a child of the unique
cut node $\alpha$ in \(T\) which satisfies $X_{\alpha}=\{c(B)\}$.
\vspace{-0.2cm}
\item For a cut vertex $c$ of $G$, let $B=B(c)$ be the unique block
  such that $(B,c)\in E(\vec{\mc{B}}_G)$. Let $\beta$ be an
  arbitrary leaf node of the (canonical) SP-tree
  $(T^B,\mathcal{X}^B)$ such that $c\in X_{\beta}$ (note that such
  a node always exists). Make the cut node $\alpha$ of $(T,\mathcal{X})$ labeled by $\{c\}$ a child of the leaf node $\beta$.
\end{itemize}

Let \(\alpha\) be a node of \(T\). Then $T_{\alpha}$ is the subtree of $T$ rooted at node $\alpha$; $E_{\alpha}$ is the set of edges $(u,v)\in E(G)$ such that there exists an edge node $\alpha' \in V(T_{\alpha})$ with $X_{\alpha'}=\{u,v\}$; and  $G_{\alpha}$ is the --- not necessarily induced --- subgraph of $G$ with the vertex set $V_{\alpha}:=\bigcup_{\alpha' \in  V(T_{\alpha})} X_{\alpha'}$ and the edge set $E_{\alpha}$. Recall that \(X_{\alpha}\) is the set of vertices which form the label of the node \(\alpha\), and that \(|X_{\alpha}|\in\{1,2\}\). We define $Y_{\alpha}:=V_{\alpha}\setminus X_{\alpha}$.



Observe that in the construction above, every node $\alpha$ of
$(T,\mathcal{X})$ is either a cut node or corresponds to a node
from the SP-tree $(T^B,\mathcal{X}^B)$ of some block $B$
of $G$. We say that a node $\alpha$ which is not a cut node is
{\em inherited from} $(T^B,\mathcal{X}^B)$, where $B$ is the block
to which \(\alpha\) belongs. Let \(\alpha\) be inherited from
\((T^B,\mathcal{X}^B)\). We use \(T^{B}_{\alpha}\) to denote the
SP-tree naturally associated with the subtree of \(T^{B}\) rooted
at \(\alpha\). By $G^B_{\alpha}$ we
denote the SP-graph represented by the SP-tree $T^{B}_{\alpha}$,
where $(T^B,\mathcal{X}^B)$ inherits $\alpha$. The vertex set of
$G^B_{\alpha}$ is denoted $V^B_{\alpha}$.


We observe that for every node $\alpha$, $G_{\alpha}$ is connected and that $\partial_G(V_{\alpha})\subseteq X_{\alpha}$. It is well-known that one can decide whether $tw(G)\leqslant 2$ in linear time~\cite{VTL82}. It is not difficult to see that in linear time we can also construct an extended SP-decomposition of $G$.

\vspace{-0.4cm}
\section{The algorithm}

\vspace{-0.2cm}
Our algorithm for \KC{} uses various techniques from parameterized complexity. First, an \emph{iterative compression}~\cite{ReedSmithVetta2004} step reduces \KC{} to the so-called \KCC{} problem, where in addition to the input graph we are given a solution set to be improved. Then a {\sc Branch-or-reduce} process develops a \emph{bounded search tree}. 
We start with a definition of the compression problem for \KD{}.


\smallskip
\noindent
\textbf{Iterative compression.} Given a subset $S$ of vertices, a \KC{} $W$ of $G$ is {\em $S$-disjoint} if $W\cap S=\emptyset$. We omit the mention of $S$ when it is clear from the context. If \(\vert W\vert\le k-1\), then we say that $W$ is {\em small}.

\vspace{-0.2cm}
\parfuncdefn{\KCC{} problem}{A graph $G$ and a \KC{} $S$ of $G$}{The integer $k=|S|$}{A small $S$-disjoint \KC{} $W$ of \(G\), if one exists. Otherwise return NO.}

\vspace{-0.7cm}
An FPT algorithm for the \KCC{} problem can be used as a subroutine to solve the \KD{} problem. Such a procedure has now become a standard in the context of iterative compression problems~\cite{ChenFLLV08, JPS11, HHLP11}. 


\begin{lemma}[\textbf{$\star$}] \label{lem:compression}
If \KCC{} can be solved in $c^k\cdot n^{O(1)}$ time, then \KD{} can be solved in $(c+1)^k\cdot n^{O(1)}$ time.
\end{lemma}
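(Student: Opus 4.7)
The plan is to carry out a textbook iterative compression argument, using the hypothesised \KCC{} solver as a black box.

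First I would fix an arbitrary ordering $v_1,\dots,v_n$ of $V(G)$ and, for $i=1,\dots,n$, let $G_i := G[\{v_1,\dots,v_i\}]$. I maintain the invariant that after step $i$ I have a \KC{} $S_i$ of $G_i$ with $|S_i|\le k$. Start with $i=k$ and $S_k:=\{v_1,\dots,v_k\}$, which trivially works since deleting all vertices yields the empty graph. Induction step: given $S_{i-1}$ with $|S_{i-1}|\le k$, set $S_i':=S_{i-1}\cup\{v_i\}$, which is a \KC{} of $G_i$ of size at most $k+1$. If $|S_i'|\le k$ I am done; otherwise $|S_i'|=k+1$ and I must decide whether $G_i$ admits a \KC{} of size $\le k$ and, if so, produce one. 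After processing $v_n$ either the maintained $S_n$ is a solution of size at most $k$, or some compression step reported NO, in which case no solution of size $\le k$ exists (because a genuine solution of $G$ would, when restricted, yield solutions for every prefix $G_i$ of size $\le k$).

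The compression subroutine is the standard guess-and-reduce: for every subset $W'\subseteq S_i'$, interpret $W'$ as the guessed intersection of the hypothetical new solution with $S_i'$. For each such guess I put all vertices of $W'$ into the new solution, delete them from $G_i$, and look in $G_i-W'$ for a \KC{} that is disjoint from the remaining set $S:=S_i'\setminus W'$. Since $|S|=k+1-|W'|$, the \KCC{} subroutine on input $(G_i-W',S)$ with parameter $|S|=k+1-|W'|$ returns, if it exists, an $S$-disjoint \KC{} $W$ of size at most $|S|-1 = k-|W'|$; then $W'\cup W$ is a \KC{} of $G_i$ of size at most $k$, and conversely any \KC{} of $G_i$ of size at most $k$ arises this way for some choice of $W'$. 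I return the first solution found, or NO if every $W'$ fails. Correctness of the guess follows because $S_i'$ is itself a \KC{}, so any new solution $Z$ of $G_i$ has the property that $Z\cap S_i'$ and $Z\setminus S_i'$ partition $Z$, with the latter being an $S_i'\setminus(Z\cap S_i')$-disjoint \KC{} of $G_i-(Z\cap S_i')$.

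For the running time, a single compression step costs
\[
\sum_{j=0}^{k+1}\binom{k+1}{j}\cdot c^{\,k+1-j}\cdot n^{O(1)}
= (1+c)^{k+1}\cdot n^{O(1)},
\]
and the outer loop performs at most $n$ compressions, so the total time is $(c+1)^{k+1}\cdot n^{O(1)} = (c+1)^{k}\cdot n^{O(1)}$, as claimed.

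There is no real obstacle here; the only points that warrant a line of justification are (i) that the \KCC{} oracle, which seeks a \emph{strictly smaller} disjoint solution, matches exactly the budget $k-|W'|$ needed so that $W'\cup W$ stays within $k$, and (ii) the standard observation that if compression fails at some step $i$ then no \KC{} of $G$ of size $\le k$ exists, which is what lets the outer loop output NO correctly.
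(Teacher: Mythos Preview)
Your proof is correct and follows essentially the same approach as the paper: the standard iterative-compression loop over vertex prefixes, with the compression step guessing the intersection $W'\subseteq S_i'$ of the new solution with the old one and invoking the \KCC{} oracle on $(G_i-W',\,S_i'\setminus W')$. The running-time computation via the binomial identity is also identical to the paper's.
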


\vspace{-0.2cm}
Observe that both $G[V\setminus S]$ and $G[S]$ is $K_4$-minor-free. Indeed if $G[S]$ is not $K_4$-minor-free, then the answer to \KCC{} is trivially NO.

\smallskip
\noindent
\textbf{Protrusion rule.}
A subset $X$ of the vertex set of a graph $G$ is a \emph{$t$-protrusion} of $G$ if $tw(G[X])\leqslant t$ and $|\partial(X)|\leqslant t$. Our algorithm deeply relies on protrusion reduction technique, which made a huge success lately in discovering meta theorems for kernelization~\cite{FLST10,BodlaenderFLPST09}. However, we need to adapt the notions developed for protrusion technique so that we can apply the technique to our ``disjoint'' problem, which arises in the iterative compression-based algorithm.  In essence, our (adapted) protrusion lemma for disjoint parameterized problems says that a 'large' protrusion which is disjoint from the forbidden set $S$ can be replaced by a 'small' protrusion which is again disjoint from $S$. Due to its generality, this result may be of independent interest.

\begin{reduction}[\textbf{$\star$}] \textup{\textbf{(Generic disjoint protrusion rule)}}
\label{rrule:protrusion}
Let $(G,S,k)$ be an instance of \KCC{} and $X$ be a $t$-protrusion such that $X\cap S=\emptyset$. Then there exists a computable function $\gamma(.)$ and an algorithm which computes an equivalent instance in time $O(|X|)$ such that $G[S]$ and $G'[S]$ are isomorphic, $G'-S$ is $K_4$-minor-free, $|V(G')|<|V(G)|$ and $k'\leqslant k$, provided $|X|>\gamma(2t+1)$. 
\end{reduction}

We remark that some of the reduction rules we shall present in the next subsection are instantiations the generic disjoint protrusion rule. However, to ease the algorithm analysis, the generic rule above is used only on $t$-protrusion whose boundary size is 3 or 4. For protrusions with boundary size 1 or 2, we shall instead apply the following explicit reduction rules.

\vspace{-0.4cm}
\subsection{(Explicit) Reduction rules}\label{sec:reduction}

\vspace{-0.2cm}
We say that a reduction rule is \emph{safe} if, given an instance $(G,S,k)$, the rule returns an equivalent instance $(G',S',k')$; that is,  $(G,S,k)$ is a \YES{}-instance if and only if $(G',S',k')$ is. Let $F$ denote the subset $V(G)\setminus S$ of vertices. 
For a vertex $v\in F$, let $N_S(v)$ denote the neighbors of $v$ which belong to $S$. By $N_i\subseteq F$ we refer to the set of vertices $v$ in $F$ with $|N_S(v)|=i$.

The next three rules are simple rule that can be applied in polynomial time. In each of them, $S$ and $k$ are unchanged ($S'=S$, $k'=k$). Observe that reduction rule~\ref{rrule:deg1S} (b) can be seen as a disjoint $1$-protrusion rule.

\begin{reduction}[\textbf{$\star$}]\label{rrule:deg1S}\textup{\textbf{(1-boundary rule)}}
Let $X$ be a subset of $F$. \emph{\textbf{(a)}} If $G[X]$ is a connected component of $G$ or of
$G\setminus e$ for some cut edge $e$, then delete $X$. \emph{\textbf{(b)}} If $|\partial_G(X)|=1$, then delete $X\setminus \partial_G(X)$.
\end{reduction}

\begin{reduction}[\textbf{$\star$}]\label{rrule:deg2} \textup{\textbf{(Bypassing rule)}}
Bypass every vertex $v$ of degree two in $G$ with neighbors $u_1\in V$, $u_2\in F$.
That is, delete $v$ and its incident edges, and add the new edge $(u_1,u_2)$.
\end{reduction}

\begin{reduction}[\textbf{$\star$}]\label{rrule:para}\textup{\textbf{(Parallel rule)}}
If there is more than one edge between $u\in V$ and $v\in F$, then delete all these edges except for one.
\end{reduction}


\vspace{-0.2cm}
The next two reduction rules are somewhat more technical, and their proofs of correctness require a careful analysis of the
structure of the $K_4$-subdivisions in a graph.

\begin{reduction}[\textbf{$\star$}]\label{rrule:consec}\textup{\textbf{(Chandelier rule)}}
  Let $X=\{u_1,\ldots , u_{\ell}\}$ be a subset of $F$, and let
  \(x\) be a vertex in \(S\) such that $G[X]$ contains the path
  $u_1,\ldots , u_{\ell}$, $N_S(u_i)=\{x\}$ for every $i=1,\ldots
  ,\ell$, and vertices $u_{2},\ldots,u_{\ell-1}$ have degree
  exactly $3$ in $G$. If $\ell\geq 4$, contract the edge
  $e=(u_2,u_3)$ (and apply Rule \ref{rrule:para} to remove the
  parallel edges created).
\end{reduction}

The intuition behind the correctness of Chandelier rule~\ref{rrule:consec} is that such a set $X$ cannot host all four branching nodes of a $K_4$-subdivision. Our last reduction rule is an explicit $2$-protrusion rule. In the particular case when the boundary size is exactly two, the candidate protrusions for replacement are either a single edge or a $\theta_3$ (see Figure~\ref{fig:rule-2cut}). 

\begin{reduction}[\textbf{$\star$}]\label{rrule:2cut} \textup{\textbf{(2-boundary rule)}}
  Let $X\subseteq F$ be such that \(G[X]\) is connected,
  $\partial(X)=\{s,t\}$ (and thus, $X\setminus \{s,t\} \subseteq
  N_0$).
  Then we do the following. (1) Delete
  $X\setminus\{s,t\}$. (2) If $G[X]+(s,t)$ is a series parallel graph
  and $|X|>2$, then add the edge \((s,t)\) (if it is not
  present). Else if $G[X]+(s,t)$ is not a series parallel graph
  and $|X|>4$, add two new vertices \(a,b\) and the edges
  $\{(a,b),(a,t),(a,s),(b,t),(b,s)\}$ (see
  Figure~\ref{fig:rule-2cut}).
\end{reduction}

\begin{figure}[htbh]
\begin{center}
\includegraphics[scale=0.7]{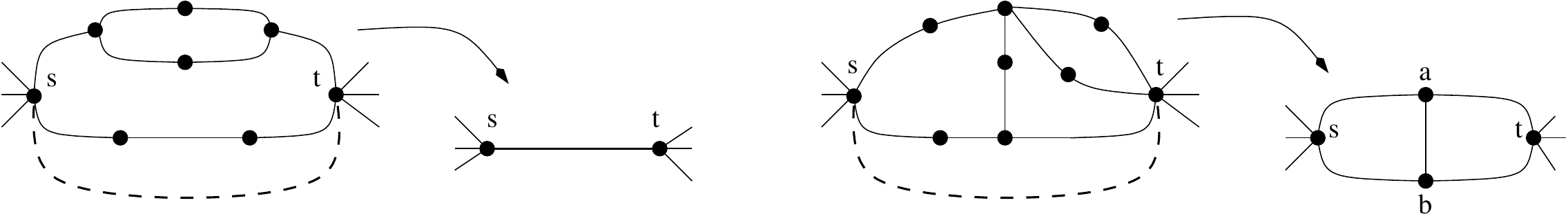}
\end{center}
\vspace{-0.5cm}
\caption{If $G[X]+(s,t)$ is an SP-graph, we can safely replace $G[X]$ by the edge $(s,t)$. Otherwise $G[X]$ can be replaced by a subdivision of $\theta_3$ with poles $a$ and $b$ in which $s$ and $t$ are subdividing nodes.}
\label{fig:rule-2cut}
\end{figure}

An instance of \KCC{} is \emph{reduced} if none of the Reduction rules~\ref{rrule:deg1S} -~\ref{rrule:2cut} applies.

\vspace{-0.4cm}
\subsection{Branching rules}\label{sec:branching}

\vspace{-0.2cm}
A \emph{branching rule} is an algorithm which, given an instance $(G,S,k)$, outputs a set of $d$ instances $(G_1,S_1,k_1)\dots (G_d,S_d,k_d)$ for some constant $d>1$ ($d$ is the branching degree). A branching rule is \emph{safe} if $(G,S,k)$ is a \YES{}-instance if and only if there exists $i$, $1\leqslant i\leqslant d$ such that $(G_i,S_i,k_i)$ is a \YES{} instance. We now present three generic branching rules, with potentially unbounded branching degrees. Later we describe how to apply these rules so as to bound the branching degree by a constant. Given a vertex $s\in S$, we denote by $cc_S(s)$ the connected component of $G[S]$ which contains $s$. Likewise, $bc_S(s)$ denotes the biconnected component of $G[S]$ containing $s$. It is easy to see that three branching rules below are safe.

\begin{branching} \label{rule:br1}
Let $(G,S,k)$ be an instance of \KCC{} and let $X$ be a subset of $F$ such that 
$G[S\cup X]$ contains a $K_4$-subdivision. Then branch into the instances $(G-\{x\},S,k-1)$ for every $x\in X$.
\end{branching}

\begin{branching} \label{rule:br2}
Let $(G,S,k)$ be an instance of \KCC{} and let $X$ be be a connected subset of $F$. If $S$ contains two vertices $s_1$ and $s_2$ each having a neighbor in $X$ and such that $cc_S(s_1)\neq cc_S(s_2)$, then branch into the instances
\begin{itemize}
\vspace{-0.3cm}
\item $(G-\{x\},S,k-1)$ for every $x\in X$
\vspace{-0.2cm}
\item $(G,S\cup X, k)$
\end{itemize}
\end{branching}


\begin{branching} \label{rule:br3} 
Let $(G,S,k)$ be an instance of
  \KCC{} and let $X$ be a connected subset of $F$. If $S$ contains
  two vertices $s_1$ and $s_2$ each having a neighbor in $X$ such
  that $cc_S(s_1)=cc_S(s_2)$ and $bc_S(s_1)\neq bc_S(s_2)$, then
  branch into the instances
\begin{itemize}
\vspace{-0.3cm}
\item $(G-\{x\},S,k-1)$ for every $x\in X$
\vspace{-0.2cm}
\item $(G,S\cup X, k)$
\end{itemize}
\end{branching}

We shall apply branching rule \ref{rule:br1} under three different situations: (i) $X$ is a singleton $\{x\}$ for every $x\in F$, (ii) $X$ is connected, and (iii) $X$ consists of a pair of non-adjacent vertices of $F$. Let us discuss these three settings in further details. An instance $(G,S,k)$ is said to be a \emph{simplified instance} if it is a reduced instance and if none of the branching rules~\ref{rule:br1}~-~\ref{rule:br3} applies on singleton sets $X=\{v\}$, for any $v\in F$.
A simplified instance, in which branching rule \ref{rule:br1} cannot be applied under (i), has a useful property.

\begin{lemma}[\textbf{$\star$}] \label{cor:simplified-deg2}
If $(G,S,k)$ is a simplified instance of \KCC{}, then $F=N_0\cup N_1\cup N_2$.
\end{lemma}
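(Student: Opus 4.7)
My plan is to derive a contradiction from the assumption that some $v\in F$ has at least three neighbours in $S$. Pick three distinct such neighbours $s_1,s_2,s_3\in N_S(v)$. The trivially connected singleton $X=\{v\}$ does not trigger Branching Rule~\ref{rule:br2} or~\ref{rule:br3}, so every pair $s_i,s_j$ satisfies $cc_S(s_i)=cc_S(s_j)$ and $bc_S(s_i)=bc_S(s_j)$; hence $s_1,s_2,s_3$ all lie in a common block $B$ of $G[S]$. Since $G[S]$ is $K_4$-minor-free, $B$ is series-parallel, and since $B$ contains three distinct vertices it must be properly $2$-connected (rather than a single edge). The goal thus reduces to exhibiting a $K_4$-subdivision inside $G[B\cup\{v\}]$, which would force Branching Rule~\ref{rule:br1} to apply on $X=\{v\}$ and deliver the contradiction.

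To locate such a $K_4$-subdivision I will use two applications of Menger's theorem inside $B$. First, two internally vertex-disjoint $s_1$--$s_2$ paths $P_1,P_2$ assemble into a cycle $C$. If $s_3\in C$ then $s_1,s_2,s_3$ split $C$ into three pairwise internally disjoint arcs which, together with the three edges $vs_1,vs_2,vs_3$, form a $K_4$-subdivision with branching nodes $v,s_1,s_2,s_3$. If $s_3\notin C$, I augment $B$ by a new vertex $x$ adjacent to $s_1$ and $s_2$ (preserving 2-connectivity) and apply Menger once more to obtain two internally disjoint paths from $s_3$ to $x$. Stripping $x$ yields paths $Q_1,Q_2$ in $B$ from $s_3$ to $s_1$ and $s_2$ respectively, internally disjoint and each avoiding the opposite endpoint. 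Truncating each $Q_i$ at its first $C$-vertex $t_i$ produces sub-paths $Q_1',Q_2'$ whose interiors lie entirely outside $C$; note $t_1\ne t_2$ since $Q_1,Q_2$ only share $s_3\notin C$.

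The remaining case-split is on whether $\{t_1,t_2\}$ interleaves $\{s_1,s_2\}$ around $C$. In the non-interleaving case some arc $A$ of $C$ between $s_1$ and $s_2$ avoids both $t_1$ and $t_2$ in its interior, and I build a $K_4$-subdivision with branching nodes $\{v,s_1,s_2,s_3\}$ using the three edges $vs_i$, the arc $A$ as the $s_1$--$s_2$ subdivided edge, and the concatenation of the $C$-arc from $s_i$ to $t_i$ with $Q_i'$ reversed as the $s_i$--$s_3$ subdivided edge for $i=1,2$. In the interleaving case, where the cyclic order along $C$ is $s_1,t_1,s_2,t_2$, I use branching nodes $\{s_1,s_2,t_1,t_2\}$: the four quarter-arcs of $C$ serve as four of the subdivided edges, the length-two path $s_1,v,s_2$ (through the edges $vs_1$ and $vs_2$) gives the $s_1$--$s_2$ subdivided edge, and the concatenation $Q_1'\cup Q_2'$ (which traverses $s_3$) furnishes the $t_1$--$t_2$ subdivided edge. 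The degenerate positions where some $t_i$ coincides with $s_1$ or $s_2$ fall under the non-interleaving case.

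The main obstacle is verifying that, in each of these sub-cases, the six prescribed paths are pairwise internally vertex-disjoint. This check rests on three ingredients: the internal disjointness of $P_1,P_2$ and of $Q_1,Q_2$ furnished by Menger; the choice of $t_i$ as the first $C$-intersection of $Q_i$, which guarantees that the interiors of $Q_1'$ and $Q_2'$ avoid $C$ and that $\{t_1,t_2\}\cap\{s_3\}=\emptyset$; and the interleaving/non-interleaving dichotomy, which dictates a legal allocation of the arcs of $C$ among the subdivided edges. Once these disjointnesses are in place the $K_4$-subdivision sits inside $G[S\cup\{v\}]$, as every edge used either lies in $B\subseteq G[S]$ or is incident to $v$, completing the contradiction.
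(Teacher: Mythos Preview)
Your argument is correct and follows essentially the same route as the paper. Both proofs use the non-applicability of Branching rules~\ref{rule:br2} and~\ref{rule:br3} on the singleton $\{v\}$ to place $N_S(v)$ inside a single biconnected block $B$ of $G[S]$, and then exhibit a $K_4$-subdivision in $G[B\cup\{v\}]$ to contradict the non-applicability of Branching rule~\ref{rule:br1}. The paper factors out this last step as Lemma~\ref{lem:K4-3+1} (a $2$-connected graph together with an external vertex having three neighbours in it always contains a $K_4$-subdivision), which it reuses several times later; you instead reprove that lemma inline via an explicit Menger-based construction. Your construction is in fact more detailed than the paper's own proof of Lemma~\ref{lem:K4-3+1}.

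One small wrinkle: in your non-interleaving case, the prescription ``the $C$-arc from $s_i$ to $t_i$'' yields internally disjoint $s_1$--$s_3$ and $s_2$--$s_3$ paths only when the order along the arc $A'$ is $s_1,t_1,t_2,s_2$. If the order happens to be $s_1,t_2,t_1,s_2$, the two $A'$-segments overlap. This is harmless---just relabel $(Q_1',t_1)\leftrightarrow(Q_2',t_2)$ first, which you are free to do since after truncation only the landing points on $C$ matter, not which of $s_1,s_2$ the original $Q_i$ was headed towards---but the text as written does not cover it.
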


An instance $(G,S,k)$ of \KCC{} is \emph{independent} if (a) $F$ is an independent set; (b) every vertex of $F$ belongs to $N_2$; (c) the two neighbors of every vertex of $F$ belong to the same biconnected component of $G[S]$ and (d) $G[S\cup \{x\}]$ is $K_4$-minor-free for every $x\in F$. In essence, next lemma shows that the instance is independent once branching rule \ref{rule:br1} has been exhaustively applied under (ii).

\begin{theorem}[\textbf{$\star$}] \label{th:independent}
Let $(G,S,k)$ be an instance of \KCC{}. If none of the reduction rules applies nor branching rules on connected subsets $X\subseteq F$ applies, then $(G,S,k)$ is an independent instance.
\end{theorem}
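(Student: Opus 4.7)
My strategy is to establish the four defining conditions (a)--(d) of an independent instance, exploiting the fact that the hypothesis is strictly stronger than ``simplified'': it forbids branching on \emph{any} connected subset $X\subseteq F$, not just singletons.

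Conditions (c) and (d) fall out almost directly from the singleton case. For (d), any $K_4$-minor in $G[S\cup\{v\}]$ is a $K_4$-subdivision (since $K_4$ has maximum degree three), which together with $v\in F$ would trigger branching rule~\ref{rule:br1} on $X=\{v\}$. For (c), Lemma~\ref{cor:simplified-deg2} gives $|N_S(v)|\le 2$, and branching rules~\ref{rule:br2},~\ref{rule:br3} on $\{v\}$ prevent the two $S$-neighbors of $v$ from lying in distinct biconnected components of $G[S]$.

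Conditions (a) and (b) I would prove jointly by examining the connected components of $G[F]$. A singleton component $C=\{v\}$ immediately rules out $v\in N_0$ (isolated vertex deleted by~\ref{rrule:deg1S}(a)) and $v\in N_1$ (pendant whose unique incident edge is a cut edge, again deleted by~\ref{rrule:deg1S}(a)), so $v\in N_2$, giving (b). For (a), I argue by contradiction from $|C|\ge 2$. Three preliminary deductions narrow the picture: the Bypassing rule~\ref{rrule:deg2} and the degree-$1$ case of~\ref{rrule:deg1S}(a) (applied with the pendant's unique edge as a cut edge) force every $v\in C$ to have degree $\ge 3$ in $G$; the 1-boundary rule~\ref{rrule:deg1S}(b) forces $|\partial_G(C)|\ge 2$; and non-applicability of branching rules~\ref{rule:br2},~\ref{rule:br3} on the connected set $X=C$ yields $N_S(C)\subseteq B$ for a single biconnected component $B$ of $G[S]$.

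The contradiction is then obtained by case analysis on $|N_S(C)|$. When $|N_S(C)|=1$, the unique attachment $x\in S$ combined with a sub-path of $C$ whose internal vertices have degree exactly three and only $x$ as an $S$-neighbor triggers the Chandelier rule~\ref{rrule:consec}; any remaining large residual piece of $C$ forms a bounded-boundary protrusion falling under the generic disjoint protrusion rule~\ref{rrule:protrusion}. When $|N_S(C)|=2$, the set $C$ is a 2-separated connected piece of $F$ handled directly by the 2-boundary rule~\ref{rrule:2cut}. When $|N_S(C)|\ge 3$, I would exploit the three distinct attachments in the common biconnected component $B$ together with the series-parallel structure of $B$ (via its SP-tree) and the connectedness of $C$ to either construct a $K_4$-subdivision inside $G[S\cup C]$---triggering branching rule~\ref{rule:br1} on $X=C$---or extract a $t$-protrusion of size exceeding $\gamma(2t+1)$ amenable to~\ref{rrule:protrusion}. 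The main obstacle I expect is this last sub-case: producing the required internally disjoint paths for the $K_4$-subdivision (noting that in a general biconnected SP-graph three specified vertices need not lie on a common cycle, e.g.\ $K_{2,3}$) will need a careful argument via the extended SP-decomposition, and the book-keeping to ensure the protrusion-rule thresholds are met will rely on the degree-$\ge 3$ condition to guarantee that $C$ cannot shrink below the required constant.
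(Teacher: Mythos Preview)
Your plan has a genuine gap stemming from a confusion between two different ``size-2'' conditions, and it also miscalibrates where the real difficulty lies.

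\medskip
\textbf{The concrete error.} In your case $|N_S(C)|=2$ you invoke the 2-boundary rule (Reduction rule~\ref{rrule:2cut}) on $C$. But that rule requires $\partial_G(X)=\{s,t\}$, i.e.\ exactly two vertices \emph{of $X$} have neighbours outside $X$; equivalently $X\setminus\{s,t\}\subseteq N_0$. When $X=C$ is a full component of $G[F]$, $\partial_G(C)$ is the set of vertices of $C$ with an $S$-neighbour, which may be arbitrarily large even if $N_S(C)$ (the set of $S$-vertices adjacent to $C$) has size~2. So the rule does not fire and this case is not handled. The same confusion colours your $|N_S(C)|=1$ case: the Chandelier rule needs a path in $F$ whose internal vertices have degree exactly~3 and $N_S(u_i)=\{x\}$; nothing in your setup forces such a path to exist, and invoking the generic protrusion rule is both unjustified (the hypothesis of the theorem concerns only the explicit rules~\ref{rrule:deg1S}--\ref{rrule:2cut}; Rule~\ref{rrule:protrusion} is used only inside \textsc{Branch-or-reduce}) and unverified (you never check the size threshold $|X|>\gamma(2t+1)$).

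\medskip
\textbf{The miscalibration, and how the paper proceeds.} The case you flag as the hard one, $|N_S(C)|\ge 3$, is in fact the trivial one: since Branching rules~\ref{rule:br2} and~\ref{rule:br3} do not apply on $C$, all of $N_S(C)$ lies in one biconnected component $B$ of $G[S]$, and Lemma~\ref{lem:K4-3+1} (with $W=B$, $Z=C$) immediately yields a $K_4$-subdivision in $G[S\cup C]$, triggering Branching rule~\ref{rule:br1} on $C$. This is exactly why the paper begins by deducing $|N_S(C)|\le 2$ in one line. The actual work is then a split on whether $G[C]$ is a tree. If it is, every leaf must lie in $N_2$ (else Rules~\ref{rrule:deg1S} or~\ref{rrule:deg2} apply), so two leaves $u,v$ share the same pair of $S$-neighbours $\{x,y\}$ and $\{u,v,x,y\}$ are branching nodes of a $K_4$-subdivision. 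If $G[C]$ is not a tree, pick a nontrivial block $B$ of $G[C]$ and analyse $|\partial_G(B)|$: if it is~2, Rule~\ref{rrule:2cut} forces $B$ to be a $\theta_3$ with subdividing poles $s,t$, and a $s$--$t$ path through $S$ completes a $K_4$; if it is $\ge 3$, three internally disjoint paths from $B$ to $S$ (guaranteed because each boundary vertex reaches $N_1\cup N_2$ outside $B$) together with Lemma~\ref{lem:K4-3+1} give a $K_4$. This block-based decomposition is what you are missing; your $|N_S(C)|$-based split does not line up with the reduction rules' preconditions.
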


\vspace{-0.2cm}
Next lemma shows that in an independent instance, it is enough to cover the $K_4$-subdivisions containing exactly two vertices of $F$. To see this, we construct an auxiliary graph $G^*(S)$ as follows: its vertex set is $F$; $(u,v)$ is an edge in $G^*(S)$ if and only if $G[S\cup\{u,v\}]$ contains $K_4$ as a minor. Then the following theorem holds, which essentially states that we obtain a solution for \KCC{} by applying branching rule \ref{rule:br1} exhaustively under (iii).


\begin{theorem}[\textbf{$\star$}] \label{lem:transVC}
Let $(G,S,k)$ be an independent instance of \KCC{}. Then $W\subseteq F$ is a disjoint \KC{} of $G$ if and only if it is a vertex cover of $G^*(S)$.
\end{theorem}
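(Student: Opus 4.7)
The plan is to prove the two directions separately; the easier implication proceeds by a direct subgraph argument and the harder one via analyzing a $K_4$-subdivision and reducing it to a pair of $F$-vertices. For the forward direction, I will assume $W$ is a disjoint \KC{} of $G$ and show that any edge $(u,v)$ of $G^*(S)$ must have at least one endpoint in $W$. If not, then by definition of $G^*(S)$, $G[S\cup\{u,v\}]$ contains a $K_4$-minor. Since $W\cap S=\emptyset$ and $u,v\notin W$, we have $G[S\cup\{u,v\}]\subseteq G-W$, contradicting that $G-W$ is $K_4$-minor-free.

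For the reverse direction, suppose $W$ is a vertex cover of $G^*(S)$ and, for contradiction, that $G-W$ contains a $K_4$-minor. Because $K_4$ has maximum degree three, $G-W$ contains a $K_4$-subdivision $H$. First I will argue that every branching node of $H$ lies in $S$: by condition~(a) every vertex of $F$ is isolated in $G[F]$, and by condition~(b) each such vertex has exactly two neighbors in $S$, so every $u\in F\setminus W$ has degree exactly two in $G-W$ and cannot serve as a degree-three node of $H$. Let $U:=F\cap V(H)$; each $u\in U$ sits inside some path of $H$ with $S$-neighbors $x_u,y_u$, so $H$ uses precisely the two incident edges $(x_u,u)$ and $(u,y_u)$. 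Contracting $(u,x_u)$ for every $u\in U$ turns $H$ into a $K_4$-minor of the multigraph $G_+:=G[S]\cup\{e_u:u\in U\}$, where $e_u:=(x_u,y_u)$.

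The argument now reduces to a structural lemma: if $B$ is a $K_4$-minor-free graph and $E'=\{e'_1,\dots,e'_m\}$ is a set of extra edges, each with its two endpoints in a common biconnected component of $B$ and with $B+e'_i$ still $K_4$-minor-free, then $B+E'$ has a $K_4$-minor if and only if some pair $B+e'_i+e'_j$ does. Conditions~(c) and~(d) of independence are tailored so this lemma applies with $B=G[S]$ and $E'=\{e_u:u\in U\}$: (c) ensures each $e_u$'s endpoints lie in a single biconnected component, and (d) supplies the single-edge hypothesis, since $G[S\cup\{u\}]$ is obtained from $G[S]+e_u$ by subdividing $e_u$ with $u$ and the two graphs share their minors. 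Applying the lemma to $G_+$ produces $u,v\in U$ with $G[S]+e_u+e_v$ containing a $K_4$-minor; reversing the subdivision correspondence yields a $K_4$-minor in $G[S\cup\{u,v\}]$. Hence $(u,v)$ is an edge of $G^*(S)$, but $u,v\in U\subseteq F\setminus W$, contradicting that $W$ covers $G^*(S)$.

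The main obstacle is the structural lemma. Since a $K_4$-minor is $3$-connected, one may confine attention to a single biconnected component of $G[S]$, which is a series-parallel graph with a known SP-tree structure. The intuition, visible already in the cycle-with-chords case where a $K_4$-minor appears exactly when two chords cross, is that pairwise $K_4$-minor-freeness forces the added edges to sit compatibly with the SP-decomposition, and a clean compatibility argument then shows their union remains $K_4$-minor-free. Formalizing this will most likely proceed by induction on the SP-tree of the affected biconnected component, splitting on whether the relevant node is a series or parallel node and tracking where each edge's endpoints fall relative to that node's terminals.
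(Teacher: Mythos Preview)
Your forward direction is identical to the paper's. For the reverse direction you argue by contradiction and reduce to a clean structural lemma about adding edges to a $K_4$-minor-free graph, whereas the paper proceeds constructively: assuming $W$ is a vertex cover of $G^*(S)$, it fixes a minimal SP-tree of (a biconnected component of) $G[S]$ and shows that the length-two paths $x_u u y_u$ for $u\in F\setminus W$ can be grafted onto this tree one by one. The key step there is exactly the interval picture you allude to: each $u\in F_1$ is assigned a unique S-node $\alpha(u)$, and the pair $\{x_u,y_u\}$ determines an interval on the children of $\alpha(u)$; two such intervals at the same S-node can overlap only if the corresponding pair already forces a $K_4$, hence an edge of $G^*(S)$, which $W$ covers. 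Non-crossing intervals are then laminar, so the grafting succeeds and $G-W$ is series-parallel.

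Your structural lemma is true and is essentially the contrapositive of what the paper establishes; the SP-tree induction you sketch is precisely the mechanism the paper uses (via Claim~\ref{cl:uniqueS} and the interval argument). The trade-off: the paper's constructive route directly yields an SP-tree of $G-W$ and avoids stating a separate lemma, while your formulation isolates a reusable statement (pairwise $K_4$-freeness of edge additions to an SP-graph implies global $K_4$-freeness) at the cost of leaving its proof as a sketch. Either way, the substantive work---locating each added edge at a node of the SP-tree and arguing that non-crossing placements can be realised simultaneously---is the same, so your plan would go through once that lemma is written out.
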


Observe that we do not need to build $G^*(S)$ to solve the \KCC{} problem on an independent instance\footnote{A more careful analysis shows that $G^*(S)$ is a circle graph. As {\sc Vertex Cover} is polynomial time solvable on circle graphs, so is \KCC{} problem on an independent instance.}. Indeed, for every pair of vertices $u,v\in F$, it is enough to test whether $G[S\cup\{u,v\}]$ contains $K_4$ as a minor (this can be done in linear time~\cite{VTL82}) and if so we apply branching rule~\ref{rule:br1} on the set $X=\{u,v\}$.

\subsection{Algorithm and complexity analysis }\label{sec:analysis}

\vspace{-0.2cm}
Let us present the whole search tree algorithm. At each node of the computation tree associated with a given instance $(G,S,k)$, one of the followings operations is performed. As each operation either returns a solution (as in (a),(e)) or generates a set of instances (as in (b)-(d)), the overall application of the operations can be depicted as a search tree.

\begin{enumerate}
\vspace{-0.3cm}
\item[(a)] if ($k<0$) or ($k\leq 0$, $tw(G)> 2$) or ($tw(G[S]) > 2$), then return no; 
\vspace{-0.3cm}
\item[(b)] if the instance is not reduced, apply one of Reduction rules \ref{rrule:deg1S}--\ref{rrule:2cut} (note that we apply Reduction rules \ref{rrule:deg1S}--\ref{rrule:consec} first whenever
possible, and Reduction rule \ref{rrule:2cut} is applied when none of the rules \ref{rrule:deg1S}--\ref{rrule:consec} can be applied); 
\vspace{-0.3cm}
\item[(c)] if the instance is not simplified, apply one of Branching rules \ref{rule:br1}--\ref{rule:br3} on the singleton
sets $\{x\}$ for each $x\in F$; 
\vspace{-0.3cm}
\item[(d)] if the instance is simplified, apply the procedure {\sc Branch-or-reduce}; 
\vspace{-0.3cm}
\item[(e)] if the application of {\sc Branch-or-reduce} marks every node of $(T,\mc{X})$, the instance is an independent instance; solve it in $2^k\cdot n^{O(1)}$ using branching rule~\ref{rule:br1} on pairs of vertices of $F$.
\end{enumerate}

\vspace{-0.2cm}
We now describe the procedure {\sc Branch-or-reduce}
as a systematic way of applying the branching and reduction
rules. It works in a bottom-up manner on an extended
SP-decomposition $(T,\mathcal{X})$ of $G[F]$. Initially the nodes
of $(T,\mathcal{X})$ are unmarked. Starting from a lowest node,
{\sc Branch-or-reduce} recursively tests if we can apply one of
the branching rules on a subgraph associated with a lowest unmarked node. If the
branching rules do not apply, it may be due to a large
protrusion. In that case, we detect the protrusion (see Lemma
\ref{lem:2attach}) and reduce the instance using the protrusion
 rule \ref{rrule:protrusion}. Once either a branching
rule or the protrusion rule has been applied, the procedure {\sc
  Branch-or-reduce} terminates. The output is a set of instances
of \KCC{}, possibly a singleton. 

\begin{algorithm}[h]
  \KwIn{A simplified instance $(G,S,k)$ of \KCC{}, together with
    an extended SP-decomposition $(T,\mathcal{X})$ of
    $G[F]$.}
  \KwOut{A set of instances of \KCC{}.}
  \BlankLine

\While{\(T\) contains unmarked nodes}{
  \lnl{l:alpha}
  Let \(\alpha\) be an unmarked node at the farthest  distance from the root of \(T\)\;
	\lnl{l:br2-test}
	\If{$S$ contains two vertices $x_u\in N_S(u)$ and $x_v\in N_S(v)$ with $u,v\in V_{\alpha}$ and $cc_S(x_u)\neq cc_S(x_v)$}{
		\lnl{l:Xbr2}
		Let $X$ be a path in $G_{\alpha}$ between two such vertices $u$ and $v$ such that $X\setminus\{u,v\}\subseteq N_0$\;
		\lnl{l:br2}		
		Apply Branching rule~\ref{rule:br2} to \(X\); terminate\;
		}
	\lnl{l:br3-test}
	\If{$S$ contains two vertices $x_u\in N_S(u)$ and $x_v\in N_S(v)$ with $u,v\in V_{\alpha}$ and $bc_S(x_u)\neq bc_S(x_v)$}{
		\lnl{l:Xbr3}	
		Let $X$ be a path in $G_{\alpha}$ between two such vertices $u$ and $v$ such that $X\setminus\{u,v\}\subseteq N_0$\;
		\lnl{l:br3}
		Apply Branching rule~\ref{rule:br3} to \(X\); terminate\;
		}
	\lnl{l:br1-test}
	\If{$G[S\cup V_{\alpha}]$ contains a $K_4$-subdivision}{
		\lnl{l:Xbr1}
		Let $X \subseteq V_{\alpha}$ be a connected set such that $G[S+X]$ contains a $K_4$-subdivision\;
		\lnl{l:br1}
		Apply Branching rule~\ref{rule:br1} to \(X\); terminate\;
		}
	\lnl{l:protrusion-test}
	\If{$\alpha$ is a P-node and $|V^B_{\alpha}|\geqslant \gamma(9)$}{
		\lnl{l:Xprot}
		$X = V^B_{\alpha}$ is a 4-protrusion (see Lemma~\ref{lem:2attach})\;
		\lnl{l:prot}
		Apply the protrusion Reduction rule~\ref{rrule:protrusion} with $X$; terminate\;
		}
    \lnl{l:mark}	
       	Mark the node $\alpha$\;
}
\caption{{\sc Branch-or-reduce}} \label{K4algo}
\end{algorithm}

The complexity analysis relies on a series of technical lemmas such as Lemma \ref{lem:2attach}. We say that a path $P$ avoids a set $X$ if no internal vertex of $P$ belongs to $X$. To simplify the notation, we use $G_{\alpha}$ instead of $G[F]_{\alpha}$ for a node $\alpha$ of \(T\). Similarly, we use the names $V_{\alpha}$, $Y_{\alpha}=V_{\alpha}\setminus X_{\alpha}$ and $V^B_{\alpha}$ to denote the various named subsets of \(V(G[F]_{\alpha})\).
%
\begin{lemma}[\textbf{$\star$}]\label{lem:K4-3+1} Let $W$ and $Z$ be disjoint
  vertex subsets of a graph $G$ such that $G[W]$
  is biconnected, $G[Z]$ is connected and $|N_W(Z)|\geq 3$. Then
  $G[W\cup Z]$ contains a $K_4$-subdivision.
\end{lemma}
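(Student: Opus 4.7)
The plan is to exhibit $K_4$ as a minor of $G[W\cup Z]$ and invoke the preliminary fact recalled in the excerpt that, because $K_4$ has maximum degree three, a graph contains $K_4$ as a minor iff it contains a $K_4$-subdivision. I will realize the four branch sets as $V_4:=Z$, which is connected by hypothesis, plus three sets $V_1,V_2,V_3\subseteq W$ each containing one of three distinct vertices $w_1,w_2,w_3\in N_W(Z)$ (such a choice exists since $|N_W(Z)|\geq 3$). Since every $w_i$ has a neighbor in $Z$, any $V_i\ni w_i$ is automatically adjacent to $V_4$, so the problem reduces to constructing inside the biconnected graph $G[W]$ three pairwise-disjoint, pairwise-adjacent connected subsets $V_1,V_2,V_3$ with $w_i\in V_i$.

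By Menger's theorem in the 2-connected graph $G[W]$, there exist two internally-disjoint $w_1$--$w_2$ paths whose union is a cycle $C$ through $w_1$ and $w_2$. If $w_3\in V(C)$, then $C$ visits all three $w_i$ and I split $C$ at $w_1,w_2,w_3$ into three arcs, letting $V_i$ be the arc that has $w_i$ as one endpoint, with the other endpoint excluded; these are disjoint connected subpaths of $C$, pairwise adjacent through the cycle edges incident to $w_1,w_2,w_3$.

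If instead $w_3\notin V(C)$, I apply the fan version of Menger (available since $G[W]$ is 2-connected and $|V(C)|\geq 2$) to obtain two internally-disjoint paths $Q_1,Q_2$ from $w_3$ to $V(C)$ whose internal vertices avoid $C$ and whose endpoints $u_1,u_2\in V(C)$ are distinct. When $\{u_1,u_2\}\cap\{w_1,w_2\}\neq\emptyset$, say $u_1=w_1$, the concatenation of $Q_1$, $Q_2$, and the arc of $C$ from $u_2$ to $w_1$ that traverses $w_2$ is a cycle through all three of $w_1,w_2,w_3$, reducing to the previous case. Otherwise $u_1,u_2$ partition $C$ into two arcs $D_1,D_2$, and a final dichotomy applies: if $w_1,w_2$ lie on different arcs I set $V_1=D_1\setminus\{u_2\}$, $V_2=D_2\setminus\{u_1\}$, and $V_3=\{w_3\}\cup\mathrm{int}(Q_1)\cup\mathrm{int}(Q_2)$; if both lie on $D_1$, I split $D_1$ between $w_1$ and $w_2$ (omitting $u_1,u_2$) into $V_1$ and $V_2$, and take $V_3=\{w_3\}\cup V(Q_1)\cup V(Q_2)\cup V(D_2)$, which absorbs $u_1,u_2$ and uses the bypass arc $D_2$ to reach both $V_1$ and $V_2$.

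The main obstacle I expect is this last sub-subcase, where $w_3$'s fan hits $C$ only on the same side of $\{w_1,w_2\}$: there $V_3$ must wrap around via the unused arc $D_2$ to simultaneously touch both $V_1$ and $V_2$, and disjointness relies on $V(D_1)\cap V(D_2)=\{u_1,u_2\}$ together with $\mathrm{int}(Q_i)\cap V(C)=\emptyset$. This is precisely where biconnectivity of $G[W]$ is indispensable, since it is what supplies the second arc $D_2$; every other case collapses cleanly to the $w_3\in V(C)$ scenario, so the bulk of the argument is really this structural statement about triples of vertices in 2-connected graphs.
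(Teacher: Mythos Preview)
Your proof is correct and follows essentially the same route as the paper's: both arguments pick three vertices of $N_W(Z)$, use Menger in the biconnected graph $G[W]$ to thread a cycle through two of them, and then use a two-path fan from the third vertex to that cycle; the only cosmetic difference is that you build a $K_4$-minor with $Z$ as one branch set and invoke the minor/subdivision equivalence, whereas the paper first contracts $Z$ to a single vertex $u$ and argues the subdivision directly. Your case analysis in the final sub-case (when the fan lands on the same arc as $w_1,w_2$) is in fact more explicit than the paper's, which somewhat loosely asserts that $x$ and $y$ end up as branching nodes of the resulting $\theta_3$.
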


\begin{lemma} \label{lem:2attach}
Let $(G,S,k)$ be a simplified
  instance and let $\alpha$ be a lowest node of the
  extended SP-decomposition $(T,\mathcal{X})$ of $G[F]$ which is considered at line~\ref{l:protrusion-test} of Algorithm~\ref{K4algo}. If $\alpha$ is a
  P-node inherited from the SP-tree of block $B$, then $|\partial_G(V^B_{\alpha})\setminus X_{\alpha}|\leq 2$ and $V^B_{\alpha}$ is a $4$-protrusion.

\end{lemma}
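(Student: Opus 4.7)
My plan is to prove the two claims of Lemma~\ref{lem:2attach} in sequence, using the biconnectedness of $V^B_\alpha$ and Lemma~\ref{lem:K4-3+1} as the main tools. First, since $\alpha$ is a P-node, $G^B_\alpha$ is the parallel composition of at least two two-terminal SP-graphs sharing terminals $X_\alpha=\{s,t\}$, so $G^B_\alpha$ (and hence its supergraph $G[V^B_\alpha]$, which can differ only by extra parallel $(s,t)$-edges forced to have both endpoints in $X_\alpha$) is biconnected: any two vertices admit two internally disjoint paths by routing through $s$ and $t$ in different branches.

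For the bound $|\partial_G(V^B_\alpha)\setminus X_\alpha|\leq 2$ I argue by contradiction. Suppose $v_1,v_2,v_3\in V^B_\alpha\setminus X_\alpha$ each has a neighbor outside $V^B_\alpha$ in $G$. By the standard SP-tree property, no vertex of $V^B_\alpha\setminus X_\alpha$ has a neighbor in $V(B)\setminus V^B_\alpha$, so each $v_i$ either has an $S$-neighbor or is a cut vertex of $G[F]$ whose dangling sub-block-tree lies in $V_\alpha\setminus V^B_\alpha$. Because $\alpha$ has reached line~\ref{l:protrusion-test}, the branching tests at lines~\ref{l:br2-test} and~\ref{l:br3-test} on $V_\alpha$ did not fire, so all $S$-neighbors of vertices in $V_\alpha$ lie in a single biconnected component $C^S$ of $G[S]$. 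I then take
\[
Z := C^S\cup (V_\alpha\setminus V^B_\alpha).
\]
Every block hanging below $V^B_\alpha$ inside $V_\alpha\setminus V^B_\alpha$ must contain at least one vertex with an $S$-neighbor, for otherwise a chain of all-$N_0$ blocks at the leaves of the local block tree would have been pruned by Reductions~\ref{rrule:deg1S}(b), \ref{rrule:2cut}, \ref{rrule:deg2}, \ref{rrule:para} and ultimately Reduction~\ref{rrule:deg1S}(a). Such $S$-neighbors lie in $C^S$ by the previous sentence, so each connected component of $V_\alpha\setminus V^B_\alpha$ attaches to $C^S$ inside $Z$, making $Z$ connected; moreover $Z\cap V^B_\alpha=\emptyset$, and $\{v_1,v_2,v_3\}\subseteq N_{V^B_\alpha}(Z)$ since for each $v_i$ either its $S$-neighbor lies in $C^S$ or its neighbor in the attached block lies in $V_\alpha\setminus V^B_\alpha$. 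Applying Lemma~\ref{lem:K4-3+1} with $W=V^B_\alpha$ yields a $K_4$-subdivision in $G[V^B_\alpha\cup Z]\subseteq G[V_\alpha\cup S]$, which would have triggered line~\ref{l:br1-test} at $\alpha$ --- a contradiction.

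The $4$-protrusion claim is then immediate: $|\partial_G(V^B_\alpha)|\leq |X_\alpha|+2\leq 4$, and $G[V^B_\alpha]$ is a subgraph of $G[F]$ which is $K_4$-minor-free (because $S$ is a \KC{} of $G$), so $tw(G[V^B_\alpha])\leq 2\leq 4$. The main obstacle I anticipate is the verification, inside the construction of $Z$, that every dangling sub-block-tree indeed carries at least one edge to $C^S$ after exhaustive reduction; this requires a careful case analysis on the number of neighbors of the attaching cut vertex inside its local block, iteratively collapsing all-$N_0$ biconnected components via the $2$-boundary rule until a degree-one leaf vertex is exposed and deleted by Reduction~\ref{rrule:deg1S}(a). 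All other steps are either structural observations about parallel compositions or direct invocations of the already-established branching and reduction machinery.
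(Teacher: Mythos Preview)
Your proposal is correct and follows essentially the same route as the paper: biconnectedness of $G^B_\alpha$ from the P-node, a contradiction assuming three boundary vertices outside $X_\alpha$, connectivity of the ``outside'' piece via the fact that dangling subtrees below $V^B_\alpha$ must touch $S$ (the paper isolates this as Lemma~\ref{lem:withattach} rather than arguing it inline from the reduction rules), and then Lemma~\ref{lem:K4-3+1} against the failed test at line~\ref{l:br1-test}. The only cosmetic difference is that the paper builds three explicit paths $P_x,P_y,P_z$ to a single connected component of $G[S]$, whereas you package everything into one connected set $Z$; both feed Lemma~\ref{lem:K4-3+1} the same way.
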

\begin{proof}



\vspace{-0.3cm}
  As $\alpha$ is a P-node, $G^B_{\alpha}$ is biconnected. We argue $|\partial_G(V^B_{\alpha})\setminus X_{\alpha}|\leq 2$ and the second statement easily follows.
    Suppose $\partial_G(V^B_{\alpha})\setminus X_{\alpha}$ contains three distinct vertices, say, $x$, $y$ and
  $z$. We claim that there exist three
  internally vertex-disjoint paths $P_x$, $P_y$ and $P_z$ from $S$
  to each of $x$, $y$ and $z$ avoiding $V^B_{\alpha}$. Without loss of generality, we show that $G[S\cup V_{\alpha}]$ contains a path $P_x$ between $S$ and $x$ avoiding $V^B_{\alpha}$ and the claim follows as a corollary. If $x\in N_1\cup N_2$, then it is trivial. Suppose $x \notin N_1\cup N_2$ and thus $x$ is a cut vertex of \(G[F]\). Then $(T,\mathcal{X})$ contains a cut node $\beta$ with $X_{\beta}=\{x\}$ such that $\beta$ is a descendent of $\alpha$. It can be shown\footnote{Lemma~\ref{lem:withattach} in the appendix} that $Y_{\beta}\cap (N_1\cup N_2)\neq   \emptyset$.  Since $G_{\beta}$ is connected, $G[S\cup V_{\beta}]$ contains a path \(P_x\) between $S$ and $x$ and $P_x$ is a path avoiding $V^B_{\alpha}$.

  As $\alpha$ fails the test of line~\ref{l:br2-test}, the vertices
  of $N_{S}(V_{\alpha})$ belong to the same connected component,
  say $C$, of $G[S]$. Now Lemma~\ref{lem:K4-3+1} applies to the
  biconnected graph $G^B_{\alpha}$ and $(C\cup P_x\cup P_y\cup
  P_z)\setminus \{x,y,z\}$, showing that $G[V^B_{\alpha}\cup P_x
  \cup P_y \cup P_z \cup S]$ contains a $K_4$-subdivision: a
  contradiction to the fact that Branching rule~\ref{rule:br1}
  does not apply. Therefore, $\partial_G(V^B_{\alpha})\setminus X_{\alpha}$ contains at most two vertices.
\end{proof}

The next two lemmas show that applying {\sc Branch-or-reduce} in a bottom-up manner enables us to bound the branching degree of the {\sc Branch-or-reduce} procedure. Lemma \ref{lem:marked-cst-size} states that for every marked node $\alpha$, the graph $G_{\alpha}$ is of constant-size.

\begin{lemma}[\textbf{$\star$}] \label{lem:marked-cst-size}
Let $(G,S,k)$ be a simplified instance of \KCC{} and let $\alpha$ be a marked node of the extended SP-decomposition $(\mathcal{X},T)$ of $G[F]$. Then $|V_{\alpha}|\leqslant c_1:=12(\gamma(8)+2c_0)$.
\end{lemma}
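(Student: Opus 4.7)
The plan is to prove the bound by structural induction on the subtree $T_\alpha$ of the extended SP-decomposition $(T,\mc{X})$. Since Algorithm~\ref{K4algo} processes nodes in non-increasing distance from the root (line~\ref{l:alpha}), every descendant of a marked node $\alpha$ has itself been marked in an earlier iteration, so the inductive hypothesis applies to every child of $\alpha$. The global bound will be assembled by combining, at each node, a bounded per-node contribution with a bounded number of children.

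For the base case, if $\alpha$ is a leaf of $T$ (an edge node without a cut-node child), then $|V_\alpha|=|X_\alpha|\le 2$. For the inductive step, I distinguish the type of $\alpha$. If $\alpha$ is a P-node inherited from the SP-tree of a block $B$, then line~\ref{l:protrusion-test} failed, so by Lemma~\ref{lem:2attach} we have $|V^B_\alpha| < \gamma(9)$, which bounds the intra-block portion of $V_\alpha$; the remaining vertices of $V_\alpha$ come from cut-node descendants hanging below edge-node leaves of $T^B_\alpha$, each of whose subtrees is already marked and therefore, by induction, of size at most $c_1$. The number of such cut-node descendants is at most $|V^B_\alpha|$, so the total contribution is bounded.

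The main obstacle is the two cases for which no explicit size check guards the \textsc{Branch-or-reduce} loop: marked S-nodes, and cut nodes with multiple child blocks. For a marked S-node $\alpha$ whose children form a series chain $\beta_1,\ldots,\beta_\ell$ along terminals $t_0,t_1,\ldots,t_\ell$, I will show that $\ell$ is bounded by an absolute constant. Since the instance is simplified, Lemma~\ref{cor:simplified-deg2} gives $F = N_0\cup N_1\cup N_2$; and since the tests at lines~\ref{l:br2-test}--\ref{l:br3-test} failed on $V_\alpha$, the set $N_S(V_\alpha)$ lies inside a single biconnected component of $G[S]$. From these two facts, any sufficiently long sub-chain of $\beta_1,\ldots,\beta_\ell$ must contain either a $2$-protrusion of $G[F]$ whose interior lies in $N_0$, which Reduction rule~\ref{rrule:2cut} would have eliminated (contradicting that the instance is reduced), or else a structure yielding a $K_4$-subdivision in $G[S\cup V_\alpha]$ (via Lemma~\ref{lem:K4-3+1}), which Branching rule~\ref{rule:br1} would have applied, contradicting the marking of $\alpha$. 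A symmetric argument, relying on the failure of Branching rule~\ref{rule:br3} and on Lemma~\ref{lem:K4-3+1} applied to the biconnected component containing $N_S(V_\alpha)$, bounds the number of child blocks at a cut node.

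Putting the pieces together, the bounded local size per node type, multiplied by the bounded fan-out at each node, together with the inductive bound $c_1$ on every marked child subtree, propagates through the bounded number of type-alternation levels of the extended SP-decomposition and yields $|V_\alpha| \le c_1 = 12(\gamma(8)+2c_0)$. The factor $\gamma(8)$ captures the protrusion-controlled ``block-local'' vertices, the additive $2c_0$ accounts for the base contributions (from $|X_\alpha|$ and the constant-size children produced by the explicit $2$-boundary replacements of Reduction rule~\ref{rrule:2cut}), and the multiplicative factor $12$ absorbs the worst-case combination over the bounded number of siblings in a series chain and blocks rooted at a single cut vertex.
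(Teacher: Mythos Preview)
Your inductive scheme is circular and cannot close. You take as inductive hypothesis that every marked child subtree satisfies $|V_\beta|\le c_1$, and then try to conclude $|V_\alpha|\le c_1$; but at a P-node you allow up to $|V^B_\alpha|$ cut-node descendants of size $c_1$ each, and at an S-node you allow a constant number $\ell$ of children of size $c_1$ each. That yields bounds of order $\gamma(9)\cdot c_1$ or $\ell\cdot c_1$, not $c_1$. Your fix, that the bound ``propagates through the bounded number of type-alternation levels,'' is false: the depth of an extended SP-decomposition is unbounded (S/P alternation inside a block can be arbitrarily deep, and cut nodes introduce entirely new SP-trees below them), so no constant survives a multiplicative recursion down the tree.

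The paper avoids this circularity by proving \emph{absolute} bounds for the node types in a fixed order, without an induction on $T$. First, a separate lemma shows that for any marked \emph{cut node} $\beta$ one has $|V_\beta|\le c_0$ (the block tree of $G_\beta$ is a path, a single nontrivial leaf block of size $<\gamma(9)$, and a short $N_0$-tail bounded via Reduction rules~\ref{rrule:deg1S},~\ref{rrule:2cut},~\ref{rrule:consec}). This absolute $c_0$ is the anchor you are missing. Second, for a P-node the crucial fact from Lemma~\ref{lem:2attach} is not just $|V^B_\alpha|<\gamma(9)$ but $|\partial_G(V^B_\alpha)\setminus X_\alpha|\le 2$: at most \emph{two} vertices of $V^B_\alpha$ are attachment vertices (in $N_1\cup N_2$ or cut vertices), so at most two cut-node subtrees hang below, each of size $\le c_0$ --- not $|V^B_\alpha|$ many of size $c_1$. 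This gives $|V_\alpha|\le \gamma(9)+2c_0$ directly. Finally, the S-node argument bounds the number of children by $12$ (using the claim that any two consecutive children contribute an attachment vertex, together with Lemma~\ref{lem:K4-3+1} and Reduction rule~\ref{rrule:consec}), and each child is a P-node or an edge node already bounded by $\gamma(9)+2c_0$, giving $c_1=12(\gamma(9)+2c_0)$. The nesting is cut $\le c_0$, edge $\le 2c_0$, P $\le \gamma(9)+2c_0$, S $\le c_1$; nowhere is $c_1$ used on the right-hand side.
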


\begin{lemma}[\textbf{$\star$}] \label{lem:poly-branching} Let $(G,S,k)$ be a
  simplified instance of \KCC{} and let $\alpha$ be a lowest
  unmarked node of $(T,\mathcal{X})$
  of $G[F]$.  In polynomial time, one can find
\begin{enumerate}
\vspace{-0.2cm}
\item[(a)] a path $X$ of size at most $2c_1$ satisfying the conditions
  of line~\ref{l:Xbr2} (resp. line~\ref{l:Xbr3}) if the test at
  line~\ref{l:br2-test} (resp.~\ref{l:br3-test}) succeeds;
\vspace{-0.2cm}
\item[(b)] a subset $X\subseteq V_{\alpha}$ of size bounded by
  $2c_1$ satisfying the condition of line~\ref{l:Xbr1} if the test at
  line~\ref{l:br1-test} succeeds;
\end{enumerate}
\end{lemma}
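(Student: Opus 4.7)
The plan is to exploit that $\alpha$ is the unmarked node of $T$ farthest from the root: every child $\gamma$ of $\alpha$ in $T$ is marked, so by Lemma~\ref{lem:marked-cst-size} each such $\gamma$ satisfies $|V_\gamma|\leqslant c_1$. Moreover, whichever branching test succeeds at $\alpha$ must have failed at every child, which severely restricts where witness objects can live inside $V_\alpha$.

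For part (a), consider the test at line~\ref{l:br2-test}; the argument for line~\ref{l:br3-test} is symmetric after substituting $bc_S$ for $cc_S$. Fix any witness pair $u^\star,v^\star\in V_\alpha$ (with associated $x_{u^\star},x_{v^\star}\in S$ in distinct components of $G[S]$); by Lemma~\ref{cor:simplified-deg2} both lie in $N_1\cup N_2$. The first observation is that $u^\star$ and $v^\star$ cannot both lie in $V_\gamma$ for a single child $\gamma$, as this would force the br2-test to succeed at the marked $\gamma$. By inspecting the shared terminals between consecutive children (for S-nodes) or the common terminals (for P-/cut nodes), I would then show that we may assume $u^\star\in V_{\gamma_1}$ and $v^\star\in V_{\gamma_2}$ for two children $\gamma_1,\gamma_2$ whose subtrees meet inside $X_\alpha$. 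A BFS in each $G_{\gamma_i}$ produces a path of length $\leqslant c_1$ from the respective endpoint to $X_\alpha\cap V_{\gamma_i}$; concatenating yields a $u^\star$--$v^\star$ walk of length $\leqslant 2c_1$ in $G_\alpha$. Finally, I iteratively trim any internal $w\in N_1\cup N_2$ by reassigning $u^\star$ or $v^\star$ to $w$ (whichever preserves the ``different components'' property, by pigeonhole on the at-most-two components involved) and keeping the shorter sub-walk, yielding the required simple path whose interior lies in $N_0$.

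For part (b), I start from a $K_4$-subdivision $H$ of $G[S\cup V_\alpha]$ (guaranteed by the success of line~\ref{l:br1-test} and locatable in linear time via~\cite{VTL82}). Since the br1-test failed at every marked child $\gamma$, no single $V_\gamma$ together with $S$ hosts a $K_4$-subdivision, so $V(H)\cap V_\alpha$ must touch at least two children's subtrees. The strategy is to shorten $H$ one child at a time: any branch of $H$ traversing a child $\gamma$ enters and leaves $V_\gamma$ through $X_\alpha\cap V_\gamma$ (unless it ends at a branching node inside $V_\gamma$), and since $|V_\gamma|\leqslant c_1$ and $G_\gamma$ is connected, each such traversal can be rerouted by a BFS path inside $V_\gamma\cup X_\alpha$ of length $\leqslant c_1$. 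Only a bounded number of children contribute essentially to $H$, so the resulting truncated vertex set in $V_\alpha$ has size $\leqslant 2c_1$, and we take $X$ to be this connected set.

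The main obstacle will be part (b): ensuring that the rerouting keeps each branch connected and preserves internal vertex-disjointness among the six branches of the $K_4$-subdivision. I anticipate a short case analysis on the node type of $\alpha$ (P-node, S-node, or cut node), together with an auxiliary observation that, at a marked child $\gamma$ where the br1-test already failed, every branch of $H$ entering $V_\gamma$ can be replaced by a shortest path in $V_\gamma\cup X_\alpha$ without forcing two distinct branches to share an interior vertex; the $2c_1$ bound then falls out by direct counting over the $O(1)$ children contributing to $H$. All subroutines (BFS, $K_4$-subdivision detection, trimming) run in polynomial time, giving the claimed complexity.
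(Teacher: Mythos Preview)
Your plan is sound for P-nodes, edge nodes, and cut nodes, where the number of children of $\alpha$ is at most two (or where all children share the single vertex $X_\alpha$), so that two bounded-size subtrees glued at $X_\alpha$ already cover $V_\alpha$. The gap is the S-node case, and it is genuine.

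For part~(a), the assertion that the witnesses $u^\star,v^\star$ can be taken in two children ``whose subtrees meet inside $X_\alpha$'' is simply false for an S-node: the children $\beta_1,\dots,\beta_q$ of an S-node share terminals only with their \emph{neighbours} along the series chain, and those shared vertices $x_2,\dots,x_q$ are not in $X_\alpha$. Nothing in your argument prevents $u^\star\in V_{\beta_3}$ and $v^\star\in V_{\beta_{100}}$; your two BFS paths do not concatenate, and the later trimming step (making the interior lie in $N_0$) does not by itself bound the length, since the shared terminals $x_j$ may well lie in $N_0$. The paper closes this gap via Claim~\ref{cl:atleast} (every two consecutive children contain an interior vertex of $N_1\cup N_2$, a consequence of the Bypassing rule), which forces that a pair $u^\star,v^\star$ minimizing the index distance satisfies $k'-k\le 2$; then three consecutive children suffice.

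For part~(b), your rerouting strategy fails for the same reason. A single branch of the $K_4$-subdivision $H$ may run along the series spine $x_1,x_2,\dots,x_{q+1}$ and thus visit \emph{every} child of the S-node; rerouting inside each $V_{\beta_j}$ shortens nothing, and the claim that ``only a bounded number of children contribute essentially to $H$'' is unjustified. The paper does not try to shorten a given $H$ at all. Instead it reuses the structural argument of Case~(3) in Lemma~\ref{lem:marked-cst-size} (Claims~\ref{cl:atleast} and~\ref{cl:2neighb}): once $q\ge 13$, the first thirteen children already contain two disjoint subsets $X_1,X_2$ with $|N_S(X_1)|,|N_S(X_2)|\ge 2$, from which a fresh $K_4$-minor is \emph{constructed} inside $\bigcup_{j\le 13}V_{\beta_j}$. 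The set $X$ is then taken to be this union (and $X=V_\alpha$ when $q\le 12$). You would need an argument of this constructive flavour for S-nodes; the ``reroute and truncate'' scheme cannot give the bound.
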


\vspace{-0.2cm}
\noindent For running time analysis of our algorithm, we introduce the following measure

\centerline{$\mu := (2c_1+2)k +(2c_1+2)\# cc(G[S]) + \#bc(G[S])$}

\smallskip
\noindent
where $\#cc(G[S])$ and $\#bc(G[S])$ respectively denote the number
of connected and biconnected components of $G[S]$.

\begin{reminder}{Theorem~\ref{th:runtime}}
The \KD{} problem can be solved in $2^{O(k)}\cdot n^{O(1)}$ time.
\end{reminder}
\vspace{-0.2cm}
\begin{proof}

  Due to Lemma \ref{lem:compression}, it is sufficient to show
  that one can solve \KCC{} in time $2^{O(k)}\cdot n^{O(1)}$. The
  recursive application of operations (a)-(e) at the beginning of the section to a given instance $(G,S,k)$ produces a search tree
  $\Upsilon$. It is not difficult to see that $(G,S,k)$ is a
  \YES{}-instance if and only if at least one of the leaf nodes in
  $\Upsilon$ corresponds to a \YES{}-instance. This follows from
  the fact that reduction and branching rules are safe.

 Let us see the running time to apply the operations (a)-(e) at each node of $\Upsilon$. Every instance corresponding to a leaf node either is a trivial instance or is an independent
  instance (see Theorem~\ref{th:independent}) which can be solved in
 $2^k\cdot n^{O(1)}$ using branching rule~\ref{rule:br1} on pairs of vertices of $F$ (see Theorem~\ref{lem:transVC}). Clearly, the
  operations (a)--(c) can be applied in polynomial time. Consider
  the operation (d). The while-loop in the algorithm {\sc
    Branch-or-reduce} iterates $O(n)$ times. At each
  iteration, we are in one of the three situations: we detect in polynomial time (Lemma
  \ref{lem:poly-branching}) a connected subset $X$ on which to
  apply one of Branching rules, or apply the protrusion rule in
  polynomial time (Reduction rule~\ref{rrule:protrusion}), or
  none of these two cases occur and the node under consideration is marked.

Observe that the branching degree of the search tree is at most $2c_1+1$ by Lemma \ref{lem:poly-branching}. To bound the size of $\Upsilon$, we need the following claim.

\begin{claim}\label{cl:measure}
\emph{In any application of Branching rules \ref{rule:br1}--\ref{rule:br3}, the measure $\mu$ strictly decreases.}
\end{claim}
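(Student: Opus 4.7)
Strategy. The plan is to verify $\Delta \mu < 0$ branch-by-branch for Branching rules~\ref{rule:br1},~\ref{rule:br2},~\ref{rule:br3}, writing $m := |X| \le 2c_1$ (Lemma~\ref{lem:poly-branching}(a)). The easy case covers all branches of shape $(G - \{x\}, S, k-1)$, namely every branch of Rule~\ref{rule:br1} and the first-type branches of Rules~\ref{rule:br2} and~\ref{rule:br3}. Since $x \in F$, the subgraph $G[S]$ is untouched, so $\Delta\#cc(G[S]) = \Delta\#bc(G[S]) = 0$ and $\Delta k = -1$, giving $\Delta\mu = -(2c_1+2) < 0$.

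Add-branch of Rule~\ref{rule:br2} (the instance $(G, S \cup X, k)$). Here $\Delta k = 0$, so the drop must come from $\#cc$ and $\#bc$. The plan is to simulate the passage $G[S] \mapsto G[S \cup X]$ as a sequence of atomic updates: first add the $m$ vertices of $X$, then add the edges incident to $X$ in some order. Each new vertex contributes $+(2c_1+2)$ to $\mu$. Each added edge is either \emph{bridging} --- endpoints in distinct components of the current intermediate graph, so $\#cc$ drops by $1$ and $\#bc$ rises by $1$, contributing $-(2c_1+1)$ to $\mu$ --- or \emph{internal}, contributing $\le 0$ to $\mu$ (the edge leaves $\#cc$ alone and the new cycle it creates can only merge existing blocks). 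Rule~\ref{rule:br2}'s hypothesis forces $\#cc(G[S \cup X]) \le \#cc(G[S]) - 1$, so in the simulation the number of bridging edges satisfies $|E_1| \ge m + 1$. Summing contributions, $\Delta\mu \le (2c_1+2)\,m - (2c_1+1)(m+1) = m - (2c_1+1) \le -1$.

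Add-branch of Rule~\ref{rule:br3}. Now $\#cc$ need not drop, so the strict decrease has to come from $\#bc$. Let $u,v \in X$ be the neighbors (in $X$) of the witnesses $s_1,s_2 \in S$, and let $u = x_1, x_2, \ldots, x_m = v$ be the path $X$ returned by Lemma~\ref{lem:poly-branching}(a). I would order the edge additions so that $(u, s_1), (u, x_2), \ldots, (x_{m-1}, v)$ are added first, each a bridge contributing $+1$ to $\#bc$ (total $+m$), and close with the edge $(v, s_2)$. This closing edge is intra-component and creates a cycle traversing both the $m$ just-created bridge-blocks and the portion of the block-cut tree of $cc_S(s_1)$ lying between $s_1$ and $s_2$. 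The latter meets at least two old blocks because $bc_S(s_1) \ne bc_S(s_2)$; hence $\ge m + 2$ blocks collapse into one, dropping $\#bc$ by at least $m + 1$. The net $\Delta\#bc \le -1$ and so $\Delta\mu \le -1$; any remaining $X$-to-$S$ edges from $|N_S(u)|, |N_S(v)| \le 2$ are either inter- or intra-component and each contribute $\le 0$ to $\mu$, so only help. The main delicate step is the block-traversal claim for the closing cycle; it is essentially the defining property of distinct biconnected components inside a common connected component.
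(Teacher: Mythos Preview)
Your proof is correct and follows essentially the same line as the paper's: for the delete-branches $\Delta k=-1$ with $G[S]$ unchanged; for the add-branch of Rule~\ref{rule:br2}, $\#cc$ drops by at least one while $\#bc$ rises by at most $|X|+1\le 2c_1+1$; for the add-branch of Rule~\ref{rule:br3}, $\#cc$ is unchanged and $\#bc$ strictly drops. The paper states these three facts directly and plugs them into $\mu$, whereas you recover them through an edge-by-edge simulation.

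Two remarks. First, your treatment of Rule~\ref{rule:br3} is actually more complete than the paper's: the paper merely asserts that $\#bc$ strictly decreases, while your cycle argument through the $s_1$--$s_2$ path in the block tree of $cc_S(s_1)$ supplies the justification. Second, your per-step accounting (``each new vertex contributes $+(2c_1+2)$'', ``each bridging edge contributes $-(2c_1+1)$'') tacitly treats isolated vertices as non-blocks. Under the paper's definition an isolated vertex \emph{is} a block, so the individual contributions shift by $\pm 1$; however, the shift cancels (each isolated-vertex block created is absorbed by the first edge incident to it), and your final inequality $\Delta\mu\le (2c_1+2)m-(2c_1+1)(m+1)\le -1$ survives intact.
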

\vspace{-0.2cm}
\begin{proofof}
  The statement holds for Branching rule \ref{rule:br1} since
  \(k\) reduces by one and \(G[S]\) is unchanged. Recall that
  Branching rules \ref{rule:br2} and \ref{rule:br3} put a
  vertex in the potential solution or add a path $X\subseteq F$ to $S$. In the first
  case, $\mu$ strictly decreases because \(k\) decreases and
  \(\#cc(G[S])\) and \(\#bc(G[S])\) remain unchanged. Let us see that $\mu$ strictly decreases also when we add
  a path $X$ to $S$.

  If Branching rule \ref{rule:br2} is applied, the number of
  biconnected components may increase by at most $2c_1+1$. This
  happens if every edge on the path $X$ together with the two
  edges connecting the two end vertices of $X$ to $S$ add to the
  biconnected components of $G[S\cup X]$. Hence we have that the
  new value of \(\mu\) is $\mu'= (2c_1+2)k+(2c_1+2)\#cc(G[S\cup
  X])+\#bc(G[S\cup X])\leqslant
  (2c_1+2)k+(2c_1+2)(\#cc(G[S])-1)+(\#bc(G[S])+2c_1+1)\leqslant
  \mu-1$. It remains to observe that an application Branching rule \ref{rule:br3}  strictly decreases the number of biconnected components while does not increase the number of connected components. Thereby $\mu'\leqslant \mu-1$.
\end{proofof}

By Claim \ref{cl:measure}, at every root-leaf computation path in $\Upsilon$ we have at most $\mu = (2c_1+2)k +(2c_1+2)\#cc(G[S]) + \#bc(G[S]) \leq (4c_1+5)k$ nodes at which a branching rule is applied. Since we branch into at most $(2c_1+1)$ ways, the number of leaves is bounded by $(2c_1+1)^{(4c_1+5)k}$. Also note that any root-leaf computation path contains $O(n)$ nodes at which a reduction rule is applied since any reduction rule strictly decreases the size of the instance and does not affect $G[S]$. It follows that the running time is bounded by $((4c_1+5)k+ O(n))\cdot (2c_1+1)^{(4c_1+3)k} \cdot poly(n)=2^{O(k)}\cdot n^{O(1)}$.
\end{proof}

\vspace{-0.4cm}
\section{Conclusion and open problems}

\vspace{-0.3cm}
Due to the use of the generic protrusion rule (on $t$-protrusion for $t=3$ or $4$), the result in this paper is existential. A tedious case by case analysis would eventually leads to an explicit $c^k\cdot n^{O(1)}$ exponential FPT algorithm for some constant value $c$. It is an intriguing challenge to reduce the basis to a small $c$ and/or get a simple proof of such an explicit algorithm. More generally, it would be interesting to investigate the systematic instantiation of protrusion rules.

We strongly believe that our method will apply to similar
problems. The first concrete example is the parameterized
\textsc{Outerplanar Vertex Deletion}, or equivalently the
\textsc{$\{K_{2,3},K_4\}$-minor cover} problem. For that problem, we need to
adapt the reduction and branching rules in order to preserve
(respectively, eliminate) the
existence of a $K_{2,3}$ as well. For example, the by-passing rule
(Reduction rule~\ref{rrule:deg2}) may destroy a $K_{2,3}$ unless
we only bypass a degree-two vertices when it is adjacent to
another degree-two vertex. Similarly in Reduction
Rule~\ref{rrule:2cut}, we cannot afford to replace the set $X$ by
an edge. It would be safe with respect to $\{K_{2,3},K_4\}$-minor if
instead $X$ is replaced by a length-two path or by two parallel
paths of length two (depending on the structure of $X$). So we
conjecture that for \textsc{Outerplanar Vertex Deletion} our
reduction and branching rules can be adapted to design a single
exponential FPT algorithm.

A more challenging problem would be to get a single exponential FPT algorithm for the \textsc{treewidth-$t$ vertex deletion} for any value of $t$. Up to now and to the best of our knowledge, the fastest algorithm runs in $2^{O(k\log k)}\cdot n^{O(1)}$~\cite{FominLMS11}.

\smallskip
\noindent
\textbf{Acknowledgements.}
We would like to thank Saket Saurabh for his insightful comments on an early draft and Stefan Szeider for pointing out the application of our problem in Bayesian Networks.


{\small

}
\newpage
\appendix

\section{Definitions}

\subsection{Minors and tree-width}

\begin{figure}[htbh]
\begin{center}
\includegraphics[scale=0.75]{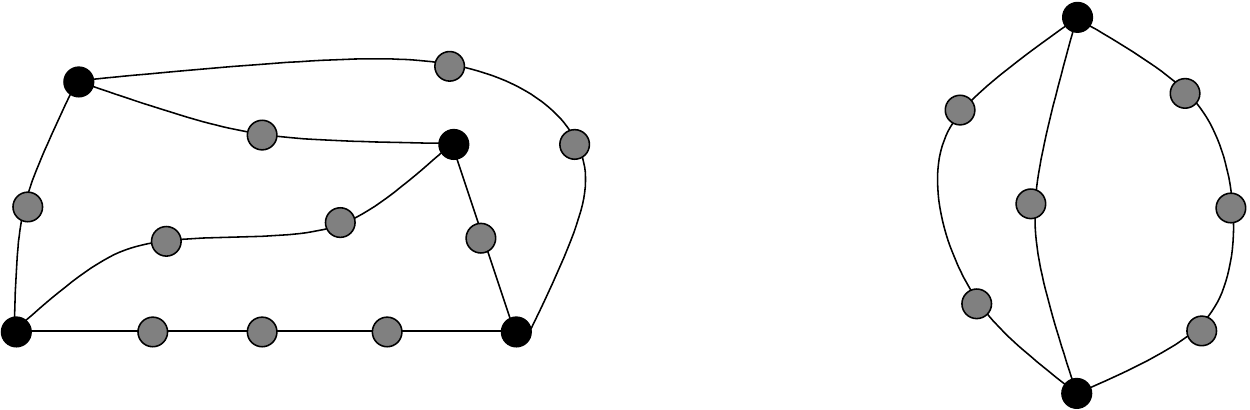}
\end{center}
\caption{A $K_4$-subdivision on the left and a
  $\theta_3$-subdivision on the right. The black vertices are the
  branching nodes. }
\label{fig:k4-subdivision}
\end{figure}

\begin{observation}\label{nocut}
  A $K_4$-subdivision is biconnected; equivalently,
  it is connected and does not contain a cut vertex.
\end{observation}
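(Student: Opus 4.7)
The plan is to verify biconnectedness directly: show first that $H$ is connected (obvious, since a $K_4$-subdivision is obtained from the connected graph $K_4$ by replacing edges with paths) and then show that $H\setminus v$ remains connected for every vertex $v$. Together these give biconnectedness, since a connected graph is biconnected if and only if it has no cut vertex.

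The argument naturally splits into two cases according to whether $v$ is a branching node or a subdividing node of $H$. In a $K_4$-subdivision, the four branching nodes $a,b,c,d$ are pairwise joined by six paths $P_{ab},P_{ac},P_{ad},P_{bc},P_{bd},P_{cd}$ that are pairwise internally disjoint (this follows immediately from the definition of a subdivision, since distinct edges of $K_4$ are subdivided by disjoint sets of new internal vertices), and together these paths cover $V(H)\cup E(H)$.

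If $v$ is a branching node, say $v=a$, then the three paths $P_{bc},P_{bd},P_{cd}$ form a triangle-like cycle inside $H\setminus v$, connecting $b,c,d$. Each of the three paths $P_{ab},P_{ac},P_{ad}$, after removing its endpoint $v=a$, is either empty or a (possibly trivial) path with its remaining endpoint still attached to one of $b,c,d$. Hence every vertex of $H\setminus v$ lies in the same connected component as $\{b,c,d\}$. If instead $v$ is a subdividing node lying on some path $P_{ab}$, then removing $v$ breaks $P_{ab}$ into two subpaths, one still attached to $a$ and one still attached to $b$, while all other paths $P_{ij}$ are untouched; in particular $a$ and $b$ remain connected through, say, $P_{ac}\cup P_{cb}$, so again $H\setminus v$ is connected.

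There is no real obstacle in this proof; the statement is essentially structural. The only point one must be mildly careful about is the internal vertex-disjointness of the six paths $P_{ij}$, but this is immediate from the definition of an $H$-subdivision given in the preliminaries.
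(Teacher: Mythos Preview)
Your argument is correct. The paper itself does not give a proof of this observation; it is stated as evident. Your case analysis on whether the deleted vertex is a branching node or a subdividing node is the natural way to make the statement explicit, and each case is handled correctly: in the branching-node case the remaining three paths $P_{bc},P_{bd},P_{cd}$ connect $b,c,d$ and absorb the truncated incident paths; in the subdividing-node case the broken path is bypassed via any of the other five paths. The internal vertex-disjointness of the six paths, which you flag as the only point requiring care, is indeed immediate from the definition of a subdivision.
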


Since there are three distinct paths between any two branching
nodes in a $K_4$-subdivision, we need at least three vertices in
order to separate any two of them. Hence we have:

\begin{observation}\label{cross2cut}
  Let $\{s,t\}$ be a separator of graph $G$, and let $H$ be a
  $K_4$-subdivision in $G$. Then there exists a connected
  component $X_0$ of $G-\{s,t\}$ such that all four branching
  nodes of $H$ belong to $X_0\cup \{s,t\}$.
\end{observation}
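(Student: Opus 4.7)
The plan is to prove this by contradiction, using the fact that between any two branching nodes of a $K_4$-subdivision there exist three internally vertex-disjoint paths. I would establish this as a preliminary observation: the property is inherited from the 3-connectivity of $K_4$, since the three internally disjoint paths in $K_4$ between two vertices $u,v$ (the edge $uv$ itself and the two length-two paths through the other two vertices) remain internally disjoint once each edge is replaced by its subdivided path in $H$. This can also be seen by a direct appeal to Menger's theorem applied to $K_4$.

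Now let $B = \{b_1, b_2, b_3, b_4\}$ denote the branching nodes of $H$ and let $C_1, C_2, \dots$ be the connected components of $G - \{s,t\}$. Observe that branching nodes that happen to lie in $\{s,t\}$ automatically belong to $X_0 \cup \{s,t\}$ for any candidate $X_0$, so it is enough to show that all branching nodes in $B \setminus \{s,t\}$ sit in a single component. Suppose for contradiction that $u \in C_i$ and $v \in C_j$ are branching nodes with $u,v \notin \{s,t\}$ and $i \neq j$. Pick three internally vertex-disjoint $u$-$v$ paths $P_1, P_2, P_3$ in $H$, guaranteed by the preliminary observation. Because $\{s,t\}$ separates $u$ from $v$ in $G$, every $P_\ell$ must contain a vertex of $\{s,t\}$, and since $u,v \notin \{s,t\}$, that vertex is internal to $P_\ell$. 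But $|\{s,t\}| = 2$, so the pigeonhole principle forces two of the three paths to share an internal vertex, contradicting internal disjointness. Choosing $X_0$ to be the unique component containing $B \setminus \{s,t\}$ then completes the argument.

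The only step that requires genuine care is the preliminary claim about three internally vertex-disjoint paths between branching nodes in a $K_4$-subdivision; the pigeonhole finish is then essentially immediate. I would handle the preliminary claim either by the explicit construction above, or by citing Menger's theorem on the 3-connected graph $K_4$ and lifting the disjoint paths through the subdivision.
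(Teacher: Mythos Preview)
Your proof is correct and takes essentially the same approach as the paper. The paper's justification is the single sentence preceding the observation---``Since there are three distinct paths between any two branching nodes in a $K_4$-subdivision, we need at least three vertices in order to separate any two of them''---and your argument simply spells out this reasoning in full, including the pigeonhole step.
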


A {\em tree decomposition} of $G$ is a pair $(T,\mc{X})$, where
$T$ is a tree whose vertices we will call {\em nodes} and
$\mc{X}=\{X_{i}:i\in V(T)\}$ is a collection of subsets of $V(G)$
(called {\em bags}) with the following properties:
\begin{enumerate}
\item $\bigcup_{i \in V(T)} X_{i} = V(G)$,

\item for each edge $(v,w) \in E(G)$, there is an $i\in V(T)$
such that $v,w\in X_{i}$, and

\item for each $v\in V(G)$ the set of nodes $\{ i :\ v \in X_{i}
\}$ form a subtree of $T$.
\end{enumerate}

The {\em width} of a tree decomposition $(T,\{ X_{i}:\ i \in V(T)
\})$ equals $\max_{i \in V(T)} \{|X_{i}| - 1\}$. The {\em
  treewidth} of a graph $G$ is the minimum width over all tree
decompositions of $G$. We use the notation $tw(G)$ to denote the
treewidth of a graph $G$.

\subsection{Block, canonical SP-tree and extended SP-decomposition}

Without loss of generality, we may assume~\cite{BodlaenderF96}
that an SP-tree satisfies the following conditions: (1) an S-node
does not have another S-node as a child; each child of an S-node
is either a P-node or a leaf; and (2) a P-node has exactly two
children --- see Figure~\ref{fig:sp-tree}.

\begin{figure}[htbh]
\begin{center}
\includegraphics[scale=0.8]{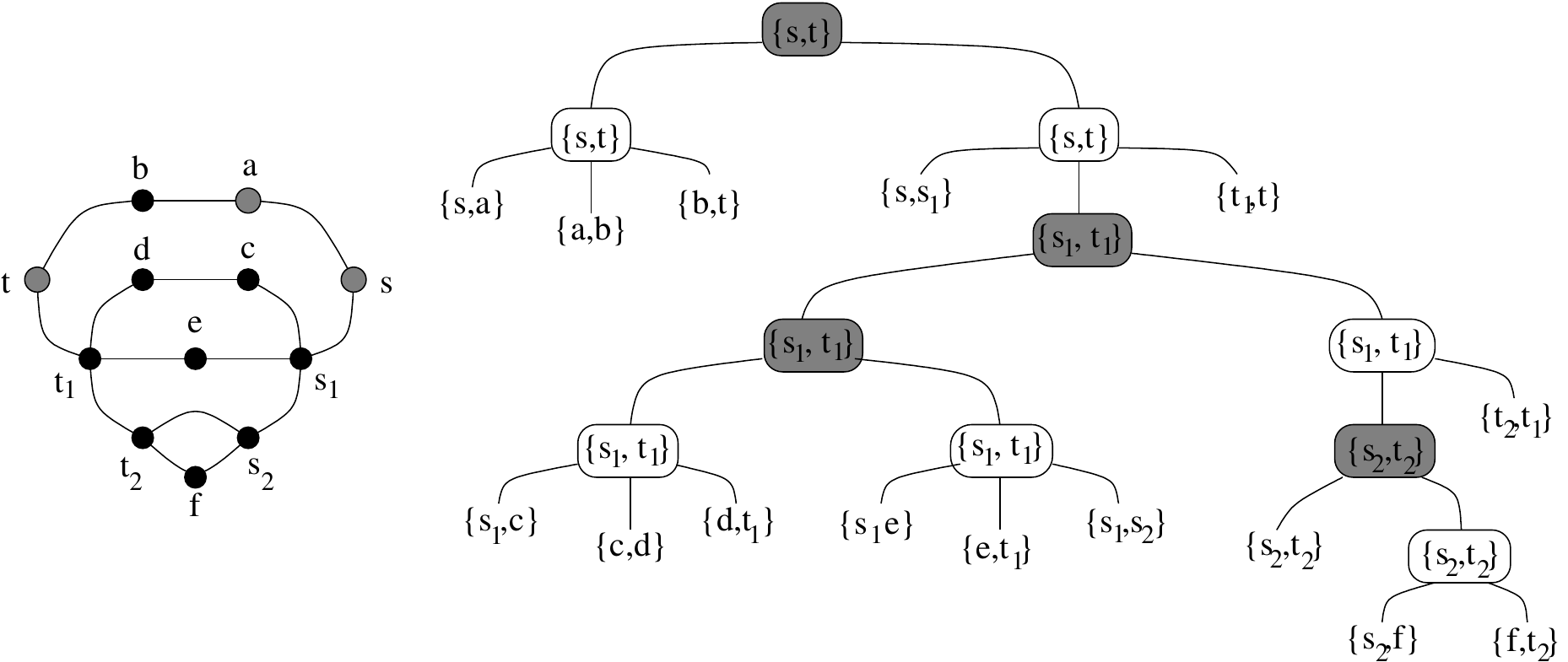}
\end{center}
\caption{A canonical SP-tree.  P-nodes are coloured grey and
  S-nodes are coloured white. Observe that as P-nodes are binary
  and may have a P-node as a child, while S-nodes do not have
  any S-node as a child, conditions (1) and (2) are satisfied.}
\label{fig:sp-tree}
\end{figure}

By Lemma~\ref{biconnectedSP}, we may further assume that for a biconnected series-parallel graph \(G\) and any fixed vertex $s\in V(G)$, (3) \(G\) has an SP-tree
whose root is a P-node with $s$ as one of its two terminals. We say that an SP-tree is {\em canonical} if it
satisfies the conditions (1) and (2), and also (3) when $G$ is
biconnected.

\begin{lemma}\label{biconnectedSP}\textup{\textbf{\cite{Eppstein92}}}
  Let $G$ be a series-parallel graph, and let \(s,t\) be two
  vertices in \(G\). Then $G$ is an SP-graph with terminals $s$
  and $t$ if and only if $G+(s,t)$ is an SP-graph. Moreover, if
  \(G\) is biconnected, then the last operation is a parallel
  join.
\end{lemma}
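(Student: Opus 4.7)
The plan is to prove the biconditional in the two directions, then the moreover clause.

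\textbf{Forward direction ($(G,s,t)$ SP $\Rightarrow G+(s,t)$ SP).} This is immediate from the definition: take the SP-tree of $(G,s,t)$ and form a new SP-tree whose root is a P-node with two children, namely the given SP-tree and a leaf representing the single edge $(s,t)$. The resulting tree witnesses $G+(s,t)$ as a two-terminal SP-graph with terminals $s,t$.

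\textbf{Backward direction ($G+(s,t)$ SP $\Rightarrow (G,s,t)$ SP).} Here the content is that if $G+(s,t)$ admits an SP-decomposition for \emph{some} choice of terminals, then $(G,s,t)$ is SP for the \emph{specific} pair $s,t$. Let $H:=G+(s,t)$. I would establish the following stronger structural claim by induction on the number of leaves of an SP-tree of $H$: if $H$ is a two-terminal SP-graph containing the edge $e=(s,t)$, then there exists an SP-tree of $H$ whose root is a P-node with terminals $\{s,t\}$, having the leaf for $e$ as one of its children. Once this is shown, simply deleting this leaf from the SP-tree (and contracting the unary P-node if only one child remains) produces an SP-tree for $(G,s,t)$.

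For the induction, pick any SP-tree of $H$ with some terminals $\{u,v\}$. If the root is a series composition $H_1\cdot H_2$ joined at an internal vertex $w$, then $e$ lies in one side, say $H_1$, whose terminals are $\{u,w\}$; apply induction to $(H_1,s,t)$ and then reassociate the series product with $H_2$, using the fact that associativity of series and commutativity of parallel compositions let us ``rotate'' the root toward the subtree containing $e$. If the root is a parallel composition $H_1\parallel H_2$ with terminals $\{u,v\}$, then $e$ lies in one component, say $H_1$; applying induction gives an SP-tree of $(H_1,s,t)$ with $e$ as a leaf-child of a P-node root; reassociating the parallel products (which is sound since parallel join is associative and commutative) then places $\{s,t\}$ at the root and keeps $e$ as a leaf-child. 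The main obstacle is the case where $\{s,t\}\neq\{u,v\}$ at a parallel node: one must verify that after reorganizing, the other parallel component $H_2$ can be re-expressed as an SP-subgraph that composes correctly under $\{s,t\}$; this is handled by pushing $H_2$ into a subtree hanging off $s,t$ inside the reorganized $H_1$-tree, and this is precisely where one uses that $e$ lies in the P-node rooted at $\{s,t\}$ in the new decomposition.

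\textbf{Moreover clause.} If $G$ is biconnected, then $G+(s,t)$ is biconnected as well (adding an edge cannot destroy biconnectedness). In any SP-tree, a series composition $H_1\cdot H_2$ creates a cut vertex at the shared terminal, so the root of an SP-tree of a biconnected SP-graph cannot be an S-node. Therefore the root of the SP-tree of $G+(s,t)$ produced above is a P-node, i.e.\ the last operation is a parallel join, as required.
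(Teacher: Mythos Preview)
The paper does not prove this lemma; it is quoted from \cite{Eppstein92} without argument, so there is no in-paper proof to compare against.

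Your forward direction is correct, and the ``moreover'' clause argument (a series composition at the root creates a cut vertex, contradicting biconnectedness) is the standard one.

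The backward direction, however, has a genuine gap. In the inductive series case $H=H_1\cdot H_2$ joined at $w$, once you re-root $H_1$ at $\{s,t\}$ the piece $H_2$ is still attached at $w$, which need not equal $s$ or $t$; you give no concrete mechanism for splicing $H_2$ back in while keeping $\{s,t\}$ at the root. You flag the analogous difficulty in the parallel case (``the main obstacle is the case where $\{s,t\}\neq\{u,v\}$'') but do not resolve it; ``pushing $H_2$ into a subtree hanging off $s,t$'' is not an operation the SP grammar supports in general.

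More seriously, the backward implication as literally stated admits a counterexample. Let $G$ be the three-vertex path with edges $(a,b)$ and $(b,c)$; this is an SP-graph with terminals $a,c$. Take $(s,t)=(a,b)$. Then $G+(a,b)$ is an SP-graph (a double edge on $a,b$ in series with the edge $(b,c)$, with terminals $a,c$), but $(G,a,b)$ is \emph{not} a two-terminal SP-graph: the non-terminal vertex $c$ has degree~$1$, whereas every non-terminal vertex of a graph built by series and parallel compositions has degree at least~$2$. Hence no argument along your lines can succeed for the statement exactly as written. Eppstein's formulation effectively requires $G+(s,t)$ to be biconnected (equivalently, one states the backward direction for biconnected $G$, which is also the only setting in which the paper invokes it); under that hypothesis the result does hold, but the usual proof proceeds via the characterisation ``biconnected and $K_4$-minor-free'' or via an ear decomposition rather than by the tree-reassociation you sketch.
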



The following is a well-known characterization relating forbidden minors, treewidth, and series-parallel graphs~\cite{Bodlaender98,BodlaenderF96}.
\begin{lemma}\label{SPcharac}
Given a graph $G$, the followings are equivalent.
\begin{itemize}
\vspace{-0.2cm}
\item $G$ does not contain $K_4$ as a minor (That is, \(G\) is
  $K_4$-minor-free.).
\vspace{-0.2cm}
\item The treewidth of $G$ is at most two.
\vspace{-0.2cm}
\item Every block of $G$ is a series-parallel graph.
\end{itemize}
\end{lemma}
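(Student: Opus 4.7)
My plan is to prove the three-way equivalence by first reducing to the biconnected case, then handling the biconnected case via an inductive argument. The high-level organization is to show that each of the three properties is block-local, after which it suffices to prove the equivalence for biconnected graphs.

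Step 1 (reduction to biconnected graphs). I will observe that all three properties are ``block-local''. For property (3) this is by definition. For property (1), since $K_4$ is biconnected, any model of $K_4$ in $G$ lies entirely in a single block, so $G$ is $K_4$-minor-free iff every block is. For property (2), the treewidth of $G$ equals the maximum treewidth of its blocks: given tree decompositions of each block, one can glue them at the bags containing the shared cut vertices without increasing the maximum bag size. The single-edge case (a trivial block) is series-parallel, has treewidth $1$, and is $K_4$-minor-free, so it is consistent with the statement. Hence it suffices to prove the equivalence assuming $G$ itself is biconnected.

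Step 2 (easy directions for biconnected $G$). The implication (3) $\Rightarrow$ (2) proceeds by induction on the canonical SP-tree: the base case of a single edge has treewidth $1$; a series composition glues the decompositions at the shared terminal using a bag of size $2$, and a parallel composition glues at the pair of terminals, again with bags of size $2$. In both cases the resulting decomposition has width at most $2$. The implication (2) $\Rightarrow$ (1) follows from the minor-monotonicity of treewidth together with $tw(K_4)=3$: a $K_4$-minor would force $tw(G)\ge 3$, contradicting (2).

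Step 3 (hard direction (1) $\Rightarrow$ (3) for biconnected $G$). I will induct on $|V(G)|+|E(G)|$. If $G$ has a vertex $v$ of degree exactly $2$ with neighbors $u_1,u_2$, form $G'$ by suppressing $v$ (delete $v$ and add the edge $(u_1,u_2)$, allowing parallel edges). Then $G'$ is still biconnected and $K_4$-minor-free, and by induction $G'$ is a two-terminal SP-graph with terminals $u_1,u_2$ (by Lemma~\ref{biconnectedSP} we may choose these as terminals); reinserting $v$ realizes $G$ as a series composition with the edge $(u_1,u_2)$, so $G$ is SP. Otherwise $G$ has minimum degree at least $3$; here I invoke the classical structural fact that every biconnected graph of minimum degree $\ge 3$ on at least four vertices contains a $K_4$-subdivision. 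Since $\Delta(K_4)=3$, containing $K_4$ as a topological minor is equivalent to containing it as a minor (as noted in the preliminaries), contradicting (1). The only remaining cases are $G$ with at most three vertices and $G$ with two vertices joined by parallel edges; each is SP by direct inspection.

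The main obstacle is the structural fact in Step 3 that a biconnected graph of minimum degree at least $3$ contains a $K_4$-subdivision. One route is to take a longest cycle $C$ in $G$ and exploit biconnectivity together with the degree hypothesis to produce, via Menger-type arguments, three internally disjoint ``ears'' attached to $C$ whose endpoints determine the four branching nodes of a $K_4$-subdivision. Alternatively one can cite Duffin's classical characterization of series-parallel graphs or the treatment in Bodlaender's survey~\cite{Bodlaender98,BodlaenderF96}. Once this lemma is in hand, the suppression-induction closes the cycle of implications and completes the proof.
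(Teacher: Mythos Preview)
The paper does not actually prove Lemma~\ref{SPcharac}: it is stated as a ``well-known characterization'' and simply cited to~\cite{Bodlaender98,BodlaenderF96}. So there is no proof in the paper to compare against; your proposal goes well beyond what the authors do.

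Your outline is sound and standard. Step~1 and Step~2 are correct as stated. In Step~3 there are two things to flag. First, the sentence ``reinserting $v$ realizes $G$ as a series composition with the edge $(u_1,u_2)$'' is not what you mean: $G$ is not a series composition of $G'$ with an edge. What you want is that $G$ is obtained from the SP-graph $G'$ by subdividing the (possibly parallel) edge $(u_1,u_2)$, and subdivision preserves being series-parallel --- in any SP-tree of $G'$ the edge $(u_1,u_2)$ occurs as a leaf, and replacing that leaf by an S-node with two edge children yields an SP-tree of $G$. The appeal to Lemma~\ref{biconnectedSP} for choosing $u_1,u_2$ as terminals is unnecessary here. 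Second, and more substantively, the heart of your argument is the assertion that every biconnected graph of minimum degree at least $3$ contains a $K_4$-subdivision. This is true, but you do not prove it; the ``longest cycle plus Menger'' sketch you offer is plausible but not a proof, and the alternative you mention --- citing Duffin or~\cite{Bodlaender98,BodlaenderF96} --- is essentially the same black box the paper already invokes, at which point the induction is superfluous. If you want a genuinely self-contained argument, the cleanest route is to show directly that any $K_4$-minor-free (multi)graph is $2$-degenerate (e.g.\ via an edge count from a width-$2$ tree decomposition, which you have from Step~2 applied inductively, or by a direct reduction argument), which immediately produces the degree-$2$ vertex you need and closes the loop without the separate structural lemma.
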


It is well-known that one can decide whether $tw(G)\leqslant 2$ in linear time~\cite{VTL82}. It is not difficult to see that in linear time we can also construct an extended SP-decomposition of $G$. Though the next lemma is straightforward, we sketch the proof for completeness.

\begin{lemma}\label{testSP}
  Given a graph $G$, one can decide whether $tw(G)\leqslant 2$ (or
  equivalently, whether $G$ is $K_4$-minor-free) in linear time. Further, we can
  construct an extended SP-decomposition of \(G\) in linear time
  if $tw(G)\leqslant 2$.
\end{lemma}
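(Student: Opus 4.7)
The plan is to build the extended SP-decomposition bottom-up: first decompose $G$ into its blocks, then compute a canonical SP-tree for each block, and finally glue these SP-trees together using cut-node gadgets exactly as in the definition. Each stage will be carried out in linear time, and the linearity of the whole procedure follows because every edge of $G$ lies in exactly one block.

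First I would compute the block tree $\mathcal{B}_G$ in linear time using the classical Hopcroft--Tarjan biconnected-components algorithm~\cite{HopcroftT73}. This yields the cut vertices of $G$, a list of blocks $B_1,\dots,B_r$, and the incidences between them. For each block $B_i$ separately I would invoke the linear-time series-parallel recognition algorithm of Valdes, Tarjan and Lawler~\cite{VTL82}: if some block fails the test then by Lemma~\ref{SPcharac} we have $tw(G)>2$ and we report this; otherwise all blocks are series-parallel, so $tw(G)\leqslant 2$. The total running time so far is $O(\sum_i(|V(B_i)|+|E(B_i)|))=O(n+m)$, which is linear since blocks partition the edges.

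Next I would turn each SP-tree produced by VTL82 into a \emph{canonical} SP-tree, meaning a tree satisfying conditions (1)--(2) of the appendix, plus condition (3) at the root of every biconnected block. Condition (1) is ensured by contracting S-node/S-node parent-child pairs into a single S-node; condition (2) is ensured by replacing each P-node with $k>2$ children by a left-combed sequence of $k-1$ binary P-nodes. Both operations can be performed in one bottom-up sweep of the tree in time linear in its size. To enforce condition (3) for a biconnected block, I would use Lemma~\ref{biconnectedSP}: since $G[B_i]$ is biconnected, its recursive SP construction can be re-rooted so that the final operation is a parallel join whose terminals include any prescribed vertex, in particular $c(B_i)$, the cut vertex specified by the orientation of $\vec{\mathcal{B}}_G$. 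This re-rooting is essentially a standard tree-rewriting step of size linear in the tree~\cite{BodlaenderF96}, so it costs $O(|V(B_i)|+|E(B_i)|)$ per block.

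Finally, I would orient $\mathcal{B}_G$ outward from an arbitrary cut node $c_{\mathrm{root}}$ (using BFS in linear time), create one cut node $\alpha$ with label $\{c\}$ for each cut vertex $c$, and then splice the canonical SP-trees into the oriented block tree exactly as prescribed by the four bullet points in the definition of an extended SP-decomposition: make the root of $T^{B}$ a child of the unique cut node labelled $\{c(B)\}$, and for each cut vertex $c\neq c_{\mathrm{root}}$ attach its cut node as a child of some leaf of $T^{B(c)}$ that contains $c$ in its label (such a leaf exists because $c\in V(B(c))$ appears in the label of some edge-node of the SP-tree of $B(c)$). Locating such a leaf can be precomputed while building each $T^{B}$. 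The main technical obstacle, minor in this sketch, is verifying that all re-rooting and gluing operations preserve the invariants of canonical SP-trees; but this is exactly what Lemma~\ref{biconnectedSP} together with the standard construction of~\cite{BodlaenderF96} guarantees. Summing over blocks and cut vertices, the whole construction runs in $O(n+m)$ time.
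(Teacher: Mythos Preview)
Your proposal is correct and follows essentially the same approach as the paper's own proof: Hopcroft--Tarjan for the block decomposition, Valdes--Tarjan--Lawler on each block together with Lemma~\ref{SPcharac} to decide $tw(G)\le 2$, a linear-time conversion to canonical SP-trees, and then the straightforward gluing along the oriented block tree. The paper's proof is in fact terser than yours---it simply asserts that each of these steps is linear and cites \cite{HopcroftT73,VTL82,BodlaenderF96}---whereas you spell out how conditions (1)--(3) are enforced and why the block sizes sum to $O(n+m)$; this extra detail is all sound.
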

\begin{proof}
  The classical algorithm due to Hopcroft and
  Tarjan~\cite{HopcroftT73} identifies the blocks and cut vertices
  of $G$ in linear time. Due to Lemma~\ref{SPcharac}, testing
  $tw(G)\leq 2$ reduces to testing whether each block of $G$ is a
  series-parallel graph. It is known~\cite{VTL82} that the
  recognition of a series-parallel graph and the construction of
  an SP-decomposition can be done in linear time. Further, an
  SP-decomposition can be transformed into a canonical
  SP-decomposition in linear time. Given an oriented block tree
  $\vec{\mc{B}}_G$ and a canonical SP-decomposition for every
  block, we can construct the extended SP-decomposition in linear
  time, and the statement follows.
\end{proof}

\begin{figure}[htbh]
\begin{center}
\includegraphics[scale=0.75]{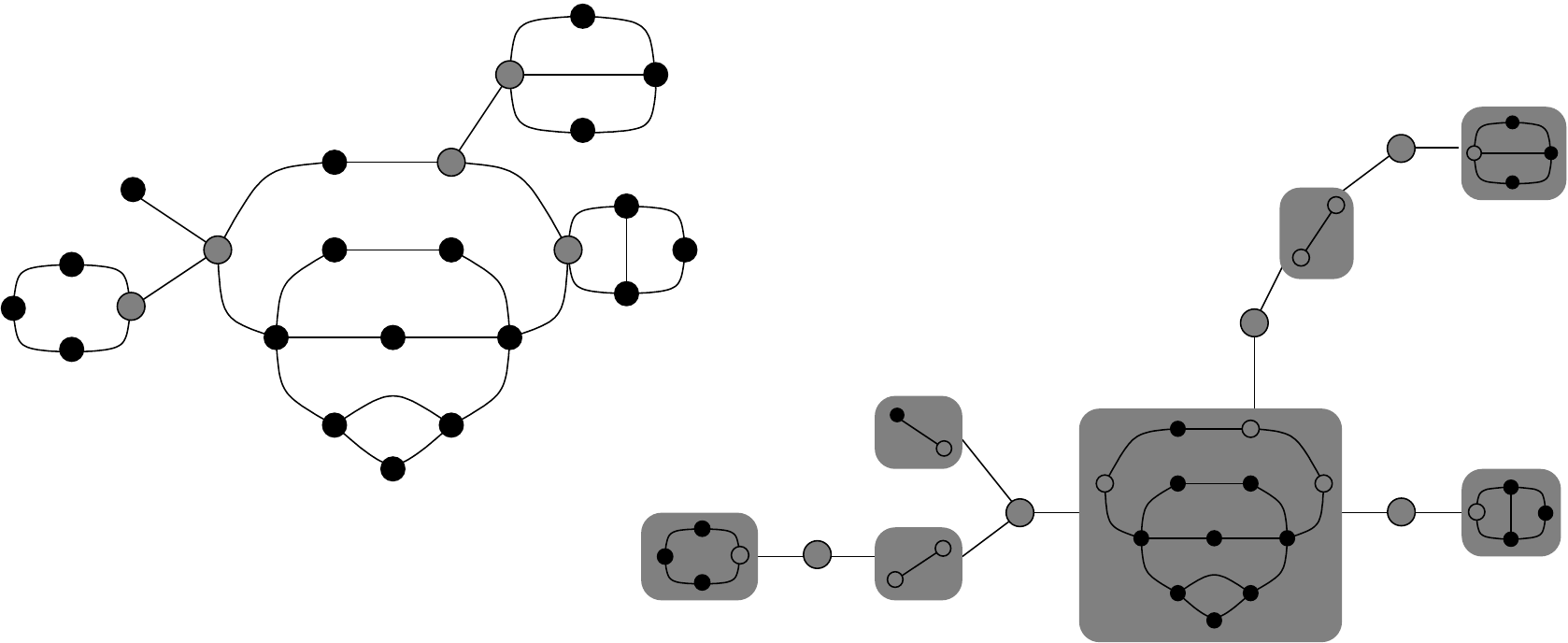}
\end{center}
\caption{A $K_4$-minor-free graph $G$ and its block tree $\mc{B}_G$.}
\label{fig:block-tree}
\end{figure}

\section{Proof of Generic disjoint protrusion rule}



\begin{definition}[$t$-Boundaried Graphs]
A \emph{$t$-boundaried graph} is a graph $G=(V,E)$ with $t$ distinguished vertices, uniquely labeled from 1 to $t$. The set $\partial(G)\subseteq V$ of labeled vertices is called the \emph{boundary} of $G$. The vertices in $\partial(G)$ are referred to as \emph{boundary vertices} or \emph{terminals}.
\end{definition}

\begin{definition} [Gluing by $\oplus$] Let $G_1$ and $G_2$ be two
  $t$-boundaried graphs. We denote by $G_1\oplus G_2$ the
  $t$-boundaried graph such that: its vertex set is obtained by
  taking the disjoint union of $V(G_1)$ and $V(G_2)$, and
  identifying each vertex of $\partial(G_1)$ with the vertex of
  $\partial(G_2)$ having the same label; and its edge set is the
  union of $E(G_1)$ and $E(G_2)$. (That is, we glue $G_1$ and
  $G_2$ together on their boundaries.)
\end{definition}

Many graph optimization problems can be rephrased as a task of
finding an optimal number of vertices or edges satisfying a
property expressible in Monadic Second Order logic (MSO). A
parameterized graph problem $\Pi \subseteq \Sigma^* \times
\mathbb{N}$ is given with a graph $G$ and an integer $k$ as an
input. When the goal is to decide whether there exists a subset
$W$ of at most $k$ vertices for which an MSO-expressible property
$P_{\Pi}(G,W)$ holds, we say that $\Pi$ is a \mMSO ~graph
problem. When $P_{\Pi}(G,\emptyset)$ holds, we write that $P_{\Pi}(G)$ holds (or that $G$ satisfies $P_{\Pi}$). In the (parameterized) \emph{disjoint version} $\Pi^d$ of
a \mMSO ~problem $\Pi$, we are given a triple $(G,S,k)$, where $G$
is a graph, $S$ a subset of $V(G)$ and $k$ the parameter, and we
seek for a solution set $W$ which is disjoint from $S$, and whose
size is at most $k$. The fact that a set $W$ is such a solution is
expressed by the MSO-property $P_{\Pi^d}(G,S,W):P_{\Pi}(G,W) \wedge
(S\cap W = \emptyset)$.


\begin{definition}
  For a disjoint parameterized problem $\Pi^d$ and two
  $t$-boundaried graphs~\footnote{We use this notation
    since later in this section, $G_p$ plays the role of a (large)
    \textbf{p}rotrusion and $G_r$, its \textbf{r}eplacement.} $G_p$ and $G_r$,
  we say that $G_p\equiv_{\Pi^d} G_r$ if there exists a constant
  $c$ such that for all $t$-boundaried graphs $G$, for every vertex set
  $S\subseteq V(G)\setminus \partial(G)$, and for every integer
  $k$,
\begin{center}
$(G_p\oplus G, S,k)\in \Pi^d$ if and only if $(G_r\oplus G,S,k+c)\in \Pi^d$
\end{center}

\end{definition}

\begin{definition} [Disjoint Finite integer index] \label{def:FII}
  For a disjoint parameterized graph problem $\Pi^d$, we say that
  $\Pi^d$ has {\em disjoint finite integer index} if the following
  property is satisfied: for every $t$, there exists a finite set
  $\mc{R}$ of $t$-boundaried graphs such that for every
  $t$-boundaried graph $G_p$ there exists $G_r\in \mc{R}$ with
  $G_p\equiv_{\Pi^d} G_r$. Such a set $\mc{R}$ is called a
  \emph{set of representatives} for $(\Pi^d,t)$.
\end{definition}

It is often convenient to pair up
a $t$-boundaried graph $G$ with a set $W\subseteq V(G)$ of
vertices. We define $\mc{H}_t$ to be the set of pairs $(G,W)$,
where $G$ is a $t$-boundaried graph and $W\subseteq V(G)$.  For an
\mMSO{} problem $\Pi$ and a $t$-boundaried graph $G$, we define
the {\em signature function} $\zeta_G:\mc{H}_t\rightarrow
\mathbb{N}\cup \{\infty\}$ as follows.
\begin{equation*}
\zeta_G((G',W'))=\left \{
\begin{array}{ll}
\infty & \text{ if } \nexists W \subseteq V(G) ~s.t.~ P_{\Pi}(G\oplus G',W\cup W') \\
\min_{W\subseteq V(G)} \{|W| : P_{\Pi}(G\oplus G',W\cup W')\} & \text{ otherwise}
\end{array} \right.
\end{equation*}
To ease the notation, we write $\zeta_G(G',W')$ to denote $\zeta_G((G',W'))$.

\begin{definition} [Strong monotonicity]
  A \mMSO{} problem $\Pi$ is said to be {\em strongly monotone} if
  there exists a function $f: \mathbb{N} \rightarrow \mathbb{N}$
  satisfying the following condition: for every $t$-boundaried
  graph $G$, there exists a set $W_G \subseteq V(G)$ such that for
  every $(G',W')\in \mc{H}_t$ with finite value $\zeta_G(G',W')$,
  $P_{\Pi}(G\oplus G',W_G\cup W')$ holds and $|W_G|\leqslant
  \zeta_G(G',W')+f(t)$.
\end{definition}

Bodlaender et al. show~\cite[proof of Lemma 13]{BodlaenderFLPST09}
that if $\mathcal{F}$ is a finite set of connected planar graphs,
then \textsc{$\mathcal{F}$-minor cover} problem is strongly
monotone. The following lemma is a corollary of this fact. We give
the proof for completeness.

\begin{lemma}
The $K_4$-\textsc{minor cover} problem is strongly monotone.
\end{lemma}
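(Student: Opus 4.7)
Since $K_4$ is a connected planar graph, the claim is a direct instance of the general result of Bodlaender et al.~\cite{BodlaenderFLPST09} quoted immediately above. My plan for a self-contained proof is to instantiate their construction for $\mathcal{F}=\{K_4\}$. For each $t$-boundaried graph $G$ with boundary $\pi = \partial(G)$, I would first build an augmented graph $\widehat{G} \supseteq G$ by attaching to $\pi$ a single bounded-size gadget $\Gamma_{\pi}$. The gadget is designed to be \emph{universal}: for every subset $Y \subseteq \pi$ of at most four vertices and every possible ``attachment pattern'' between them --- i.e.\ every way in which an unknown completion $G'$ could connect those vertices so as to participate in forming a cross-boundary $K_4$-minor --- $\Gamma_{\pi}$ already realises the same pattern. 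Since $|\pi| \le t$, the number of such patterns depends only on $t$, so $|V(\Gamma_{\pi})| \le h(t)$ for some computable function $h$. I would then define $W_G$ to be a minimum $K_4$-minor cover of $\widehat{G}$ contained in $V(G)$; one can keep $W_G$ inside $V(G)$ because any deletion inside $\Gamma_{\pi}$ can be pushed onto its attachment point in $\pi$ without enlarging the cover.

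The verification splits into two parts. For \emph{universality}, I would show that whenever $\zeta_G(G',W') < \infty$, the set $W_G \cup W'$ is a $K_4$-minor cover of $G \oplus G'$. Any hypothetical $K_4$-subdivision $H$ in $(G\oplus G') - (W_G \cup W')$ is connected by Observation~\ref{nocut}, so $H$ either lies entirely in $G - W_G$, entirely in $G' - W'$, or crosses $\pi$. The first case contradicts the fact that $W_G$ is a cover of $\widehat{G} \supseteq G$; the second contradicts $\zeta_G(G',W') < \infty$, which forces $W'$ alone to cover the part of any solution that lies in $G'$; and in the third, the boundary vertices of $H$ together with its paths inside $G'$ realise one of the cross-boundary patterns already present inside $\Gamma_{\pi}$ by construction, yielding a $K_4$-minor in $\widehat{G} - W_G$, contradicting the choice of $W_G$. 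For the \emph{size bound}, given any $(G',W')$ with $\zeta_G(G',W') = m$, let $W^{\star} \subseteq V(G)$ be an optimal witness of size $m$. Then $W^{\star}$ is in particular a $K_4$-minor cover of $G$; I would show that it can be augmented by at most $f(t)$ additional vertices to obtain a $K_4$-minor cover of $\widehat{G}$, because $\Gamma_{\pi}$ has constant size and hence a $K_4$-minor cover of size at most $h(t)$. Minimality of $W_G$ then yields $|W_G| \le m + f(t)$.

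The main obstacle will be the explicit design of $\Gamma_{\pi}$ so that the correspondence used in the universality step is tight --- every cross-boundary $K_4$-minor in $G \oplus G'$ must lift to a $K_4$-minor in $\widehat{G}$. I expect this to follow from an enumeration of the finitely many ``boundary behaviours'' modulo the equivalence $\equiv_{\Pi^d}$ of Definition~\ref{def:FII}, whose finiteness is precisely the disjoint finite integer index property of $K_4$-\textsc{minor cover} --- the same structural property that underlies the generic protrusion rule (Reduction Rule~\ref{rrule:protrusion}) used elsewhere in the paper. Once this enumeration is in place, the two verification steps above become routine, and the function $f(t) = h(t)$ can be read off from the construction.
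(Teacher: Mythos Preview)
Your plan is dramatically more complicated than necessary, and in its present form it is also circular. The paper's proof is three lines: take a minimum $K_4$-minor cover $W$ of $G$ alone and set $W_G := W \cup \partial(G)$. Since removing the boundary disconnects $G$ from $G'$ in $G\oplus G'$, the graph $(G\oplus G')-(W_G\cup W')$ is the disjoint union of a subgraph of $G-W$ and a subgraph of $G'-W'$; the first is $K_4$-minor-free by choice of $W$, and the second is $K_4$-minor-free because $\zeta_G(G',W')<\infty$ yields some $W^\star\subseteq V(G)$ with $(G\oplus G')-(W^\star\cup W')$ $K_4$-minor-free, hence $G'-W'$ is as well. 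For the size bound, that same $W^\star$ is in particular a $K_4$-minor cover of $G$, so $|W|\le |W^\star|=\zeta_G(G',W')$ and $|W_G|\le \zeta_G(G',W')+t$, giving $f(t)=t$. No gadget $\Gamma_\pi$, no enumeration of boundary patterns, no case analysis of cross-boundary subdivisions.

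Beyond being unnecessarily elaborate, your proposal has a genuine gap. You justify the existence and finiteness of your universal gadget by appealing to the equivalence $\equiv_{\Pi^d}$ of Definition~\ref{def:FII} and the disjoint finite integer index property. But in this paper that property is \emph{derived from} strong monotonicity (Lemma~\ref{lem:disjointFII}), so you cannot invoke it here without circularity. You could try to fall back on Courcelle-style finite-state arguments for the gadget construction, but that is exactly the kind of machinery the simple boundary-deletion trick renders superfluous. A secondary issue: your ``second case'' in the universality check asserts that $\zeta_G(G',W')<\infty$ forces $W'$ alone to cover $G'$, but a $K_4$-subdivision inside $G'-W'$ that meets the boundary could still be hit by the $V(G)$-side set $W^\star$ witnessing finiteness, so that step needs more care even on its own terms.
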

\begin{proof}
Let $G$ be a $t$-boundaried graph and $\partial(G)$ be its boundary. Let $W\subseteq V(G)$ be a minimum size vertex subset such that $G[V\setminus W]$ is $K_4$-minor-free. Define $W_G=W\cup\partial(G)$. Then for every pair $(G',W')\in\mathcal{H}_t$ such that $\zeta_G(G',W')$ is finite, $W_G\cup W'$ is a \KC{} of $G\oplus G'$ and moreover by construction $|W_G|\leqslant \zeta_G(G',W')+t$.
\end{proof}

\begin{lemma}\label{lem:disjointFII}
Let $\Pi$ be a strongly monotone \mMSO{} problem. Then its disjoint version $\Pi^d$ has disjoint finite integer index.
\end{lemma}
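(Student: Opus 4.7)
My plan is to reduce disjoint finite integer index for $\Pi^d$ to the (non-disjoint) finite integer index result for strongly monotone \mMSO{} problems, which is already known to follow from strong monotonicity~\cite{BodlaenderFLPST09}. The key structural observation, which is what makes this reduction clean, is that in the definition of $\equiv_{\Pi^d}$ the forbidden set $S$ lies entirely inside $V(G)\setminus\partial(G)$ and therefore never meets the replacement $G_p$. Given a $t$-boundaried graph $G_p$ and an ambient instance $(G,S)$, any feasible solution $W\subseteq V(G_p\oplus G)$ for the disjoint problem may thus be split, without loss of generality, into an \emph{interior} part $U\subseteq V(G_p)\setminus\partial(G_p)$ and an \emph{outer} part $W''\subseteq V(G)$ with $W''\cap S=\emptyset$: a boundary vertex of $G_p$ can always be booked on the $G$-side without violating the $S$-avoidance constraint. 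Introducing the interior signature
\[
\eta_{G_p}(G',W') \;:=\; \min\bigl\{\,|U|\,:\,U\subseteq V(G_p)\setminus\partial(G_p),\; P_{\Pi}(G_p\oplus G',\,U\cup W')\,\bigr\},
\]
this yields the decomposition $\min\{|W|:W\cap S=\emptyset,\,P_{\Pi}(G_p\oplus G,W)\} \,=\, \min_{W''\subseteq V(G),\,W''\cap S=\emptyset}\bigl[\,|W''|+\eta_{G_p}(G,W'')\,\bigr]$.

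Next, I would derive from this decomposition the sufficient condition that if two $t$-boundaried graphs $G_p$ and $G_r$ satisfy $\eta_{G_r}(G',W')-\eta_{G_p}(G',W')=c$ for some fixed constant $c$ whenever both values are finite, then $G_p\equiv_{\Pi^d} G_r$ with the same offset $c$. It therefore suffices to show that the equivalence ``interior signatures $\eta$ differ by a fixed constant'' on $t$-boundaried graphs has finite index, whereupon one picks a minimum-size representative from each class to form $\mathcal{R}$. For this I would mimic the proof of ``strongly monotone $\Rightarrow$ finite integer index'' of~\cite{BodlaenderFLPST09}, replacing the outer signature $\zeta$ by the interior signature $\eta$: the interior constraint $U\cap \partial(G_p)=\emptyset$ is MSO-expressible once the boundary is treated as a distinguished unary predicate, and strong monotonicity of $\Pi$ transfers to its interior variant because any witness set $W_G$ produced by strong monotonicity can be enlarged by $\partial(G)$ at an additional cost of at most $t$, which is absorbable into the function $f$.

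The main difficulty I foresee is in transferring strong monotonicity cleanly to the interior signature and in verifying that the offset $\eta_{G_r}-\eta_{G_p}$ is truly uniform in $(G',W')$ once $G_p$ and $G_r$ share the same MSO-type of sufficiently large quantifier rank. Both facts should follow from a careful reading of~\cite{BodlaenderFLPST09}: strong monotonicity provides, for each $G_p$, a single near-optimal witness set simultaneously valid across all $(G',W')$, which pins the offset down to depend only on the MSO-type; and MSO-types of bounded quantifier rank of $t$-boundaried graphs of bounded treewidth---the regime in which we apply this result, since protrusions have treewidth at most $t$---form a finite set. Combining these two ingredients produces the required finite set $\mathcal{R}$ of representatives and hence yields disjoint finite integer index for $\Pi^d$.
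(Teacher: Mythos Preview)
Your decomposition via the interior signature $\eta_{G_p}$ is correct and the reduction ``$\eta_{G_r}-\eta_{G_p}$ constant $\Rightarrow G_p\equiv_{\Pi^d}G_r$'' is valid. The gap is in the last step, where you claim that strong monotonicity transfers to the interior variant. Your justification---``any witness set $W_G$ produced by strong monotonicity can be enlarged by $\partial(G)$ at an additional cost of at most $t$''---does not give what you need: the set $W_G\cup\partial(G)$ still contains the boundary and is therefore \emph{not} interior, and restricting $W_G$ to $W_G\setminus\partial(G)$ need not preserve $P_\Pi(G\oplus G',\,\cdot\cup W')$ since $P_\Pi$ is not assumed monotone under enlarging the solution set. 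What you actually require for the finite-index argument is a two-sided bound $|W^*_{G_p}|-f'(t)\le\eta_{G_p}\le|W^*_{G_p}|$ with an \emph{interior} universal witness $W^*_{G_p}$; strong monotonicity only supplies such a witness in all of $V(G_p)$, and I do not see how to convert it without an extra hypothesis on $P_\Pi$. (Your remark about bounded treewidth is also beside the point: it is needed neither for the finiteness of $\sim_\Pi$, which follows from MSO finite-state, nor does it help bound the range of $\eta$.)

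The paper sidesteps this difficulty by never introducing $\eta$. It works entirely with the ordinary signature $\zeta_{G_p}$ (minimising over all $W\subseteq V(G_p)$), uses strong monotonicity to bound the range of $|W_{G_p}|-\zeta_{G_p}$ and thereby obtains a finite-index relation $\sim_{\mc{R}}$, and then proves directly that $\sim_{\mc{R}}$ refines $\equiv_{\Pi^d}$. The crux is this: take an optimal disjoint solution $Z$ on $G_p\oplus G$ and set $W=Z\cap V(G)$, $W_p=Z\setminus W$. Although $W_p$ is automatically interior, one shows $|W_p|=\zeta_{G_p}(G',W')$ for the $\sim_\Pi$-representative $(G',W')$ of $(G,W)$: any $W'_p\subseteq V(G_p)$ with $|W'_p|<|W_p|$ realising $\zeta$ (even one touching the boundary) would produce a strictly smaller $S$-disjoint solution $W'_p\cup W$, because $S\subseteq V(G)\setminus\partial(G)$ forces $W'_p\cap S=\emptyset$. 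This is exactly the observation you used for your decomposition, but deployed on the $\zeta$ side it eliminates the need for an interior version of strong monotonicity.
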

\begin{proof}
We consider the following equivalence relation $\sim_{\Pi}$ on $\mc{H}_t$:
$(G,W)\sim_{\Pi} (G',W')$ if and only if for every $(G_p,W_p)\in\mc{H}_t$ we have
$$P_{\Pi} (G_p\oplus G, W_p \cup W) \Leftrightarrow P_{\Pi}(G_p\oplus G', W_p\cup W')$$

Since $P_{\Pi}$ is an MSO-property, it has a finite state
property of $t$-boundaried
graphs~\cite{Cou90}. \todo[disable,inline]{What is an extended MSO
  property, and why is this implication true? We should give a
  reference. Is~\cite{Cou90} a valid reference for this? --
  Philip}That is, there exists a finite set $\mc{S}\subseteq
\mc{H}_t$ with the property that for every pair $(G,W)\in
\mc{H}_t$, there exists a pair $(G',W')\in\mc{S}$ with
$(G,W)\sim_{\Pi} (G',W')$.

Let $G_p$ be a $t$-boundaried graph. By the definition of strong
monotonicity, there exists $W_{G_p}\subseteq V(G_p)$ such that for
every $(G,W)\in \mc{H}_t$ with finite value $\zeta_{G_p}(G,W)$,
$P_{\Pi}(G_p\oplus G,W_{G_p}\cup W)$ holds, and
$|W_{G_p}|\leqslant \zeta_{G_p}(G,W)+f(t)$. Observe also that by
definition of the function $\zeta_{G_p}$,
$\zeta_{G_p}(G,W)\leqslant |W_{G_p}|$. It follows that
\begin{equation}\label{eq:zeta}
|W_{G_p}|-f(t) \leqslant \zeta_{G_p}(G,W)\leqslant |W_{G_p}|
\end{equation}

We define the equivalence relation $\sim_{\mc{R}}$ on
$t$-boundaried graphs as follows: $G_p\sim_{\mc{R}} G_r$ if and
only if there exist sets $W_{G_p}\subseteq V(G_{p})$ and
$W_{G_r}\subseteq V(G_{r}$ meeting the condition of strong
monotonicity such that for every $(G,W)\in\mc{S}$ we have
\begin{equation} \label{eq:cr}
|W_{G_p}|-\zeta_{G_p}(G,W)=|W_{G_r}|-\zeta_{G_r}(G,W)
\end{equation}.

By~(\ref{eq:zeta}) and the finiteness of $\mc{S}$, there exists a set $\mathcal{R}$ of at most $(f(t)+2)^{|\mc{S}|}$ $t$-boundaried graphs such that for every $t$-boundaried graph $G_p$, there exists $G_r\in\mathcal{R}$ with $G_p\sim_{\mc{R}} G_r$.

Let $G_p$ and \(G_{r}\) be $t$-boundaried graphs such that
$G_p\sim_{\mc{R}} G_r$. As a consequence of (\ref{eq:cr}), there
is a constant $c_r=|W_{G_{p}}|-|W_{G_{r}}|$ (which depends only on
$G_p$ and $G_r$) such that $\zeta_{G_p}(G,W)=\zeta_{G_r}(G,W)+c_r$
for every $(G,W)\in \mc{S}$. The rest of the proof is devoted to
the following claim:

\begin{claim} \label{cl:FII} For two $t$-boundaried graphs $G_p$
  and $G_r$, if $G_p\sim_{\mc{R}} G_r$ then $G_p\equiv_{\Pi^d}
  G_r$. Specifically, for every $t$-boundaried graph $G$ and $S\in
  V(G)\setminus \partial(G)$, we have
$$(G_p\oplus G, S,k)\in \Pi^d \mbox{ if and only if~} (G_r\oplus G,S,k-c_r)\in \Pi^d$$
\end{claim}
\begin{proofof}
  We only prove the forward direction, the reverse follows with
  symmetric arguments.
  \todo[disable,inline]{This symmetry is \textbf{not}
    obvious to me, since the above relation, \emph{and} the
    relation \(\equiv_{\Pi^d}\) are not symmetric. -- Philip}
  Suppose that $(G_p\oplus G, S,k)\in \Pi^d$. Consider $Z\subseteq
  V(G_p\oplus G)$ such that $Z\cap S=\emptyset$,
  $P_{\Pi}(G_p\oplus G,Z)$ is satisfied and $Z$ has the minimum
  size. We denote $W=Z\cap V(G)$ and $W_p=Z\setminus W$. Observe
  that since \(P_{\Pi}(G_p\oplus G,Z)\) holds, $P_{\Pi}(G_p\oplus
  G,W_p\cup W)$ also holds.

  Let us consider $(G',W')\in\mc{S}$ such that $(G,W)\sim_{\Pi}
  (G',W')$. We first prove that $|W_p|=\zeta_{G_p}(G',W')$. Since
  $P_{\Pi}(G_p\oplus G,W_p\cup W)$ holds and $(G,W)\sim_{\Pi}
  (G',W')$, we have that $P_{\Pi}(G_p\oplus G',W_p\cup W')$
  holds. Hence \(|W_{p}|\geq\zeta_{G_p}(G',W')\). For the sake of
  contradiction, assume that there exists $W'_p\subseteq V(G_{p})$
  such that $|W'_p|<|W_p|$ and $P_{\Pi}(G_p\oplus G',W'_p\cup W')$
  holds. Since $(G,W)\sim_{\Pi} (G',W')$, $P_{\Pi}(G_p\oplus
  G,W'_p\cup W)$ is satisfied.  As $W\cap W_p=\emptyset$, we have
  $|W'_p\cup W|<|Z|$; this contradicts the choice of $Z$.

  Since $G_p\sim_R G_r$ and $(G',W')\in\mc{S}$, there exists
  $W_r\subseteq V(G_r)$ such that $P_{\Pi}(G_r\oplus G',W_r\cup
  W')$ holds and $|W_r|=|W_p|-c_r$. And finally, $(G,W)\sim_{\Pi}
  (G',W')$ implies that $P_{\Pi}(G_r\oplus G,W_r\cup W)$.

To conclude the proof observe first that $S\subseteq V(G)\setminus \partial(G)$ implies that $(W_r\cup W)\cap S=\emptyset$. Moreover we have
$$|W_r\cup W|\leqslant |W_r|+|W|=|W_p|-c_r+|W|= |Z|-c_r\leqslant k-c_r$$

It follows that $(G_r\oplus G,S,k-c_r)\in \Pi^d$.
\end{proofof}

By Claim~\ref{cl:FII}, we conclude that $\mc{R}$ is a set of representatives for $(\Pi^d, t)$ and thus the disjoint version $\Pi^d$ of a strongly monotone \mMSO{} problem $\Pi$ has disjoint finite integer index.
\end{proof}

\todo[disable,inline]{BLABLABLA introducing the idea of protrusion reduction rules, add a figure of a protrusion}

\begin{definition} \label{def:protrusion}
A subset $X$ of the vertex set of a graph $G$ is a \emph{$t$-protrusion} of $G$ if $tw(G[X])\leqslant t$ and $|\partial(X)|\leqslant t$.
\end{definition}


\begin{lemma}\label{safe:protrusion}
Let $\Pi^d$ be the disjoint version of a strongly monotone p-\textsc{min-MSO} problem $\Pi$. There exists a computable function $\gamma:\mathbb{N}\rightarrow \mathbb{N}$ and an algorithm that given:
\begin{itemize}
\item an instance $(G,S,k)$ of $\Pi^d$ such that $P_{\Pi}(G,S)$ holds
\item a $t$-protrusion $X$ of \(G\) such that $|X|>\gamma(2t+1)$ and
  $X\cap S=\emptyset$
\end{itemize}
in time $O(|X|)$ outputs an instance $(G',S,k')$ such that $|V(G')|<|V(G)|$, $k'\leq k$,  $(G',S,k')\in\Pi^d$ if and only if $(G,S,k)\in\Pi^d$, and $P_{\Pi}(G',S)$ holds.
\end{lemma}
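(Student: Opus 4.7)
The plan is to invoke the disjoint finite integer index of $\Pi^d$ established in Lemma~\ref{lem:disjointFII} and to replace the large protrusion by an equivalent small representative. Concretely, strong monotonicity of $\Pi$ gives, for each boundary size $s$, a finite set $\mathcal{R}_s$ of $s$-boundaried graphs such that every $s$-boundaried graph is $\equiv_{\Pi^d}$-equivalent to some $G_r\in\mathcal{R}_s$, with an offset $c_r$ provided by Claim~\ref{cl:FII}. Define $\gamma(s):=\max\{|V(G_r)|:G_r\in\mathcal{R}_s\}$. I would then view $G_p:=G[X]$ as a $(2t{+}1)$-boundaried graph with $\partial(X)$ as its distinguished vertices (labels padded arbitrarily if $|\partial(X)|<2t{+}1$).

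To identify the correct representative $G_r\in\mathcal{R}_{2t+1}$ in time $O(|X|)$, I would compute a tree decomposition of $G[X]$ of width at most $t$ in linear time (which exists because $X$ is a $t$-protrusion), and then add $\partial(X)$ to every bag to obtain a decomposition of width at most $2t+1$. A standard FII-based dynamic programme over this decomposition (exactly as in~\cite{BodlaenderFLPST09,FominLMPS11}) returns in time $O(|X|)$ both the representative $G_r$ and the offset $c_r$. I would then set $G':=(G-(X\setminus\partial(X)))\oplus G_r$, where the gluing identifies the boundary of $G_r$ with $\partial(X)$, and $k':=k-c_r$. The hypothesis $|X|>\gamma(2t+1)\ge|V(G_r)|$ yields $|V(G')|<|V(G)|$; the hypothesis $X\cap S=\emptyset$ ensures that $S\subseteq V(G')$ and that $G'[S]=G[S]$, so applying Claim~\ref{cl:FII} with the boundaried graph $G-(X\setminus\partial(X))$ as the ambient graph gives $(G,S,k)\in\Pi^d$ if and only if $(G',S,k')\in\Pi^d$.

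The main obstacle is to guarantee simultaneously that $k'\le k$ and that $P_\Pi(G',S)$ still holds, since these two side conditions are not delivered automatically by the generic FII machinery. For $k'\le k$, I would canonicalise the choice of representative inside each $\equiv_{\Pi^d}$-class by always picking one that minimises $|V(G_r)|$; the construction of $\mathcal{R}_{s}$ in the proof of Lemma~\ref{lem:disjointFII} via strong monotonicity then forces $c_r\ge 0$. For the preservation of $P_\Pi(G',S)$, I would observe that $G-S$ is $K_4$-minor-free and $X\cap S=\emptyset$, so $G_p$ itself is $K_4$-minor-free; I would then refine $\mathcal{R}_{2t+1}$ so that any $\equiv_{\Pi^d}$-class containing a $K_4$-minor-free graph is represented by a $K_4$-minor-free member (this is feasible because $K_4$-minor-freeness is a finite-state property of boundaried graphs and so can be folded into the DP signatures). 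With this refinement the replacement keeps $G'-S$ $K_4$-minor-free, i.e.\ $P_\Pi(G',S)$ holds, which is what the subsequent iterative-compression step requires. The last point is the most delicate, as it requires strengthening the equivalence of the generic protrusion machinery with an additional property-preservation constraint.
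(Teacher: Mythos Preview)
Your overall strategy coincides with the paper's: invoke the disjoint FII of Lemma~\ref{lem:disjointFII} and substitute the protrusion by a bounded-size representative. The paper's mechanism differs from yours only in that it does not run a DP over all of $G[X]$ and replace the whole protrusion. Instead it builds a nice tree decomposition of $G[X]$ of width~$t$, locates a deepest bag $b$ such that the subgraph $G_b$ below $b$ has between $\gamma(2t{+}1)$ and $2\gamma(2t{+}1)$ vertices, and replaces that bounded-size sub-protrusion $X'=V(G_b)$ via a precomputed finite lookup table $(\phi,\rho)$. This is also why the boundary size $2t{+}1$ (rather than~$t$) appears: the boundary of $X'$ in $G$ consists of the bag $b$ (at most $t{+}1$ vertices) together with $\partial(X)\cap X'$ (at most $t$ vertices). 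Either mechanism is standard and yields $O(|X|)$ time.

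Your two side-condition arguments, however, both have gaps.

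\emph{(i) $k'\le k$.} Picking a representative that minimises $|V(G_r)|$ does not force $c_r\ge 0$. The constant $c_r$ produced by Claim~\ref{cl:FII} equals $\zeta_{G_p}(G,W)-\zeta_{G_r}(G,W)$, a difference of \emph{signature} values (essentially optimal inner-solution sizes), and a graph with fewer vertices need not have a smaller signature. The paper instead selects in each class a representative minimising the signature, which directly yields $c_r\ge 0$ for every $G_p$ in the class.

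\emph{(ii) $P_\Pi(G',S)$.} Requiring merely that the representative $G_r$ be $K_4$-minor-free is not sufficient. What is needed is that $G_p$ and $G_r$ have the \emph{same gluing behaviour} with respect to $P_\Pi$: for every boundaried $H$, $P_\Pi(G_p\oplus H)\Leftrightarrow P_\Pi(G_r\oplus H)$; taking $H=(G-(X\setminus\partial(X)))-S$ then gives $P_\Pi(G',S)$ from $P_\Pi(G,S)$. Being in the same $\equiv_{\Pi^d}$-class does not deliver this: with $S'=V(H)\setminus\partial(H)$ and $k=0$ the equivalence only tells you that $G_r\oplus H$ can be made $K_4$-minor-free by deleting up to $c_r$ vertices of $G_r$, not that it already is. Your parenthetical (``finite-state property \ldots\ folded into the DP signatures'') is exactly the right fix, but notice that it amounts to refining the \emph{equivalence relation} itself, not just the choice of representative inside each $\equiv_{\Pi^d}$-class. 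This is precisely what the paper does: it refines $\sim_{\mathcal{R}}$ to $\sim_{\mathcal{R}^*}$ by additionally requiring, for all boundaried $H$, that $P_\Pi(G_p\oplus H)\Leftrightarrow P_\Pi(G_r\oplus H)$; the refined relation still has finite index because $P_\Pi$ is MSO-definable, and the representative set $\mathcal{R}^*$ (and hence $\gamma$) is taken with respect to this finer relation.
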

\begin{proof}
  Let $\sim_{\mathcal{R}}$ be the equivalence relation on
  $(2t+1)$-boundaried graphs defined in the proof of
  Lemma~\ref{lem:disjointFII}.
  We refine
  the equivalence relation $\sim_{\mc{R}}$ into $\sim_{\mc{R}^*}$
  according to whether a $(2t+1)$-boundaried graph satisfies $P_{\Pi}$.
  be precise, we have $G_p\sim_{\mc{R}^*} G_r$ if and only if (a)
  $G_p\sim_{\mc{R}} G_r$ {\em and} (b) for every $(2t+1)$-boundaried
  graph $H$: $P_{\Pi}(G_p\oplus H)$ 
  if and only $P_{\Pi}(G_r\oplus H)$
  We know that $\sim_{\mc{R}}$
  has finite index. 
  As $P_{\Pi}$ is an MSO-expressible
  graph property, the equivalence relation (b) has finite
  index~\cite{Cou90}. Therefore $\sim_{\mc{R}^*}$ also defines
  finitely many equivalence classes. We select a set $\mc{R}^*$ of
  representatives for $\sim_{\mathcal{R}^*}$ with one further
  restriction: Claim \ref{cl:FII} is satisfied for some {\em
    nonnegative} constant $c_r$. Such a set of representatives
  $\mc{R}^*$ can be constituted by picking up a representative
  $G_r$ for each equivalence class so that the constant
  $\zeta_{G_p}(G,W)-\zeta_{G_r}(G,W)$, following the condition
  (a), is nonnegative for every $G_p\sim_{\mc{R}^*} G_r$. Here
  $\zeta$ is the signature function for $\Pi$. Define $\gamma(2t+1)$
  to be the size of the vertex set of the largest graph in
  $\mathcal{R}^*$.


  Let $\phi$ and $\rho$ be mappings from the set of
  $(2t+1)$-boundaried graphs of size at most $2\gamma(2t+1)$ to
  $\mc{R}^*$ and $\mathbb{N}$ respectively such that for every
  $(2t+1)$-boundaried graph $G$ and $S\subseteq
  V(G)\setminus \partial(G)$, we have $(G_p\oplus G,S,k)\in \Pi^d$
  if and only if $(\phi(G_p)\oplus G,S,k-\rho(G_p))\in
  \Pi^d$. Such mappings exist: we take $\phi(G_p):=G_r \in
  \mc{R}^*$ such that $G_p\sim_{\mc{R}^*} G_r$, and
  $\rho(G_p):=\zeta_{G_p}(G,W)-\zeta_{\phi(G_p)}(G,W)$ which is a
  constant by the definition of $\sim_{\mc{R}}$ (and thus of
  $\sim_{\mc{R}^*}$) and nonnegative by the way we constitute
  $\mc{R}^*$ as explained in the previous paragraph.



  Suppose that $|X|>\gamma(2t+1)$. We build a nice
  tree-decomposition of $G[X]$ of width \(t\) in $O(|X|)$ time and
  identify a bag $b$ of the tree-decomposition farthest from its
  root such that the subgraph $G_b$ induced by the vertices
  appearing in bag $b$ or below contains at least $\gamma(2t+1)$ and
  at most $2\gamma(2t+1)$ vertices. The existence of such a bag is
  guaranteed by the properties of a nice tree decomposition. Note
  that for any $X'\subset X$, we have $X'\cap S=\emptyset$. Let
  \(X'=V(G_{v})\), so that 
  that $|X'|\leq 2\gamma(2t+1)$. We replace $G[X]$ by
  $\phi(G[X'])$ \todo[disable,inline]{The function \(\phi()\) is
    defined only for \(2t\)-boundaried graphs. It is not clear why
    \(G[X']\) is a \(2t\)-boundaried graph. In particular, nothing
    seems to rule
    out the possibility that \(|\partial(X')|=2t+1\). -- Philip\\
    We have fixed this by replacing 2t with 2t+1 everywhere. --
    Philip} to obtain $G'$, and decrease $k$ by $\rho(X')$. It
  follows that $(G,S,k)\in \Pi^d$ if and only if $(G',S,k')\in
  \Pi^d$. Observe that $k'=k-\rho(X')\leq k$ and $|V(G')|<|V(G)|$
  as $|\phi(G[X])|\leq \gamma(2t+1)<|X|$. Finally, observe that the
  condition (b) of $\sim_{\mc{R}^*}$ ensures that $G'-S$ is
  $K_4$-minor-free. This completes the proof.
\end{proof}


As a corollary, since the \KD{} is strongly monotone, the following reduction rule for $\KCC{}$ is safe. We state the rule for an arbitrary value of $t$, but in practice, our reduction rule will only be based on $t$-protrusions for $t\leqslant 4$.

\setcounter{reduction}{0}
\begin{reduction} \textup{\textbf{(Generic disjoint protrusion rule)}}
  Let $(G,S,k)$ be an instance of \KCC{} and $X$ be a $t$-protrusion such that $X\cap S=\emptyset$. Then there exists a computable function $\gamma(.)$ and an algorithm which computes an equivalent instance in time $O(|X|)$ such that $G[S]$ and $G'[S]$ are isomorphic, $G'-S$ is $K_4$-minor-free, $|V(G')|<|V(G)|$ and $k'\leqslant k$, provided $|X|>\gamma(2t+1)$.
\end{reduction}

We remark on Reduction rule \ref{rrule:protrusion} that
$|\partial(X')|$ may be strictly smaller than $2t+1$. In that
case, we can identify some vertices of $X'\setminus \partial(X')$
as boundary vertices and construe $X'$ as $(2t+1)$-boundaried
graph. This is always possible for $|X'|>\gamma(2t+1)\geq
2t$. \todo[disable,inline]{Again, we do not seem to consider the
  case when
  $|\partial(X')|$ is strictly larger than \(2t\).\\
  The change from 2t to 2t+1 has fixed this problem also. --
  Philip }


\section{Deferred proof of Lemma \ref{lem:compression}}
\begin{reminder}{Lemma \ref{lem:compression}}
If \KCC{} can be solved in $c^k\cdot n^{O(1)}$ time, then \KD{} can be
solved in $(c+1)^k\cdot n^{O(1)}$ time.
\end{reminder}
\begin{proof}
  Let \(\mc{A}\) be an FPT algorithm which solves the \KCC{}
  problem in $c^k\cdot n^{O(1)}$ time. Let $(G,k)$ be the input
  graph for the \KD{} problem and let $v_1,\ldots ,v_n$ be any
  enumeration of the vertices of $G$. Let $V_i$ and $G_i$
  respectively denote the subset $\{v_1\dots v_i\}$ of vertices
  and the induced subgraph $G[V_i]$. We iterate over $i=1,\ldots
  ,n$ in the following manner. At the $i$-th iteration, suppose we
  have a \KC{} $S_i \subseteq V_i$ of \(G_{i}\) of size at most
  $k$.  At the next iteration, we set $S_{i+1}:=S_i\cup
  \{v_{i+1}\}$ (notice that \(S_{i+1}\) is a \KC{} for $G_{i+1}$
  of size at most $k+1$). If $|S_{i+1}|\leq k$, we can safely move
  on to the $i+2$-th iteration. If $|S_{i+1}|=k+1$, we look at
  every subset $S \subseteq S_{i+1}$ and check whether there is a
  \KC{} $W$ of size at most $k$ such that $W\cap
  S_{i+1}=S_{i+1}\setminus S$. To do this, we use the FPT
  algorithm $\mc{A}$ for $\KCC{}$ on the instance $(H,S)$ with
  $H=G_{i+1}-(S_{i+1}\setminus S)$. If $\mc{A}$ returns a \KC{}
  $W$ of $H$ with $|W|< |S|$, then observe that the vertex set
  $(S_{i+1}\setminus S) \cup W$ is a \KC{} of $G$ whose size is
  strictly smaller than $S_{i+1}$. If there is a \KC{} of
  $G_{i+1}$ of size strictly smaller than $S_{i+1}$, then for some
  $S\subseteq S_{i+1}$, there is a small $S$-disjoint \KC{} in
  $G_{i+1}-(S_{i+1}\setminus S)$ and $\mc{A}$ correctly returns a
  solution.

  The time required to execute $\mc{A}$ for every subset $S$ at
  the $i$-th iteration is $\sum_{i=0}^{k+1} \binom{k+1}{i} \cdot
  c^i \cdot n^{O(1)}=(c+1)^{k+1}\cdot n^{O(1)}$. Thus we have an
  algorithm for \KD{} which runs in time $(c+1)^k \cdot n^{O(1)}$.
\end{proof}

\section{Deferred proofs for (explicit) reduction rules}

\setcounter{lemma}{12}
\begin{lemma}~\label{safe:deg}
Reduction rules~\ref{rrule:deg1S},~\ref{rrule:deg2} and~\ref{rrule:para} are safe and  can be applied in polynomial time.
\end{lemma}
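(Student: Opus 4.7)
The plan is to handle the three rules separately, in each case checking (i) polynomial-time applicability, which reduces for each rule to a linear-time local inspection (block decomposition for the $1$-boundary rule, adjacency scans for the other two), and (ii) safety, meaning that $(G,S,k)$ admits a small $S$-disjoint \KC{} iff the reduced instance does. Three ingredients will be used throughout: the biconnectedness of every $K_4$-subdivision (Observation~\ref{nocut}), the equivalence of $K_4$-minors and $K_4$-topological-minors (since $K_4$ has maximum degree three), and the standing hypothesis that $G[V\setminus S]$ is $K_4$-minor-free, without which the \KCC{} instance is trivially \NO{}. In each rule $S$ is left untouched and every newly added edge has at least one endpoint in $F$, so the subgraph $G[S]$ is preserved verbatim and stays $K_4$-minor-free.

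For the $1$-boundary rule, case~(a): if $G[X]$ is a component of $G$, or of $G\setminus e$ for a cut edge $e$, then $X\subseteq F$ implies $G[X]$ is $K_4$-minor-free. A $K_4$-subdivision cannot contain a cut edge nor span two components of $G$ (it is biconnected), so every $K_4$-subdivision of $G$ sits either entirely inside $X$ or entirely outside; the first option is impossible. Hence deleting $X$ preserves the family of $K_4$-subdivisions, and therefore the family of $S$-disjoint covers. Case~(b) is analogous: when $\partial_G(X)=\{v\}$, any $K_4$-subdivision meeting $X\setminus\{v\}$ would have to lie inside $G[X]$ by biconnectedness around the cut vertex $v$, again contradicting $K_4$-minor-freeness of $G[X]$.

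For the bypassing rule, $v\in F$ has degree two with neighbors $u_1\in V(G)$ and $u_2\in F$; the requirement $u_2\in F$ is exactly what guarantees that the added edge $(u_1,u_2)$ does not land inside $S$. Every branching node of a $K_4$-subdivision has degree at least three, so $v$ can appear in a $K_4$-subdivision only as a subdividing node. Thus bypassing $v$ amounts to undoing a subdivision of $(u_1,u_2)$, and for every $W\subseteq V(G')$ the graphs $G-W$ and $G'-W$ have the same $K_4$-topological-minors and hence the same $K_4$-minors. When an optimal cover $W$ of $G$ happens to contain $v$, I would swap $v$ for $u_2$: since $G-W$ is $K_4$-minor-free, the subgraph $G-((W\setminus\{v\})\cup\{u_2\})$ is too, the size is preserved, and $S$-disjointness is maintained because $u_2\in F$. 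I expect this last swap to be the most delicate step, requiring a short case split on whether $u_1\in W$ and a verification that introducing a pendant or isolated $v$ does not create any $K_4$-minor.

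For the parallel rule, deleting the extra copies of a multi-edge between $u\in V$ and $v\in F$ produces a subgraph $G'$, so every $K_4$-subdivision of $G'-W$ is also a $K_4$-subdivision of $G-W$. Conversely, any $K_4$-subdivision in $G-W$ is, by definition, a subdivision of the simple graph $K_4$ and therefore uses at most one edge between each pair of vertices; it already embeds in $G'-W$. Since no vertex is removed and $S$ is preserved, the $S$-disjoint covers in the two instances coincide exactly. Running-time bounds are immediate (in fact linear) in all three cases, completing the argument.
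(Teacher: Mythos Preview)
Your proof is correct and follows essentially the same approach as the paper: biconnectedness of $K_4$-subdivisions for the 1-boundary rule, the swap $v\mapsto u_2$ followed by a case split on $W\cap\{u_1,u_2\}$ for the bypassing rule, and the observation that a $K_4$-subdivision is a simple subgraph (so multiplicities are irrelevant) for the parallel rule. The only cosmetic difference is that the paper phrases the bypassing argument via ``$G'-W$ is a minor of $G-W$'' in one direction and an explicit three-way case analysis in the other, whereas you package both directions into the single statement that $G-W$ and $G'-W$ have the same $K_4$-topological-minors for every $W\subseteq V(G')$; the underlying verification is identical.
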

\begin{proof}
It is not difficult to see that each of these rules can be applied
in polynomial time. We now prove that each of them is safe.

\noindent
\emph{Reduction rule~\ref{rrule:deg1S}.} Let \(W\) be a small
\(S\)-disjoint \(K_{4}\)-minor cover of \(G\). Observe that $G'-(W\setminus X)$ is a subgraph of $G-W$. It follows
that \((W\setminus X)\) is a small \(S\)-disjoint \(K_{4}\)-minor
cover of \(G- X\). By the same reasoning, \((W\setminus
(X\setminus\partial(X)))\) is a small \(S\)-disjoint
\(K_{4}\)-minor cover of \(G-(X\setminus\partial(X))\).

For the opposite direction, let \(W'\) be a small \(S\)-disjoint
\(K_{4}\)-minor cover of \(G':=(G-X)\). Then \(G'-W'\) is
$K_4$-minor-free. Since \(G- W'\) is a disjoint union of
\(G'\setminus W'\) and \(G[X]\) and any $K_4$-subdivision is
biconnected, $G-W'$ is $K_4$-minor-free as well. Thus \(W'\) is a
small \(S\)-disjoint \(K_{4}\)-minor cover of \(G\). The same
argument goes through when \(G'=(G\setminus
(X\setminus\partial(X)))\), as well.

\noindent
\emph{Reduction rule~\ref{rrule:deg2}.} Let \(W\) be a small
\(S\)-disjoint \(K_{4}\)-minor cover of \(G\). Without loss of
generality, assume that the vertex \(v\) is \emph{not} in
\(W\). Indeed, any $K_4$-subdivision containing $v$ also contains
$u_2$ and thus, we can take $(W\setminus \{v\})\cup \{u_2\}$ to
hit such a $K_4$-subdivision. Let \(G'\) be the graph obtained
from \(G\) by applying the rule. Observe that \(G_{2}=G'\setminus
W\) is a minor of \(G_{1}=G\setminus W\), that is:
\begin{itemize}
\item If \(W\cap\{u_{1},u_{2}\}=\emptyset\), then \(G_{2}\) can be
  obtained from \(G_{1}\) by contracting the edge \((v,u_{1})\).
\item If \(W\cap\{u_{1},u_{2}\}\neq\emptyset\), then \(G_{2}\) can be
  obtained from \(G_{1}\) by deleting \(v\).
\end{itemize}
It follows that \(W\) is a
small \(S\)-disjoint \(K_{4}\)-minor cover of \(G'\) as well. For
the opposite direction, let \(W'\)
be a small \(S\)-disjoint \(K_{4}\)-minor cover of \(G'\). Observe
that \(G_{1}'=G\setminus W'\) can be obtained from the
\(K_{4}\)-minor-free graph
\(G_{2}'=G'\setminus W'\) in the following ways:
\begin{itemize}
\item If \(W'\cap\{u_{1},u_{2}\}=\{u_{1},u_{2}\}\), then \(G_{1}'\) can be
  obtained from \(G_{2}'\) by adding an isolated vertex \(v\).
\item If \(W'\cap\{u_{1},u_{2}\}=\{u_{2}\}\), then \(G_{1}'\) can be
  obtained from \(G_{2}'\) by attaching a vertex \(v\) to \(u_{1}\).
\item If \(W'\cap\{u_{1},u_{2}\}=\emptyset\), then \(G_{1}'\) can be
  obtained from \(G_{2}'\) by subdividing the edge \((u_{1},u_{2})\).
\end{itemize}
In the first two cases, note that any $K_4$-subdivision is
biconnected and thus $v$ is never contained in a
$K_4$-subdivision. By the assumption that $G'_2$ is
$K_4$-minor-free, $G'_1$ is also $K_4$-minor-free. In the third
case, $G'_1$ is also $K_4$-minor-free since subdividing an edge in
a \(K_{4}\)-minor-free graph does not introduce a \(K_{4}\)
minor. It follows that \(W'\) is a small \(S\)-disjoint
\(K_{4}\)-minor cover of \(G\) as well.

\noindent
\emph{Reduction rule~\ref{rrule:para}.} In the forward direction,
observe that the graph obtained by applying the rule is a subgraph
of the original graph. In the reverse direction, observe that
increasing the multiplicity (number of parallel edges) of any edge
in a \(K_{4}\)-minor-free graph does not introduce a
\(K_{4}\)-minor in the graph.
\end{proof}

\begin{figure}[htbh]
\begin{center}
\includegraphics[scale=0.7]{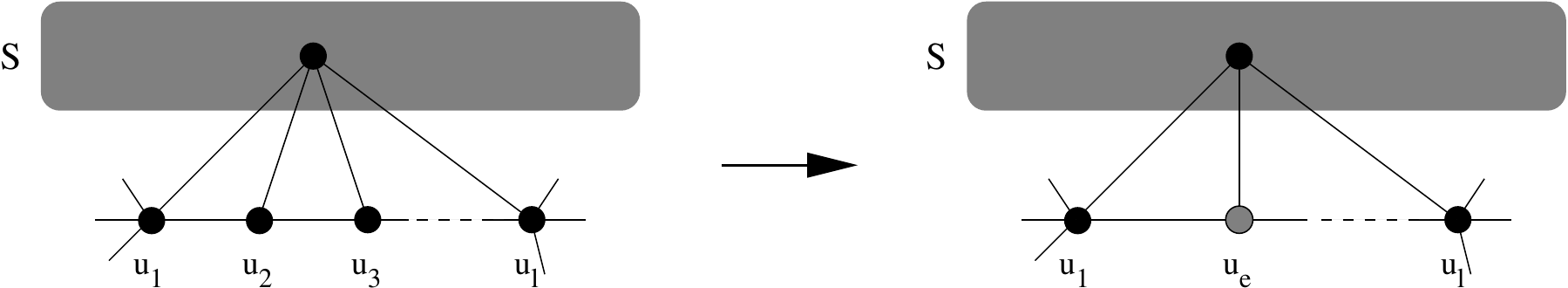}
\end{center}
\caption{Contraction of the edge $e=u_2u_3$ into $u_e$ (the grey vertex) when Reduction rule~\ref{rrule:consec} applies.}
\label{fig:rule-consec}
\end{figure}

\begin{lemma}\label{safe:consec}
Reduction Rule~\ref{rrule:consec} is safe and can be applied in polynomial time.
\end{lemma}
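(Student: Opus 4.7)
The plan is to establish the safety of Reduction rule~\ref{rrule:consec} by verifying both implications of the YES-equivalence between $(G,S,k)$ and $(G',S,k)$; the polynomial-time claim is immediate, since identifying the chandelier pattern (iterating over $x \in S$ and searching for a long path in $G[\{v \in F : N_S(v) = \{x\}\}]$ whose internal vertices have degree exactly three in $G$) and performing the contraction are routine.

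For the forward direction, given a small $S$-disjoint \KC{} $W$ of $G$, I will set $W' = W$ when $\{u_2,u_3\} \cap W = \emptyset$ and $W' = (W \setminus \{u_2,u_3\}) \cup \{u_e\}$ otherwise. In the first case, $G' - W'$ is obtained from $G - W$ by contracting $(u_2,u_3)$ and is therefore $K_4$-minor-free; in the remaining cases one verifies that $G' - W'$ is a subgraph of $G - W$ (or equal to it when both $u_2,u_3 \in W$), and in all subcases $|W'| \leq |W|$ and $W' \cap S = \emptyset$.

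For the backward direction, when $u_e \notin W'$ I set $W = W'$ and assume for contradiction that $G - W$ contains a $K_4$-subdivision $H$. The analysis splits on whether $(u_2,u_3) \in E(H)$ and which of $u_2, u_3, x$ serve as branching nodes. If $(u_2,u_3) \in E(H)$ and at most one of $u_2, u_3$ is a branching node of $H$, then contracting $(u_2,u_3)$ within $H$ produces a $K_4$-minor in $G' - W'$. If $(u_2,u_3) \notin E(H)$ but both $u_2, u_3 \in V(H)$, they must be subdividing with neighbor pairs $\{u_1, x\}$ and $\{u_4, x\}$; then either $x$ is subdividing and the subpath $u_1 - u_2 - x - u_3 - u_4$ shortcuts to $u_1 - u_e - u_4$ in $G'$, or $x$ is branching and one reroutes the path through $u_2$ via the direct edge $(x, u_1)$, which exists because $N_S(u_1) = \{x\}$. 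When $u_e \in W'$, I instead set $W = (W' \setminus \{u_e\}) \cup \{u_2\}$; since $G - W$ equals $G' - W'$ augmented by the single vertex $u_3$ of degree at most two, any $K_4$-subdivision of $G - W$ that uses $u_3$ must have $u_3$ as a subdividing node on a path $u_4 - u_3 - x$, which shortcuts through the edge $(u_4, x) \in E(G)$ furnished by $N_S(u_4) = \{x\}$.

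The main obstacle is the residual subcase in which $(u_2, u_3) \in E(H)$ and both $u_2, u_3$ are branching nodes of $H$: direct contraction collapses two branch vertices into one and appears to destroy the $K_4$-subdivision. A short argument forces $x$ to also be branching, since otherwise $x$ would yield a second path $u_2 - x - u_3$ between the branching nodes $u_2$ and $u_3$, violating uniqueness of paths between branching nodes in a subdivision. The unique fourth branching node $d$ then receives three internally disjoint paths $P_1, P_2, P_3$ from $u_2, u_3, x$ via $u_1, u_4,$ and some $y$ respectively. I will construct a $K_4$-minor model in $G' - W'$ with branch sets $B_1 = \{u_e\} \cup (V(P_1) \setminus \{u_2, d\})$, $B_2 = \{x\} \cup (V(P_3) \setminus \{x, d\})$, $B_3 = \{d\}$, and $B_4 = V(P_2) \setminus \{u_3, d\}$, verifying pairwise adjacency using the three edges $(u_e, x), (u_e, u_4), (u_4, x) \in E(G')$ (the last from $N_S(u_4) = \{x\}$) together with the endpoints of $P_1, P_2, P_3$ adjacent to $d$.
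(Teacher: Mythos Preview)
Your overall strategy—directly exhibiting a $K_4$-minor in $G'-W'$ from a $K_4$-subdivision $H$ in $G-W'$—is sound and differs from the paper's. The paper instead first argues that \emph{every} $K_4$-subdivision in $G-W'$ must contain both $u_2$ and $u_3$, and then (via the case analysis of its Figure~\ref{fig:case-consec}) that both must be branching; in the final configuration it performs a local swap, replacing the edge $(x,u_3)$ by the edge $(x,u_4)$, to obtain another $K_4$-subdivision in $G-W'$ in which $u_3$ is only subdividing, contradicting the universal claim just established. Your direct-construction route is legitimate but carries more edge cases, and several of them are not covered.

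Two omissions in the case split. You never treat $(u_2,u_3)\notin E(H)$ with at most one of $u_2,u_3$ in $V(H)$: if neither is present then $H$ already sits in $G'-W'$, and if exactly one is present it is subdividing and can simply be renamed $u_e$. Also, in the sub-case ``$(u_2,u_3)\notin E(H)$, both present, $x$ branching'' you reroute via the edge $(x,u_1)$ to eliminate $u_2$, but $u_3$ is still a vertex of the resulting subdivision, which therefore is not yet a subgraph of $G'$; you need a second reroute via $(x,u_4)$, or to fall back on the case you omitted.

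The most concrete defect is in the residual sub-case. If the fourth branching node $d$ equals $u_4$, then $P_2$ is the single edge $(u_3,u_4)$ and your branch set $B_4=V(P_2)\setminus\{u_3,d\}$ is empty, so the minor model collapses (and the intended $B_2$--$B_4$ adjacency via $(u_4,x)$ is unavailable). This is precisely why the paper writes ``assume without loss of generality that $u_4$ is a subdividing node'': the relabelling $u_i\leftrightarrow u_{5-i}$ lets you assume $d\neq u_4$, after which your construction goes through. With that symmetry reduction stated, plus the two missing cases above, your argument becomes complete.
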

\begin{proof}
Let $u_e$ be the vertex obtained by contracting $e$, and let $W$ be a small disjoint \KC{} of $G$. If $W\cap\{u_{2},u_{3}\}=\emptyset$, then let $W'\gets W$; otherwise let $W'\gets(W\setminus \{u_2,u_3\})\cup\{u_e\}$. In either case $\vert W'\vert\leqslant \vert W\vert\leqslant k$, and $(G/e)\setminus W'$ is a minor of $G\setminus W$. Since $G\setminus W$ is $K_4$-minor-free, so is $(G/e)\setminus W'$, and so $W'$ is a small disjoint \KC{} of $G/e$.

Conversely, let $W'$ be a small disjoint \KC{} of
$G/e$. We first consider the case $u_e\in W'$. Then let $W\gets
(W'\setminus\{u_e\})\cup \{u_2\}$. We claim that \(W\) is a small
disjoint \(K_{4}\)-minor cover of \(G\). It is not difficult to
see that \(W\) is both small and \(S\)-disjoint; we now show that
it is a \(K_{4}\)-minor cover of \(G\). Assume to the contrary
that $G- W$ contains a $K_{4}$-subdivision $H$. Observe that
$G-(W\cup\{u_{3}\})$ is isomorphic to $(G/e)- W'$ which is
$K_4$-minor-free, and so $u_3\in V(H)$. Now $u_3$ is a degree
$2$ vertex in $G-W$ and so is a subdividing node of $H$, implying
that $u_4$ and $x$ (the neighbors of $u_3$) belongs to $V(H)$. As
$x$ and $u_4$ are adjacent, $G-W$ contains a $K_4$-subdivision
$H'$ with $V(H')=V(H)\setminus\{u_3\}$. Thus $G-(W\cup\{u_{3}\})$
contains a \(K_{4}\)-subdivision, a contradiction. 

Suppose now that $u_e\notin W'$. We claim that $W'$ is a
\KC{} of $G$ as well. Assume to the contrary that $H$ is a
$K_4$-subdivision in $G- W'$. We claim that every
$K_4$-subdivision $H$ in $G-W'$ contains $u_2$ and $u_3$ as
branching nodes. Assume that $u_2\notin V(H)$. Then since
$G-(W'\cup\{u_2\})$ is a (non-induced) subgraph of $G/e-W'$, $H$
is also a $K_4$-subdivision in $G/e-W'$: a contradiction. So every
$K_4$-subdivision in $G-W'$ contains $u_2$. By a symmetric
argument, \(u_{3}\in V(H)\) as well. %
\todo[disable,color=lightgray,inline]{I have replaced
  the earlier argument by an appeal to symmetry. I cannot now see
  why this is wrong, but since we had a different, more
  complicated argument before, it is possible that this new
  argument is wrong. This is the previous argument : ``If
  $u_3\notin V(H)$, then $u_2$ has to be a subdividing node of $H$
  with neighbors $x$ and $x_1$. Observe that since $u_1$ and $x$
  are adjacent, $G-W'$ also contains a $K_4$-subdivision $H'$
  avoiding $u_2$: contradiction, so $u_2,u_3\in V(H)$.'' ---
  Philip}
\todo[disable,inline]{Christophe says: ``I had some doubt about validity
  of the use of symmetry (Eunjung also used symmetry). My concern
  was the folllowing: we need to argue that both $u_2$ and $u_3$
  belong to $V(H)$. I feel (but maybe wrong) that the symmetry
  only guarantee that $u_2$ or $u_3$ belong to $V(H)$. So I argued
  that assuming $u_2$ belong to $V(H)$, $u_3$ also has to be
  in. What do you think ?''

  Since our argument which says \(u_{2}\in V(H)\) holds
  irrespective of whether \(u_{3}\in V(H)\), I feel the symmetry
  argument is valid. We can always put the other one back if we
  feel otherwise. --- Philip}
 Now a simple case by case analysis (see
Figure~\ref{fig:case-consec}) shows that if $u_2$ or $u_3$ is a
subdividing node, then $G/e-W'$ also contains a $K_4$-subdivision
$H'$ with $V(H')=(V(H)\setminus\{u_2,u_3\})\cup\{u_e\}$: a
contradiction.

\begin{figure}[htbh]
\begin{center}
\includegraphics[scale=0.7]{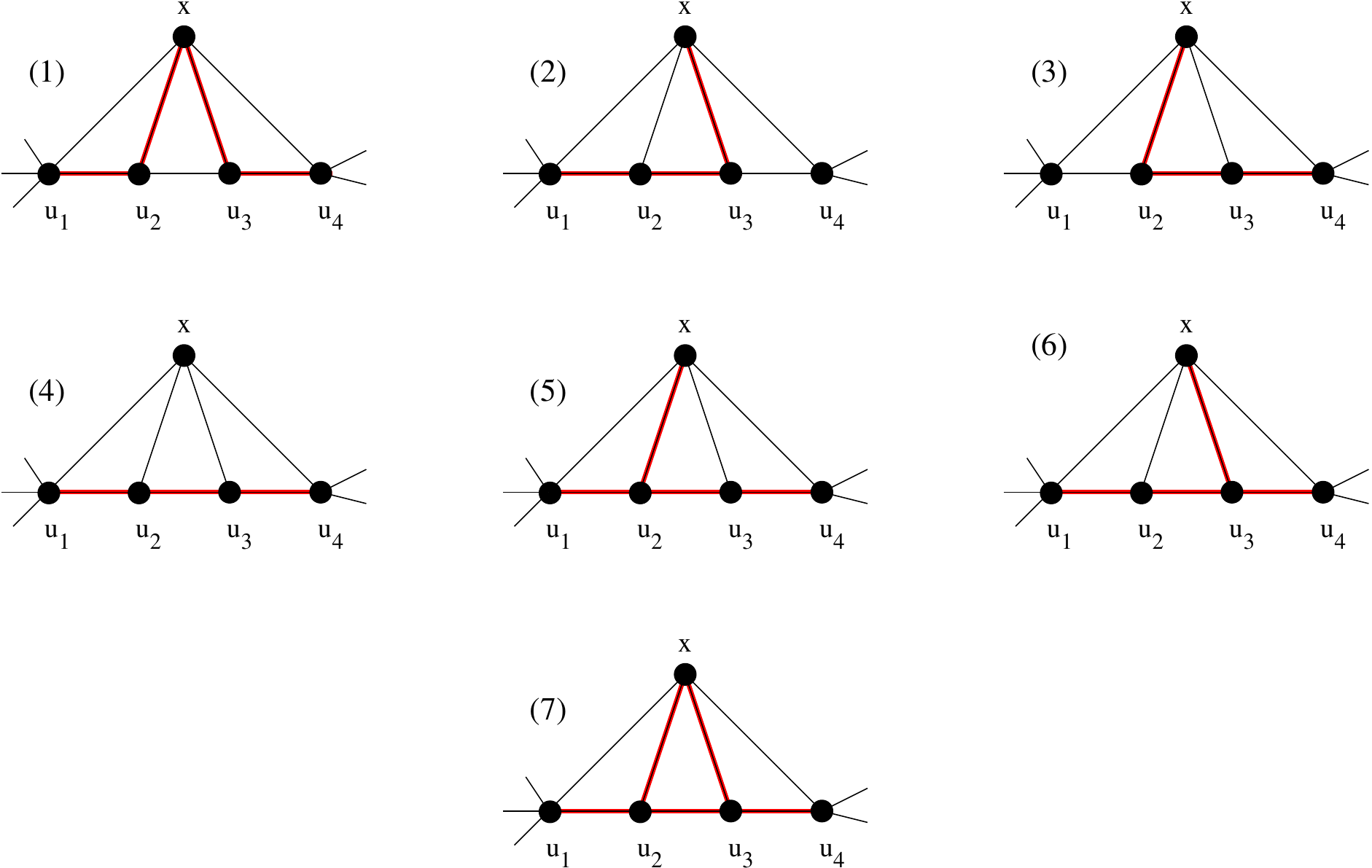}
\end{center}
\caption{The different possible intersections of $H$ with
  $G[\{u_1,u_2,u_3,u_4,x\}]$. The bold lines denote those edges in
  \(H\) which are incident on \(u_{2}\) or \(u_{3}\). In cases
  (1), (2) and (3) we can argue that there exists a
  $K_4$-subdivision in $G-W'$ avoiding either $u_2$ or $u_3$: a
  contradiction. In cases (4), (5) and (6), we observe the
  existence of a $K_4$-subdivision in $G/e$: a contradiction.}
\label{fig:case-consec}
\end{figure}

\todo[disable,color=lightgray,inline]{Do you think we need to detail the arguments for the
  above mentioned case by case analysis ?

  I think these pictures
  are good enough, and that no more detail is required. --- Philip}

It follows that $u_{2},u_{3}$ are both present as branching nodes
in $H$ (see case (7) in Figure~\ref{fig:case-consec}). As these
vertices both have degree $3$ in $G$, every edge incident to $u_2$
or $u_3$ is used in $H$. Therefore the common neighbor $x$ of
$u_2$ and $u_3$ also appears in $H$ as a branching node. So at
most one vertex in $\{u_{1},u_{4}\}$ is a branching node; assume
without loss of generality that $u_{4}$ is a subdividing node. It
lies on the path between $u_3$ and a branching node
$y\notin\{u_2,u_3,x\}$, and we can make $u_{4}$ a branching node
instead of $u_{3}$ to obtain a new $K_{4}$-subdivision $H'$ by
replacing in $H$ the edge $(x,u_3)$ by the edge $(x,u_4)$. But
then $H'$ is a $K_4$-subdivision in $G\setminus W'$ which does not
contain $u_{3}$ as a branching node, a contradiction. It follows
that $W'$ is a small disjoint \KC{} of $G$.

It is not difficult to see that the rule can be applied in polynomial time.
\end{proof}

\begin{lemma}\label{safe:2cut}
Let $(G,S,k)$ be an instance reduced with respect to Reduction Rules~\ref{rrule:deg1S},~\ref{rrule:deg2} and~\ref{rrule:para}. Then Reduction Rule \ref{rrule:2cut} is safe and can be applied in polynomial time.
\end{lemma}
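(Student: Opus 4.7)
The plan is to prove safety by case analysis on whether $G[X]+(s,t)$ is series-parallel, using the fact that $\{s,t\}$ is a $2$-separator of $G$ combined with Observation~\ref{cross2cut}: for every $K_4$-subdivision $H$ in $G$, all four branching nodes lie in $X_0 \cup \{s,t\}$ for some connected component $X_0$ of $G-\{s,t\}$. In particular either all branching nodes belong to $X$ or none belongs to $X\setminus\{s,t\}$, so the internal vertices of $X$ only act as subdividing nodes on paths that enter and leave $X$ through $\{s,t\}$. Since branching-to-branching paths of $H$ are internally vertex-disjoint, at most one such path traverses $X\setminus\{s,t\}$ and uses both $s$ and $t$. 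This structural observation will let me argue a WLOG property on any $S$-disjoint cover $W$: any vertex of $W$ lying in $X\setminus\{s,t\}$ can be shifted to $s$, $t$ (or to $a$ in Case (2)) without losing coverage.

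For Case (1) I would use that $G[X]+(s,t)$ is SP and hence $K_4$-minor-free. Given an $S$-disjoint cover $W$ of $G$, I first invoke the shift above to reduce to $W\cap(X\setminus\{s,t\})=\emptyset$, and then claim that $W$ is also a cover of $G'=(G-(X\setminus\{s,t\}))+(s,t)$. A hypothetical $K_4$-subdivision $H'$ in $G'-W$ that avoids the new edge $(s,t)$ would also sit inside $G-W$, a contradiction; and one that uses $(s,t)$ can be lifted to $G-W$ by replacing the edge with an $s$-$t$ path through $G[X]$, which survives since $W$ misses the interior of $X$ and $G[X]$ is connected. The reverse direction is symmetric, using that any $K_4$-subdivision in $G-W'$ meeting $X\setminus\{s,t\}$ forces a $K_4$-minor in $G[X]+(s,t)$ once the path through $X$ is collapsed to the single edge $(s,t)$, which is ruled out by SP-ness.

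For Case (2), $G[X]+(s,t)$ contains a $K_4$-subdivision, and a small case analysis shows that either $G[X]$ already contains one, or $G[X]$ contains a $\theta_3$-subdivision with $s,t$ as subdividing nodes, which produces a $K_4$-subdivision as soon as it is joined with any external $s$-$t$ path. In either subcase every $S$-disjoint cover of $G$ must contain at least one vertex of $X$, and in fact one vertex suffices to neutralize the $X$-side $K_4$-creating structure. The replacement graph on $\{s,t,a,b\}$ is precisely $K_4-(s,t)$, i.e.\ a $\theta_3$ with poles $a,b$ and $s,t$ as subdividing nodes; it behaves identically to the original gadget with respect to $K_4$-subdivisions crossing $\{s,t\}$. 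The forward and reverse translations then swap a vertex of $W\cap X$ with $a$ (or $b$), exploiting the $a\leftrightarrow b$ symmetry. Polynomial-time applicability follows from Lemma~\ref{testSP}, since SP-recognition of $G[X]+(s,t)$ is linear and the replacement itself is constant-time once $X$ is given. The main obstacle I expect is the bookkeeping in Case (2): rigorously verifying that the four-vertex gadget neither loses any $K_4$-subdivision that existed through $X$ nor introduces a spurious one --- the crucial enabling fact being that $G'[\{s,t,a,b\}]$ is itself $K_4$-minor-free but becomes $K_4$-containing upon addition of any $s$-$t$ edge or external $s$-$t$ path.
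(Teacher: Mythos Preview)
Your overall skeleton --- use Observation~\ref{cross2cut} on the separator $\{s,t\}$, normalize a cover away from the interior of $X$, and compare $K_4$-subdivisions in $G-W$ and $G'-W$ case by case --- matches the paper. Case~(1) is essentially right. Case~(2), however, contains two genuine errors.

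First, $G[X]$ can never itself contain a $K_4$-subdivision: $X\subseteq F$ and $G[F]$ is $K_4$-minor-free by hypothesis (since $S$ is a $K_4$-minor cover). So your dichotomy is vacuous on one branch, and your structural claim about a $\theta_3$ ``with $s,t$ as subdividing nodes'' is not what you get in general (the positions of $s,t$ in the $\theta_3$ depend on their roles in the $K_4$-subdivision of $G[X]+(s,t)$). Second, and more seriously, the assertion that ``every $S$-disjoint cover of $G$ must contain at least one vertex of $X$'' is false: if there is no external $s$--$t$ path in $G-W$, or if $W$ already hits every such path, then $W$ need not meet $X$ at all. Your ``swap with $a$'' translation therefore does not cover the case $W\cap X_0=\emptyset$, which is exactly the case that carries the content.

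The paper avoids both issues by normalizing to $t$ (a boundary vertex present in \emph{both} $G$ and $G'$), not to $a$: if $W\cap X_0\neq\emptyset$ replace it by $\{t\}$; symmetrically, if $W'\cap\{a,b\}\neq\emptyset$ replace it by $\{t\}$. After this shift one shows that the \emph{same} set is a cover of both $G$ and $G'$, by translating any surviving $K_4$-subdivision across the separator using Observation~\ref{cross2cut}. One more point you skipped: you open Case~(2) with ``$G[X]+(s,t)$ contains a $K_4$-subdivision'', but the rule is triggered by ``$G[X]+(s,t)$ is not SP''. These are equivalent only once you know $G[X]+(s,t)$ is biconnected, which is where the reducedness with respect to Reduction Rule~\ref{rrule:deg1S} is actually used (the block tree of $G[X]$ is a path with $s,t$ in the end blocks). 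The paper isolates this as an intermediate claim; you should too.
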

\begin{proof}
  Since $(G,S,k)$ is reduced with respect to
  Rule~\ref{rrule:deg1S}, $G[F]$ does not contain any cut vertex.
  Let $(G',S,k)$ be the instance obtained by applying Reduction
  Rule \ref{rrule:2cut} to $(G,S,k)$. Let $X'$ be the set of
  vertices with which the rule replaced $X$ and let
  $X_0:=X\setminus \{s,t\}$, $X_0':=X'\setminus \{s,t\}$. We can
  assume that $X_0\neq\emptyset$ since otherwise the reduction
  rule is useless. To prove that $(G,S,k)$ has a small disjoint
  \KC{} of $G$ if and only if $(G',S,k)$ does, we need
  the following claim.

\begin{numclaim}\label{2cut:1}
$G[X]+(s,t)$ is an SP-graph if and only if $G[X]+(s,t)$ is $K_4$-minor-free.
\end{numclaim}
\begin{proofof}
  The forward direction follows directly from
  Lemma~\ref{SPcharac}. Assume now that $G[X]+(s,t)$ is
  $K_4$-minor-free. As $(G,S,k)$ is reduced with respect to
  Reduction Rule~\ref{rrule:deg1S}, the block tree of $G[X]$ is a
  path and moreover $s$ and $t$ belong to the two leaf blocks,
  respectively (these blocks may also coincide). This implies that
  the addition of the edge $(s,t)$ to $G[X]$ yields a biconnected
  graph. This concludes the proof since by Lemma \ref{SPcharac} a biconnected $K_4$-minor
  free graph is an SP-graph.
\end{proofof}

\todo[disable,color=lightgray,inline]{What happens in the forward
  direction is the following: In both the cases, the new graph
  \(G'\) is a minor of the graph \(G\), and so any \(K_{4}\)-minor
  cover of \(G\) is also a \(K_{4}\)-minor cover of \(G'\). The
  fact that \(G'\) is a minor of \(G\) is easy to show in the
  first case, but for the second case I cannot clinch the
  argument. What we need in the second case is the following
  lemma: ``Let \(G\) be a \(K_{4}\)-minor-free graph, and let
  \(s,t\) be two vertices in \(G\) such that adding the edge
  \((s,t)\) to \(G\) yields a graph which contains a
  \(K_{4}\)-minor. Then \(G\) contains the bottom right graph as a
  minor, where \(s\) and \(t\) are mapped to the two vertices of
  degree three.'' --- Philip}

\medskip We now resume the proof of the lemma. Let $W$ be a small
disjoint \KC{} of $G$. If $W\cap X_0 \neq \emptyset$,
set $W^*:=(W\setminus X_0)\cup \{t\}$. Since $\{s\}$ is a cut
vertex in $G-W^*$ isolating $X_0$, no $K_4$-subdivision in $G-W^*$
uses any vertex from $X_0$. Also $|W^*|\leq k$, and so
$W^{*}\subseteq V(F)\setminus X_{0}$ is a small disjoint
\KC{} of $G$. So we can assume without loss of generality
that $W\cap X_0= \emptyset$. Let us prove that $W$ is a
\KC{} of $G'$. For the sake of contradiction, let $H'$
be a $K_4$-subdivision in $G'-W$. There are two cases to consider:

\begin{enumerate}
\item Reduction Rule~\ref{rrule:2cut} replaces $G[X]$ by the edge
  $(s,t)$: Observe that all the branching nodes of $H'$ belong to
  $V(G)\setminus (W\cup X_0)$. Suppose $H'$ uses the edge $(s,t)$
  for a path between two branching nodes, say $u$ and $v$. As
  $W\cap X_0= \emptyset$, using an arbitrary $s,t$-path $P$ in
  $G[X]$ instead of the edge $(s,t)$ witnesses the existence of a
  $u,v$-path $G-W$. This implies that $G-W$ contains a
  $K_4$-subdivision $H$ such that $V(H)=V(H')\cup V(P)$, a
  contradiction.

\item Reduction Rule~\ref{rrule:2cut} replaces $G[X]$ by a
  $\theta_3$ on vertex set $X'=\{a,b,s,t\}$: this occurs when
  $G[X]+(s,t)$ is not an SP-graph and so by Claim~\ref{2cut:1}
  contains a $K_4$-subdivision. By Observation~\ref{cross2cut},
  the branching nodes of $V(H')$ belong either to $X'$ or to
  $V(G)\setminus \{a,b\}$. In the latter case, vertex $a$ or $b$
  may be used by $H'$ as a subdividing node to create a path
  through $s$ and $t$ between two branching nodes of $H'$. The
  same argument as above then yields a contradiction.  In the
  former case, observe that every vertex of $X'$ is a branching
  node of $H'$ and some vertices out of $X'$ may be used by $H'$
  as subdividing nodes to create the missing path $P$ between $s$
  and $t$ in $G'-W$. As $G[X]+(s,t)$ also contains a
  $K_4$-subdivision, say $H$, we can construct a $K_4$-subdivision
  in $G-W$ on vertex set $V(H)\cup V(P)$, a contradiction.
\end{enumerate}

For the reverse direction, let $W'$ be a small disjoint
\KC{} of $G'$. Again we can assume that $W'\cap
X_0'=\emptyset$. Indeed, if $W'\cap X_0'\neq\emptyset$, it is easy
to see that $(W'\setminus X_0')\cup\{t\}$ is also a small disjoint
\KC{} of $G'$.
Let us prove that $W'$ is also a \KC{} of $G$ (the
arguments are basically the same as above).  For the sake of
contradiction, assume $H$ is a $K_4$-subdivision of $G-W'$. By
Observation~\ref{cross2cut}, since $\{s,t\}$ is a separator of size two,
the branching nodes of $V(H)$ belong either to $X$ or to
$V(G)\setminus X_0$. In the former case, $G[X]+(s,t)$ is not
an SP-graph, and thus $X$ as been replaced by a $\theta_3$ on
$\{a,b,s,t\}$. Let $P$ be the $s,t$-path of $G-(X_0\cup W')$ used
by $H$. As $W'\cap X_0'=\emptyset$, $\{a,b,s,t\}\cup V(P)$ induces
a $K_4$-subdivision in $G'-W'$, a contradiction. In the latter
case, if $H$ uses a path between $s$ and $t$ in $G[X]-W'$,
then such a path also exists in $G'-W'$ witnessing a
$K_4$-subdivision in $G'-W'$, a contradiction.
\end{proof}

\section{Deferred proofs of Lemmas \ref{lem:K4-3+1} and \ref{cor:simplified-deg2}}

\begin{reminder}{Lemma \ref{lem:K4-3+1}} Let $W$ and $Z$ be disjoint
  vertex subsets of a graph $G$ such that $G[W]$
  is biconnected, $G[Z]$ is connected and $|N_W(Z)|\geq 3$. Then
  $G[W\cup Z]$ contains a $K_4$-subdivision.
\end{reminder}
\begin{proof}
  Let $x$, $y$ and $z$ be three vertices of $N_W(Z)$. Since
  \(G[Z]\) is connected and since contracting edges does not
  introduce a new \(K_{4}\)-subdivision, we may assume without
  loss of generality that there is a single vertex, say \(u\), in
  \(Z\) such that \(\{x,y,z\}\subseteq N(u)\).

  Since \(G[W]\) is biconnected, it follows from Menger's Theorem
  that there are at least two distinct paths in \(G[W]\) between
  any two vertices in \(W\). Therefore, every pair of vertices in
  \(W\) belong to at least one cycle of \(G[W]\).

  Let \(C\) be a cycle in \(G[W]\) to which \(x\) and \(y\)
  belong. If \(z\) also belongs to \(C\), then the subgraph \(G[C
  \cup \{u\}]\) contains a \(K_{4}\)-subdivision with \(x, y, z,
  u\) as the branching nodes, and we are done. So let \(z\) not
  belong to the cycle \(C\).

  Since \(G[W]\) is biconnected, \(|N_{W}(z)|\geq 2\). From
  Menger's Theorem applied to \(C\) and \(N_{W}(z)\), we get that
  there are at least two paths from \(z\) to \(C\) which intersect
  only at \(z\).  These paths together with the cycle \(C\)
  constitute a \(\theta_{3}\)-subdivision in which \(x\) and \(y\)
  are branching nodes and \(z\) is a subdividing node. Together
  with the vertex \(u\), this \(\theta_{3}\)-subdivision forms a
  \(K_{4}\) in  $G[W\cup Z]$.
  %
\end{proof}

\begin{reminder}{Lemma \ref{cor:simplified-deg2}}
If $(G,S,k)$ is a simplified instance of \KCC{}, then $F=N_0\cup N_1\cup N_2$.
\end{reminder}
\begin{proof}
As $(G,S,k)$ is a simplified instance, $G[S\cup\{x\}]$ is $K_4$-minor-free for every $x\in F$ (by Branching rule~\ref{rule:br1}) and there exists a biconnected component $B$ of $G[S]$ containing $N_S(x)$ (otherwise we could apply Branching rule~\ref{rule:br2} or~\ref{rule:br3}). It directly follows from Lemma~\ref{lem:K4-3+1}, that for every vertex $x\in F$, $|N_S(x)|\leqslant 2$.
\end{proof}

\section{Deferred proofs of Theorem \ref{th:independent} and Theorem \ref{lem:transVC}}
\begin{reminder}{Theorem \ref{th:independent}} Let $(G,S,k)$ be an
  instance of \KCC{}. If none of the reduction rules nor branching
  rules applies, then $(G,S,k)$ is an independent instance.
\end{reminder}
\begin{proof}
  Once we show that $F$ is an independent set, condition (b)
  follows from Corollary \ref{cor:simplified-deg2} and the fact
  that $(G,S,k)$ is reduced with respect to Reduction rule
  \ref{rrule:deg1S}. Conditions (c) and (d) are also satisfied in
  this case since $(G,S,k)$ is simplified, specifically since
  Branching rules~\ref{rule:br1}, \ref{rule:br2}
  and~\ref{rule:br3} do not apply on singleton sets \(X\). We now
  prove that $F$ is an independent set.

  Suppose $G[F]$ contains a connected component $X$ with at least
  two vertices. Since \((G,S,k)\) is a simplified instance,
  \(G[X\cup S]\) does not contain \(K_{4}\) as a minor. Hence from
  Lemma~\ref{lem:K4-3+1}, we have $|N_S(X)|\leq 2$. We consider
  two cases, whether $G[X]$ is a tree or not.

  Let us assume that $X$ is a tree. Observe that every leaf of $X$
  belongs to $N_2$, for otherwise Rule~\ref{rrule:deg1S} or
  Rule~\ref{rrule:deg2} would apply. So $X$ contains two leaves,
  say $u$ and $v$, having the same two neighbors in $S$, say $x$
  and $y$. But then observe that $x$ and $y$ belong to the same
  connected component of $G[S]$ (otherwise Branching
  Rule~\ref{rule:br2} would apply). It clearly follows that $x$,
  $y$, $u$ and $v$ are the four branching nodes of a
  $K_4$-subdivision in $G[S\cup X]$, which contradicts the
  assumption that Branching Rule~\ref{rule:br1} cannot apply to
  $(G,S,k)$.

  We now consider the case where $X$ is not a tree. Before we
  proceed further we observe the following. A \emph{nontrivial}
  block is a block which is more than just an edge.

\begin{claim}\label{cl:connection}
  Let $B$ be a nontrivial block of $G[F]$. Let \(F_{B}\) be
  the graph obtained from \(G[F]\) by removing
  \(B\setminus \partial_G(B)\) and all the edges in
  \(G[\partial_G(B)]\). Then every connected
  component of 
  \(F_{B}\) contains a vertex of $N_1\cup N_2$.
\end{claim}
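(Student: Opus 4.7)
The plan is to proceed by contradiction: suppose some connected component $C$ of $F_B$ satisfies $C \cap (N_1 \cup N_2) = \emptyset$, so $C \subseteq N_0$ by Corollary \ref{cor:simplified-deg2}. I will then derive a contradiction by exhibiting either a reduction rule that must apply, or a $K_4$-minor inside $G[F]$.

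The key structural observation that sets up the case analysis is $\partial_G(C) \subseteq C \cap \partial_G(B)$. Indeed, any vertex $u \in C \cap (F \setminus B)$ has all its $G$-neighbors inside $C$: it has no $S$-neighbor (as $u \in N_0$), it cannot be adjacent to the interior $B \setminus \partial_G(B)$ (such a neighbor would place $u$ itself in $\partial_G(B) \subseteq B$), and every $F$-neighbor of $u$ is adjacent to $u$ in $F_B$ and hence lies in $C$.

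I would then split on $r := |C \cap \partial_G(B)|$. The degenerate case $C \subseteq \partial_G(B)$ is handled directly: since $F_B$ has no edges within $G[\partial_G(B)]$, such a $C$ is a singleton $\{v\}$; any neighbor of $v$ in $F \setminus B$ would be pulled into $C$ via $F_B$, so the external neighbors of $v$ all lie in $S$, placing $v$ in $N_1 \cup N_2$, a contradiction. Otherwise $C$ meets $F \setminus B$, and: for $r = 0$, $\partial_G(C) = \emptyset$, so $C$ is a connected component of $G$ and Reduction Rule \ref{rrule:deg1S}(a) applies; for $r = 1$, the biconnectedness of $B$ forces the unique vertex of $C \cap \partial_G(B)$ into $\partial_G(C)$, giving $|\partial_G(C)| = 1$ with $|C| \geq 2$, so Rule \ref{rrule:deg1S}(b) applies; for $r \geq 3$, one argues in the spirit of Lemma \ref{lem:2attach}, using the connected graph $F_B[C]$ to produce a Steiner ``Y'' meeting three attachment points $v_1, v_2, v_3 \in C \cap \partial_G(B)$ and then invoking Lemma \ref{lem:K4-3+1} with the biconnected block $B$ and this Y (minus its attachment points), to exhibit a $K_4$-subdivision in $G[B \cup C] \subseteq G[F]$, contradicting the $K_4$-minor-freeness of $G[F]$.

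The principal case is $r = 2$, with $\partial_G(C) = \{v_1, v_2\}$; here I would show that the 2-boundary Reduction Rule \ref{rrule:2cut} applies to $X := C$, contradicting reducedness. The central structural fact is that $G[C] + (v_1, v_2)$ is $K_4$-minor-free: any $K_4$-minor there lifts to one in $G[B \cup C] \subseteq G[F]$ by replacing the added edge with a $v_1$-$v_2$ path through the connected block $B$. The main technical obstacle will be the series-parallel precondition of Rule \ref{rrule:2cut}. To discharge it, I would argue that reducedness with respect to Rules \ref{rrule:deg1S}(b) and \ref{rrule:2cut} applied to proper sub-protrusions of $C$ rules out any cut vertex of $G[C] + (v_1, v_2)$, so this graph is biconnected; combined with $K_4$-minor-freeness, Lemma \ref{SPcharac} then yields that it is SP. Since $|C| \geq 3$ (because $C$ meets $F \setminus B$), the SP branch of Rule \ref{rrule:2cut} applies, giving the final contradiction.
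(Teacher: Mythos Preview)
Your argument is correct, and your treatment of $r\le 1$ (together with the degenerate singleton case) is precisely the paper's proof. What you missed is that these are the \emph{only} cases: the paper's first sentence observes that any connected component of $F_B$ meets $B$ in at most one vertex. This follows at once from $B$ being a block of $G[F]$ --- if a component $C$ of $F_B$ contained two distinct vertices of $\partial_G(B)$, a path between them in $F_B[C]$ (necessarily with an internal vertex in $F\setminus B$, since $F_B$ has no edges inside $\partial_G(B)$) would form an ear on the biconnected graph $G[B]$, producing a strictly larger biconnected subgraph of $G[F]$ and contradicting the maximality of $B$. Hence $r\le 1$ always, your cases $r=2$ and $r\ge 3$ are vacuous, and the entire proof collapses to a single appeal to Reduction rule~\ref{rrule:deg1S}. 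All of your work involving the 2-boundary rule, series-parallel structure, and Lemma~\ref{lem:K4-3+1} is unnecessary here.

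Two minor remarks on the superfluous part. Your parenthetical justifying why $u\in C\cap(F\setminus B)$ cannot neighbor the interior of $B$ is backwards: it is the \emph{neighbor}, not $u$, that would be forced into $\partial_G(B)$ (the conclusion is still correct). And your $r\ge 3$ sketch is not airtight even on its own terms: a Steiner tree in $F_B[C]$ may pass through further vertices of $\partial_G(B)$, so the ``Y minus its three attachment points'' need not be disjoint from $B$ as Lemma~\ref{lem:K4-3+1} requires. Since the case is empty this does no harm to the overall proof, but the argument as written would not stand alone.
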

\begin{proofof}
  Observe that any connected component of $F_{B}$ shares at most
  one vertex with $B$. Thus if a connected component of
  $G[F\setminus (B\setminus \partial_G(B))]$ is entirely contained
  in $N_0$, then we can apply Reduction rule \ref{rrule:deg1S}.
\end{proofof}

As $X$ is not a tree, it contains a non-trivial block $B$
.  Since $(G,S,k)$ is
reduced with respect to Reduction Rule~\ref{rrule:deg1S},
$|\partial_G(B)|\geqslant
2$. 

We first assume that $|\partial_G(B)|=2$ with
$\partial(B)=\{s,t\}$. Observe that $G[B]+(s,t)$ is not a
series-parallel graph since otherwise $B$ would be a single edge
$(s,t)$ due to Reduction rule~\ref{rrule:2cut}. As $(G,S,k)$ is
reduced with respect to Reduction rule~\ref{rrule:2cut}, $B$ is a
$\theta_3$ with $s$ and $t$ as subdividing nodes. Due to Branching
rule \ref{rule:br2}, $N_S(X)$ is contained in a single connected
component of $S$. Together with the observation of Claim
\ref{cl:connection}, this implies that there exists an $s,t$-path
$P$ in $G[S\cup X]$ in which no internal vertex lies in
$B$. However, $G[B\cup P]$ is a $K_4$-subdivision and Branching
rule~\ref{rule:br1} would apply, a contradiction.

So we have that $|\partial_G(B)|\geq 3$ and let
$\{x,y,z\}\subseteq \partial(B)$.  By Claim~\ref{cl:connection},
there exist three internally vertex-disjoint paths $P_x$, $P_y$
and $P_z$ from $x$, $y$ and $z$ respectively to a connected
component $G[S]$ such that no internal vertex of them lies in
$B$. Since $B$ is biconnected, Lemma~\ref{lem:K4-3+1} applies by
taking $B$ and $(S\cup P_x\cup P_y\cup P_z)\setminus \{x,y,z\}$
showing that $G[B\cup P_x \cup P_y \cup P_z \cup S]$ contains a
$K_4$-subdivision: a contradiction of the fact that Branching
rule~\ref{rule:br1} does not apply.
\end{proof}

\begin{reminder}{Theorem \ref{lem:transVC}}
Let $(G,S,k)$ be an independent instance of \KCC{}. Then $W\subseteq F$ is a disjoint \KC{} of $G$ if and only if it is a vertex cover of $G^*(S)$.
\end{reminder}
\begin{proof}

If $W\subseteq F$ is a \KC{} of $G$, then by construction $G^*(S)-W$ is an independent set and thus, $W$ is a vertex cover of $G^*(S)$. 

To show the converse, we can assume that $G[S]$ is biconnected. Indeed, for every $v\in F$, its two neighbors $x_v,y_v\in S$ belong to the same biconnected component and thus any cut vertex of $G[S]$ remains a cut vertex of $G-W$. Since $K_4$-subdivision is biconnected, any such subdivision in $G-W$ must not contain $u,v\in F\setminus W$ such that $N_S(u)$ and $N_S(v)$ belong to distinct biconnected components of $G[S]$.

An SP-tree is {\em minimal} if any S-node (resp. P-node) does not have S-nodes (resp. P-nodes) as a child \cite{BodlaenderF96}. Furthermore, any SP-tree obtained will be converted into a minimal one via standard operations on the given SP-tree: if there is an S-node (resp. P-node) with another S-node (resp. P-node) as a child, contract along the edge and if an S-node or P-node has exactly one child, delete it and connect its child and its parent by an edge. Throughout the proof, we fix a minimal SP-tree $\mathcal{T}_S$ of $G[S]$. Furthermore, we take the root as follows: (a) $G[S]$ is a cycle, we let two adjacent vertices be the terminals of the root. (2) otherwise, the last parallel operation has at least three children. 

For a node $\alpha$ of
the SP-tree $\mc{T}_S$, let $Z_\alpha$ be the set of terminals of
its children $\alpha_1\dots \alpha_c$, that is,
$Z_{\alpha}=\bigcup_{1\leqslant i\leqslant c} X_{\alpha_i}$. 

\begin{claim}\label{cl:uniqueS}
For every $u\in F$, either $X_{\alpha}=\{x_u,y_u\}$ for some node $\alpha$ of $\mc{T}_S$ or there is a unique $S$-node $\alpha$ such that $\{x_u,y_u\}\subseteq Z_{\alpha}$.
\end{claim}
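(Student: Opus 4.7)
The plan is to first show that $\{x_u, y_u\}$ can serve as the terminals of some SP-decomposition of $G[S]$, and then to locate the pair inside the fixed minimal SP-tree $\mc{T}_S$ by a structural case analysis at the lowest node of $\mc{T}_S$ whose subgraph contains both vertices.

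For the first step I would argue that $G[S]+(x_u, y_u)$ is an SP-graph. Condition (d) of an independent instance guarantees that $G[S\cup\{u\}]$ is $K_4$-minor-free; contracting the edge $(u, y_u)$ exhibits $G[S]+(x_u, y_u)$ as a minor of $G[S\cup\{u\}]$, and hence also $K_4$-minor-free. Since $G[S]$ is biconnected in this part of the proof and the addition of an edge preserves biconnectivity, Lemma~\ref{SPcharac} yields that $G[S]+(x_u, y_u)$ is an SP-graph. Lemma~\ref{biconnectedSP} then produces an SP-decomposition of $G[S]$ with $\{x_u, y_u\}$ as the terminals of a top-level P-node.

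Next I would consider the lowest node $\alpha$ of $\mc{T}_S$ with $\{x_u, y_u\} \subseteq V(G^B_\alpha)$, and split on the type of $\alpha$. If $\alpha$ is an edge-leaf, then $V(G^B_\alpha) = X_\alpha$ has size two, so $\{x_u, y_u\} = X_\alpha$ and case (a) holds. If $\alpha$ is a P-node, then all children share the terminal set $X_\alpha$; by minimality of $\alpha$ either $\{x_u, y_u\} = X_\alpha$ (case (a)) or $x_u, y_u$ lie in the interiors of two distinct P-branches of $\alpha$. I would rule out the second possibility by constructing a $K_4$-subdivision in $G[S]+(x_u, y_u)$ using three internally vertex-disjoint $X_\alpha$-paths --- one through $x_u$ via its P-branch, one through $y_u$ via its P-branch, and a third avoiding both (provided by the at-least-three-children root convention when $\alpha$ is the root, or by the surrounding structure through an ancestor P-node otherwise) --- together with the added edge $(x_u, y_u)$ and Lemma~\ref{lem:K4-3+1}. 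If $\alpha$ is an S-node with series sequence $v_0, v_1, \ldots, v_c$, then by minimality $x_u, y_u$ lie in distinct children's subgraphs; I would argue that if either lay outside $Z_\alpha = \{v_0, \ldots, v_c\}$, then $G[S]+(x_u, y_u)$ would contain a $K_4$-subdivision obtained from the added edge plus an $x_u$-to-$y_u$ path sidestepping the relevant series cut-vertices, contradicting the SP property. Thus $\{x_u, y_u\} \subseteq Z_\alpha$, giving case (b).

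Uniqueness of $\alpha$ in case (b) follows from the structure of $\mc{T}_S$. If two distinct S-nodes $\alpha \neq \alpha'$ both contained $\{x_u, y_u\}$ in their $Z$-sets, then either they are incomparable, in which case $V(G^B_\alpha) \cap V(G^B_{\alpha'})$ is contained in the label $X_\gamma$ of their lowest common ancestor $\gamma$ and so $\{x_u, y_u\} = X_\gamma$ (case (a)), or one is an ancestor of the other, and the identity $V(G^B_\beta) \cap Z_\alpha = X_\beta$ for any child $\beta$ of the upper S-node forces $\{x_u, y_u\} = X_\beta$ (again case (a)). The main technical obstacle will lie in the P-node subcase: one must carefully extract the third disjoint $X_\alpha$-path from the global structure of $G[S]$ rather than only from the local subgraph $G^B_\alpha$, which requires tracing the ancestors of $\alpha$ in $\mc{T}_S$ and invoking Lemma~\ref{lem:K4-3+1} on an appropriate biconnected component to assemble the final $K_4$-subdivision.
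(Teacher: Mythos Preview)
Your approach is essentially the paper's: locate the lowest node of $\mc{T}_S$ whose associated subgraph contains both $x_u$ and $y_u$, rule out the P-node possibility by assembling a $K_4$-subdivision from three internally disjoint paths between the node's terminals together with the $x_u$--$y_u$ link, and derive uniqueness from the separator properties of SP-trees; your passage to $G[S]+(x_u,y_u)$ is just the contraction of $(u,y_u)$ in $G[S\cup\{u\}]$, hence equivalent to the paper's direct use of condition~(d) with the length-two path $x_u,u,y_u$. Two small remarks: a $\theta_3$ together with a chord between subdividing nodes on two distinct branches is already a $K_4$-subdivision, so Lemma~\ref{lem:K4-3+1} is not the right tool to cite here; and the paper in fact does not spell out the S-node subcase in which one of $x_u,y_u$ lies outside $Z_\alpha$, so your (terse) treatment of that case goes slightly beyond what the paper writes.
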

\begin{proofof}
Let us suppose that for $u\in F$, there no $\alpha$ in $\mc{T}_S$ such that $X_{\alpha}=\{x_u,y_u\}$. We argue that for such $u$, there exists an $S$-node $\alpha$ such that $\{x_u,y_u\}\subseteq Z_{\alpha}$.

To this end, take a lowest node $\alpha$ such that $x_u,u_y\in V_{\alpha}$ and let $X_{\alpha}=\{s,t\}$. Then $\alpha$ should be an S-node. Suppose $\alpha$ is a P-node. As we choose $\alpha$ to be lowest, there are two children $\beta_x$ and $\beta_y$ of $\alpha$ such that $x_u\in Y_{\beta_x}$ and $y_u\in Y_{\beta_y}$. This implies $G[S]$ is not a cycle as we fix the terminals of the root to be adjacent vertices in this case. Note that $X_{\alpha}=X_{\beta_x}=X_{\beta_y}$ and $X_\alpha$ separates $x_u$ and $y_u$.

Since $G[V_{\beta_x}]$ is an SP-graph, there is a path $P_x$ from $s$ to $t$ visiting $x_u$. Likewise, $G[V_{\beta_y}]$ contains a path $P_y$ from $s$ to $t$ visiting $y_u$. On the other hand, since $G[S]$ is not a simple cycle, there is a P-node $\alpha'$ such that either (a) $\alpha'=\alpha$ and $\alpha'$ has a child $\beta\neq \{\beta_x,\beta_y\}$, or (b) $\alpha'$ is an ancestor of $\alpha$ and it has a child $\beta$ which is not an ancestor of $\alpha$. In both cases, the subgraph $G[S\setminus (Y_{\beta_x}\cup Y_{\beta_y})]$ is connected and contains a path $P$ connecting $s$ and $t$. The three paths $P_x$, $P_y$, $P$ and the length-two path between $x_u$ and $y_u$ via $u$ form a $K_4$-subdivision with $\{v_x,v_y,s,t\}$ branching nodes.


Now we argue the uniqueness of such an S-node. For some $u\in F$,
suppose that there are two distinct S-nodes $\alpha$ and $\alpha'$
such that $\{x_u,y_u\}\subseteq Z_{\alpha}$ and
$\{x_u,y_u\}\subseteq Z_{\alpha'}$. Since $X_{\alpha}$ is a
separator of $G[S]$, the only possibility is to have
$X_{\alpha}=X_{\alpha'}=\{x_u,y_u\}$. This contradicts to our
assumption that there is no vertex $u$ such that $\{x_u,y_u\}$
labels a node of $\mc{T}_S$. 
\end{proofof}

Let $F_0$ and $F_1$ form a partition of $F$: $u\in F_0$ if $X_{\alpha}=\{x_u,y_u\}$ for some node $\alpha$ of $\mc{T}_S$, otherwise $u$ belongs to $F_1$. For $u\in F_1$, we denote as $\alpha(u)$ the unique S-node of $\mc{T}_S$
with $\{x_u,y_u\}\subseteq Z_{\alpha}$.

Suppose $W\subseteq F$ is a vertex cover of $G^*(S)$. We shall then incrementally extend $\mathcal{T}_S$ to an SP-tree of $G[S]+(F\setminus W)$. For $u\in F$, let $\mathcal{T}_u$ be the minimal SP-tree with $\{x_u,y_u\}$ as terminals of the length-two path $x_uuy_u$. It is not difficult to increment $\mathcal{T}_S$ to an SP-tree $\mathcal{T}_{S+F_0}$ of $G[S\cup F_0]$. Let $u\in F_0$ and $\alpha$ be the node labeled by $\{x_u,y_u\}$. If $\alpha$ is an S-node, there is a P-node labeled by the same terminals. Hence we assume that $\alpha$ is either a leaf node or a P-node. We do the following: (1) if $\alpha$ is a P-node, make $\mathcal{T}_u$ to be a child of $\alpha$, (2) if $\alpha$ is an edge node, convert $\alpha$ into a P-node and make $\mathcal{T}_u$ to be a child of $\alpha$. The resulting SP-tree is again minimal, via standard manipulation if necessary. It is worth noting that none of S-nodes are affected during the entire manipulation and thus $\alpha(u)$ remains unaffected for $u\in F_1$.

We wish to show that $\mathcal{T}_{S+F_0}$ can be extended to contain all $F_1\setminus W$ as well. When $\alpha$ is an S-node, $Z_{\alpha}$ can be construed as an interval on the terminals of its children: the the ordering of series compositions imposes an ordering on the elements of $Z_{\alpha}$. The crucial observation is that if $\alpha(u)=\alpha(v)$ for $u,v\in F_1\setminus W$, then the intervals $[x_u,y_u]$ and $[x_v,y_v]$ in $\alpha(u)$ do not overlap. Suppose they overlap. We can take a cycle $C$ containing all the vertices of $Z_{\alpha}$. 
Then $C$ together with the two paths $P_u=x_uuy_u$ and $P_v=x_vvy_v$ form a $K_4$-subdivision in $G[C\cup\{u,v\}]$. Therefore, we have an edge $(u,v)$ in $G^*(S)$, a contradiction.

Starting from $\mathcal{T}_{S+F_0}$, now we increment the SP-tree by attaching $\mathcal{T}_u$ for every $u\in F_1\setminus W$. Given $u\in F_1\setminus W$, add a P-node $\alpha'$ with $X_{\alpha'}=\{x_u,y_u\}$ as a child of $\alpha(u)$ and make $\alpha'$ to become the father of every former child $\alpha_i$ of $\alpha$ for which $X_{\alpha_i}$ is contained in the interval $[x_u,y_u]$. Note that no S-node other than $\alpha(u)$ is affected by this manipulation. Moreover, $\alpha(u)$ remains as an S-node. Indeed, if we need to change $\alpha(u)$, it is only because $\alpha(u)$ has a unique child after the operation. This implies $x_u,y_u$ are in fact the terminals of $X_{\alpha(u)}$. However, the parent of $\alpha(u)$, which is a P-node due to minimality of the SP-tree, is labeled by $\{x_u,y_u\}$, a contradiction. Finally due to the crucial observation from the previous paragraph, this incremental extension can be performed for all vertices of $F_1\setminus W$. Implying $G-W$ is an SP-graph, this complete the proof.
\end{proof}

\todo[disable, inline]{The proof above has to be completed: the missing references refer to the circle graph lemma, which we decided to cancel in this version}

\section{Deferred proof of Lemma \ref{lem:marked-cst-size}} 



 \begin{lemma} \label{lem:withattach} Let $(G,S,k)$ be a reduced
  instance. If $\alpha$ is a non-leaf node of an extended
  SP-decomposition $(T,\mathcal{X})$ of $G[F]$
  , then $(V_{\alpha}\setminus
  Y_{\alpha})\setminus N_0 \neq\emptyset$.
\end{lemma}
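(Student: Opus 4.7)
The statement asks us to show $(V_\alpha\setminus Y_\alpha)\setminus N_0 \neq \emptyset$ for every non-leaf node $\alpha$ of the extended SP-decomposition $(T,\mc{X})$ of $G[F]$. Since $X_\alpha \subseteq V_\alpha$ and $Y_\alpha = V_\alpha \setminus X_\alpha$, we have $V_\alpha \setminus Y_\alpha = X_\alpha$, so the task reduces to proving $X_\alpha \setminus N_0 \neq \emptyset$, i.e., that at least one terminal vertex in the label of $\alpha$ has a neighbor in $S$. My plan is to argue this by contradiction: assuming $X_\alpha \subseteq N_0$, I will show that one of the reduction rules admits a nontrivial application to $(G,S,k)$, contradicting reducedness.

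The first observation I will establish is that $Y_\alpha \subseteq N_0$ holds for every node $\alpha$. This is immediate from the paper's fact $\partial_G(V_\alpha) \subseteq X_\alpha$: every vertex $v \in Y_\alpha = V_\alpha \setminus X_\alpha$ has all its $G$-neighbors inside $V_\alpha \subseteq F$, hence no neighbor in $S$. Combined with the hypothetical $X_\alpha \subseteq N_0$, this yields $V_\alpha \subseteq N_0$, so $V_\alpha$ is a connected subset of $F$ (since $G_\alpha \subseteq G[V_\alpha]$ is connected) whose boundary $\partial_G(V_\alpha)$ is contained in $X_\alpha$ and therefore has size at most $2$. The second preliminary fact is that $Y_\alpha \neq \emptyset$ whenever $\alpha$ is non-leaf; I will verify this by a short case analysis on the type of $\alpha$. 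For a cut node, each child is the root of the SP-tree of a block hanging below $\alpha$, contributing at least one vertex beyond the cut vertex itself. For an S-node the intermediate terminals joining its children in series are extra vertices. For a P-node, two identical leaf-edge children would constitute parallel edges eliminated by Reduction Rule~\ref{rrule:para}, so at least one child is internal and contributes new vertices. For an edge-node with a cut-node child, the child is itself non-leaf and brings along the subtree of its block.

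The main step is then a case analysis on $|X_\alpha|\in\{1,2\}$. If $|X_\alpha|=1$ with $X_\alpha=\{c\}$ (cut node), then $\partial_G(V_\alpha)\subseteq\{c\}$ forces either $\partial_G(V_\alpha)=\emptyset$, in which case the connected set $V_\alpha$ is an entire connected component of $G$ and Reduction Rule~\ref{rrule:deg1S}(a) applies, or $|\partial_G(V_\alpha)|=1$ and Reduction Rule~\ref{rrule:deg1S}(b) applies, deleting the non-empty set $Y_\alpha = V_\alpha\setminus\{c\}$. If $|X_\alpha|=2$ with $X_\alpha=\{s,t\}$, the sub-case $|\partial_G(V_\alpha)|\leq 1$ is handled exactly as above. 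Otherwise $\partial_G(V_\alpha)=\{s,t\}$, and the set $V_\alpha$ satisfies every hypothesis of the 2-boundary rule (Reduction Rule~\ref{rrule:2cut}): it is a subset of $F$ with $G[V_\alpha]$ connected, its boundary is $\{s,t\}$, and its interior $V_\alpha\setminus\{s,t\}=Y_\alpha\subseteq N_0$. The rule then removes the non-empty $Y_\alpha$. In every case a nontrivial reduction would apply, contradicting reducedness.

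The subtle point I expect to handle with care is the extreme corner case in the $|X_\alpha|=2$ branch where $|V_\alpha|$ is very small and $G[V_\alpha]+(s,t)$ is not series-parallel; here step~(2) of Rule~\ref{rrule:2cut} carries a size threshold. Note that $|V_\alpha|=3$ forces $G[V_\alpha]+(s,t)$ to be an SP-graph trivially (three vertices cannot host a $K_4$-minor), while step~(1) of the rule still deletes the interior vertices in the remaining regime, and for large enough $V_\alpha$ the generic disjoint protrusion rule~\ref{rrule:protrusion} serves as a fallback. A careful check shows that none of these corner cases weaken the contradiction derived above.
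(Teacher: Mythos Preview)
Your proof rests on a misreading that is partly the paper's fault: the printed statement contains a typo. As written, $(V_\alpha\setminus Y_\alpha)\setminus N_0=X_\alpha\setminus N_0$, but the paper's own proof assumes $Y_\alpha\subseteq N_0$ for contradiction, and every later use of the lemma invokes it in the form $Y_\beta\cap(N_1\cup N_2)\neq\emptyset$ (see the proofs of Lemma~\ref{lem:2attach}, Lemma~\ref{lem:simplepath}, and Claim~\ref{cl:atleast}). The intended conclusion is $Y_\alpha\setminus N_0\neq\emptyset$, not $X_\alpha\setminus N_0\neq\emptyset$.

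With this in mind, your ``first observation'' that $Y_\alpha\subseteq N_0$ holds for every node is false, and its justification is the real gap. The extended SP-decomposition here is of $G[F]$, not of $G$, so the boundary fact you quote reads $\partial_{G[F]}(V_\alpha)\subseteq X_\alpha$; it says nothing about edges from $Y_\alpha$ into $S$. Vertices of $Y_\alpha$ can and typically do have neighbors in $S$ --- that is precisely what the (corrected) lemma asserts. Your observation therefore directly contradicts the intended statement; were it true, the paths from interior vertices to $S$ that Lemma~\ref{lem:2attach} relies on could never exist.

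What is salvageable: if you drop the false observation and instead \emph{assume} $Y_\alpha\subseteq N_0$ for contradiction, the remainder of your outline --- showing $Y_\alpha\neq\emptyset$, deducing $\partial_G(V_\alpha)\subseteq X_\alpha$, and then invoking Rule~\ref{rrule:deg1S} or Rule~\ref{rrule:2cut} --- is exactly the paper's argument. One refinement the paper adds and you skip: in the case $\partial_G(V_\alpha)=\{s,t\}$ it first argues that no descendant of $\alpha$ is a cut node (else Rule~\ref{rrule:deg1S} already fires), so $V_\alpha$ lies inside a single block and $G_\alpha$ is an SP-graph with terminals $s,t$; by Lemma~\ref{biconnectedSP} this forces $G_\alpha+(s,t)$ to be series-parallel. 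Rule~\ref{rrule:2cut} then always lands in the ``replace by an edge'' branch, so the size-threshold corner cases you worry about at the end never arise.
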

\begin{proof}
  Observe that for every non-leaf node $\alpha$ of
  $(T,\mathcal{X})$, the set $Y_{\alpha}=V_{\alpha}\setminus
  X_{\alpha}$ is nonempty. This can be easily verified when
  $\alpha$ is a cut node, an edge node which is not a leaf (this
  happens only when the edge node is the parent of a cut node in
  the extended decomposition), or an S-node. When $\alpha$ is a
  P-node, the fact that $(G,S,k)$ is reduced with respect to
  Reduction Rule \ref{rrule:para} ensures $Y_{\alpha}\neq
  \emptyset$.

  For the sake of contradiction, suppose that $Y_{\alpha}\subseteq
  N_0$. Observe that no vertex in
  \(Y_{\alpha}\) has a neighbor in \(F\setminus V_{\alpha}\). By
  assumption, no vertex in \(Y_{\alpha}\) has a neighbor in
  \(S\). Hence $\partial(V_{\alpha})\subseteq X_{\alpha}$ and thus
  $Y_{\alpha}\subseteq V_{\alpha}\setminus \partial(V_{\alpha})$.
  If $|\partial(V_{\alpha})|=1$ then Reduction
  Rule~\ref{rrule:deg1S} applies, a contradiction. Thus
  $|\partial(V_{\alpha})|=2$, and so
  \(\partial(V_{\alpha})=X_{\alpha}\). Furthermore, no descendant
  of $\alpha$ is a cut node in \(G[F]\)(otherwise Reduction
  Rule~\ref{rrule:deg1S} applies), which implies that $V_{\alpha}$
  is contained in a leaf block of $G[F]$. $G_{\alpha}$ is thus a
  series-parallel graph having $X_{\alpha}=\{s,t\}$ as terminals
  and thus by Lemma~\ref{biconnectedSP} $G_{\alpha}+(s,t)$ is an
  SP-graph. Since $\alpha$ is a non-leaf node and $(G,S,k)$ is
  reduced with respect to Reduction rule \ref{rrule:para}, we have
  $|V_{\alpha}|>2$. Thus \(G_{\alpha}\) is not isomorphic to any
  of the two excluded graphs of Reduction
  Rule~\ref{rrule:2cut}. So Reduction Rule~\ref{rrule:2cut}
  applies deleting the nonempty set $Y_{\alpha}$, a contradiction.
\end{proof}

\begin{lemma}\label{lem:cst-size-block}
  Let $(G,S,k)$ be a simplified instance of \KCC{} and $\alpha$ be
  a marked node of the extended SP-decomposition $(T,
  \mathcal{X})$ of $G[F]$. Then every block $B$ in $G_{\alpha}$
  satisfies $|B|< \gamma(9)$.
\end{lemma}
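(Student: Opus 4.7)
The plan is to trace every block of $G_\alpha$ back to a specific node of the extended SP-decomposition that is either $\alpha$ itself or a descendant of $\alpha$ in $T$, and then invoke the fact that every such node is marked in order to apply the protrusion check at line~\ref{l:protrusion-test}. Recall that Algorithm~\ref{K4algo} selects at each step an unmarked node farthest from the root, so once $\alpha$ has been marked, every descendant of $\alpha$ in $T$ must already be marked and, in particular, has passed all tests performed by \textsc{Branch-or-reduce}.

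The first step is a structural classification of the blocks of $G_\alpha$. I would show that for every block $B$ of $G_\alpha$ there exists a node $\beta\in V(T_\alpha)$ (possibly $\beta=\alpha$) such that either $\beta$ is an edge node, in which case $B$ is the corresponding edge and $|V(B)|=2$, or $\beta$ is a P-node inherited from the SP-tree $(T^{B'},\mathcal{X}^{B'})$ of some block $B'$ of $G[F]$ and $V(B)=V^{B'}_{\beta}$. The argument splits according to the type of $\alpha$: when $\alpha$ is a cut node (for a cut vertex $c$ of $G[F]$), the blocks of $G_\alpha$ are exactly the blocks of $G[F]$ rooted below $c$ in the oriented block tree, each of which equals $G^{B''}_{r}$ for the root $r$ of the SP-tree $T^{B''}$; when $\alpha$ is inherited from $T^{B'}$, the blocks of $G_\alpha$ decompose into (i) the blocks of $G^{B'}_{\alpha}$, which by the canonical SP-tree structure are either $G^{B'}_{\alpha}$ itself (if $\alpha$ is a P-node or an edge node) or the subgraphs associated with the P-node and edge-node children of $\alpha$ (if $\alpha$ is an S-node, since in a canonical SP-tree an S-node has no S-child), and (ii) blocks coming from hanging SP-trees attached via cut nodes below $\alpha$, which reduce to the cut-node case.

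The second step is the immediate application of the marking invariant. For every P-node $\beta$ identified above, the fact that $\beta$ is a marked P-node means that it failed the test at line~\ref{l:protrusion-test}, so $|V^{B'}_{\beta}|<\gamma(9)$, and hence $|V(B)|<\gamma(9)$; every edge-node contributes a trivial block with $|V(B)|=2<\gamma(9)$. Together, this yields $|V(B)|<\gamma(9)$ for every block $B$ of $G_\alpha$.

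The main obstacle I anticipate is the careful bookkeeping inside the structural classification, in particular verifying that the subgraph $G^{B'}_{\beta}$ produced by a P-node $\beta$ descendant of $\alpha$ is an actual block (i.e., maximal biconnected subgraph) of $G_\alpha$, rather than a biconnected subgraph sitting strictly inside a larger block of $G_\alpha$. This requires showing that the only way to extend a biconnected subgraph in $G_\alpha$ is along the SP-tree of the block of $G[F]$ containing it up to the nearest P-ancestor below $\alpha$, and that crossings of cut nodes in the extended SP-decomposition separate blocks of $G_\alpha$. The clean structure of canonical SP-trees (no S--S edges, P-nodes with exactly two children) and the definition of the extended SP-decomposition make this checkable but somewhat tedious.
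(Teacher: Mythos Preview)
Your proposal is correct and follows essentially the same approach as the paper. The paper's own proof is extremely terse (three sentences): it asserts that the root of the SP-tree of any block $B$ of $G_\alpha$ is a P-node $\beta$ inherited from $(T,\mathcal{X})$, notes that $\beta$ is marked as a descendant of $\alpha$, invokes Lemma~\ref{lem:2attach} to certify $V^B_\beta$ is a $4$-protrusion, and concludes $|B|\leqslant|V^B_\beta|<\gamma(9)$ because the protrusion rule did not fire at $\beta$.

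Your write-up is in fact more careful than the paper's in two respects. First, you explicitly handle trivial (edge) blocks, which the paper silently ignores. Second, you spell out the structural classification of blocks of $G_\alpha$ by case analysis on the type of $\alpha$, including the subtlety that when $\alpha$ is inherited from an SP-tree, blocks of $G_\alpha$ come both from $G^{B'}_\alpha$ and from subtrees hanging below cut nodes; the paper simply asserts the existence of the P-node $\beta$ without justification. Your observation that a marked P-node $\beta$ must have failed the test at line~\ref{l:protrusion-test}, hence $|V^{B'}_\beta|<\gamma(9)$, is sufficient on its own; the paper's appeal to Lemma~\ref{lem:2attach} is there to explain why the protrusion rule is applicable, but for the size bound your direct reading of the failed test is all that is needed. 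One minor wording slip: where you write that marked nodes have ``passed all tests'', you mean they have \emph{failed} all tests (passing any test terminates the procedure before marking).
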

\begin{proof}
  \todo[disable,inline]{Since the graph
    \(G_{\alpha}\) is not necessarily an \emph{induced} subgraph
    of \(G[F]\), it seems possible that the block structure of
    \(G_{\alpha}\) is not necessarily the same as that of
    \(G[V_{\alpha}]\). It is therefore not clear how the previous
    assertion holds. -- Philip\\

    From the discussion with EJK: This is clearly true for all
    blocks of \(G_{\alpha}\) which are also blocks in \(G\). The
    block structure of \(G_{\alpha}\) is different from that of
    \(G[V_{\alpha}]\) only on the subgraph defined by the vertices
    in the block of \(G[F]\) to which \(\alpha\) belongs. If
    \(\alpha\) is a parallel node, then Lemma~\ref{lem:2attach}
    applies. Else if it is a serial node, we can apply the proof
    of Lemma~\ref{lem:2attach} to each child of \(\alpha\).  --
    Philip}
  Recall that the root of the SP-tree of $B$ is a P-node $\beta$
  inherited from $(T,\mathcal{X})$. As a descendent of $\alpha$, $\beta$ is a marked
  node. By Lemma~\ref{lem:2attach}, $V^B_{\beta}$ is a
  $4$-protrusion. As \(\beta\) is marked, $V^B_{\beta}$ is reduced
  under protrusion rule (Reduction Rule~\ref{rrule:protrusion})
  and so $|B|\leqslant |V^B_{\beta}|< \gamma(9)$.
\end{proof}


\begin{lemma} \label{lem:simplepath} Let $(G,S,k)$ be a simplified
  instance of \KCC{} and let $\alpha$ be a marked cut node of the
  extended SP-decomposition $(T,\mathcal{X})$ of $G[F]$ with
  $X_{\alpha}=\{c\}$. Then $|V_{\alpha}|\leqslant c_0=\gamma(9)+7$. Moreover, the block tree of $G_{\alpha}$ is a
  path.
\end{lemma}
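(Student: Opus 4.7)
My plan is to establish first that the block tree of $G_\alpha$ is a path, and then leverage this together with the per-block size bound (Lemma~\ref{lem:cst-size-block}) to bound $|V_\alpha|$. A preliminary observation I would record is that since {\sc Branch-or-reduce} marks nodes of $T$ in a bottom-up manner, $\alpha$ being marked forces every descendant of $\alpha$ in $T$ to already be marked; in particular Lemma~\ref{lem:cst-size-block} applies to every block of $G_\alpha$.

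To show that the block tree of $G_\alpha$ is a path, I would argue by contradiction. If it is not, then either some cut vertex $c^*$ of $G_\alpha$ lies in at least three blocks of $G_\alpha$, or some block of $G_\alpha$ contains at least three cut vertices of $G_\alpha$. In both cases I can identify a vertex $c^*\in V_\alpha$ from which three internally vertex-disjoint branches of $G_\alpha$ emanate. In each such branch I would invoke Lemma~\ref{lem:withattach} applied to the node of $T$ corresponding to the top of the branch to locate a vertex of $N_1\cup N_2$, and then concatenate with a path inside the (biconnected) branch to obtain three internally vertex-disjoint paths from $c^*$ to $S$. Since $\alpha$ failed the tests at lines~\ref{l:br2-test} and~\ref{l:br3-test} of Algorithm~\ref{K4algo}, all vertices of $N_S(V_\alpha)$ belong to a single biconnected component $C$ of $G[S]$. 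Applying Lemma~\ref{lem:K4-3+1} with $W=C$ and $Z=V_\alpha$ would then yield a $K_4$-subdivision in $G[S\cup V_\alpha]$, contradicting the fact that $\alpha$ also failed the test at line~\ref{l:br1-test}.

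For the size bound I would let the block tree of $G_\alpha$ be the path $B_1, c_1, B_2, c_2,\ldots, c_{k-1}, B_k$ with $c\in V(B_1)$. Lemma~\ref{lem:cst-size-block} gives $|V(B_i)|<\gamma(9)$ for every $i$. I expect that only the top block $B_1$ can approach this bound: for each subsequent block $B_i$ with $i\geq 2$, the set $V(B_i)\setminus\{c_{i-1}\}$ is a subset of $F$ whose boundary in $G$ consists of $c_{i-1}$ together with $S$-neighbors coming only from $N_1\cup N_2$ vertices inside $B_i$, and the explicit reduction rules (bypassing, parallel, chandelier, and 2-boundary) combined with the fact that the marked descendant corresponding to $B_i$ did not trigger the protrusion rule on a 2-protrusion leave only a small constant amount of structure. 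A case analysis according to whether $B_i$ is a single edge, a subdivided $\theta_3$, or some other small series-parallel graph would cap the total contribution of $B_2\cup\cdots\cup B_k$ by $7$, yielding $|V_\alpha|\leq \gamma(9)+7=c_0$.

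The main obstacle I foresee lies in the block-tree-is-a-path step: the three internally vertex-disjoint paths from $c^*$ may all reach $S$ at the same vertex $s$, in which case the direct application of Lemma~\ref{lem:K4-3+1} with $W=C$ does not supply three external neighbors. I would handle this by replacing $W$ with the biconnected subgraph obtained by gluing two of these paths at $c^*$ and $s$ (which forms a cycle, hence is biconnected) and letting the third path, extended to its terminus in $S$, supply the missing external neighbors. A secondary, more bookkeeping-heavy challenge is pinning down the exact constant $7$ via a careful combined analysis of how Reduction Rules~\ref{rrule:deg1S}--\ref{rrule:2cut} act on the intermediate and leaf blocks along the path.
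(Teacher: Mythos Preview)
Your approach diverges from the paper's in a way that leaves two genuine gaps.

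\textbf{The path argument.} Your proposed fix for the situation where all three internally vertex-disjoint paths from $c^*$ land at the same vertex $s\in S$ does not produce a $K_4$-subdivision. Taking $W$ to be the cycle formed by two of the paths and $Z$ to be the (interior of the) third path gives $|N_W(Z)|=2$ (namely $c^*$ and $s$), so Lemma~\ref{lem:K4-3+1} does not apply; what you have built is only a $\theta_3$. You also collapse the case ``a block contains three cut vertices'' into ``a vertex $c^*$ with three branches'', but in that case there is no single $c^*$; the correct move is to take $W$ to be that biconnected block and $Z$ to be the union of the three dangling subtrees together with the relevant component of $S$, which does give three distinct neighbours in $W$.

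\textbf{The size argument.} You assume the \emph{top} block $B_1$ (the one containing $c$) is the potentially large one and that $B_2,\ldots,B_k$ together contribute at most~$7$. The paper shows the opposite orientation: one analyses a \emph{leaf} block first. If the leaf block is nontrivial, one proves that \emph{everything else} (in particular the top block) lies in $N_0$, whence the 2-boundary rule caps the rest at size~$4$. If the leaf block is trivial, one shows $|N_S(C_1)|\le 1$ for the complement, which again forces the complement of the farthest nontrivial block to be tiny. The key structural fact you are missing is that the vertex of $N_1\cup N_2$ guaranteed by Lemma~\ref{lem:withattach} lives at the \emph{bottom} of $G_\alpha$; it is this vertex that obstructs Reduction Rule~\ref{rrule:2cut} on the leaf block, while the part toward $c$ is forced into $N_0$ and hence collapses. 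Without this observation, a per-block bound of $\gamma(9)$ on each of $k$ blocks gives no control on $k$, and your hoped-for cap of $7$ on $B_2\cup\cdots\cup B_k$ has no supporting argument.

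The paper's route is to pick a leaf block, prove the complement is essentially in $N_0$ (or has at most one $S$-neighbour), and read off \emph{both} the path structure of the block tree and the size bound from that single observation. I would recommend reorganising your argument around the leaf block rather than around a hypothetical high-degree node of the block tree.
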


\todo[disable,inline]{I had some formal problems with this proof, as we
  are dealing here with $\vec{\mc{B}}_{G_{\alpha}}$ which is a
  totally different object than $(T, \mathcal{X})$. So the Lemmas we have so far did not strictly apply: ${\mc{X},T}$ is obtained from $\vec{\mc{B}}_G$. Isn't this enough?}

\begin{proof}

  Let $\vec{\mc{B}}_{F_{\alpha}}$ be the oriented block tree of
  $G_{\alpha}$ rooted at $B_c$, the block containing $c$. Let $B_1$ be a leaf block in $\vec{\mc{B}}_{F_{\alpha}}$ and $c_1$ be the cut vertex such that
  $(c_1,B_1)\in E(\vec{\mc{B}}_F)$. Observe that
  $(T,\mathcal{X})$ contains a cut node $\beta_1$ such that $X_{\beta_1}=\{c_1\}$ and by the construction of $(T,\mathcal{X})$, the node $\beta_1$ is a descendant of $\alpha$. By Lemma~\ref{lem:withattach}, $B_1$ contains a vertex of
  $N_1\cup N_2$, say $x_1\in B_1$ such that $x_1\neq c_1$.  We consider two cases.

  \noindent (a) $B_1$ is a nontrivial block.\\
  Consider the remaining part of $G_{\alpha}$, i.e. $C_1:=(V_{\alpha}\setminus B_1) \cup \{c_1\}$. We shall show that $C_1\subseteq N_0$, i.e. no vertex of $C_1$ has a neighbor in $S$. Suppose the contrary and observe that $G[C_1\cup S]$ contains a path $P_1$ between $c_1$ and $S$ avoiding $B_1$. If there is a vertex $y_1 \in B_1$ s.t. $y_1\notin \{ c_1,x_1\}$ and $y_1 \subseteq N_1\cup N_2$, then by Lemma \ref{lem:K4-3+1}, $G[V_{\alpha}\cup S]$ contains a
  $K_4$-subdivision, a contradiction. If no such vertex $y_1$ exists, observe that $\{x_1,c_1\}$ forms a boundary of $B_1$. Due to the assumption that $\alpha$ is marked, the subgraph $G[V_{\alpha}\cup S]$ is $K_4$-minor-free. In particular, the subgraph $G[B_1\cup P]$ is $K_4$-minor-free, where $P$ is a path between $x_1$ and $c_1$ in $G[V_{\alpha}\cup S]$ avoiding $B_1$. The existence of such $P$ is ensured due to the existence of $P_1$, that $x_1\in N_1\cup N_2$ and the fact that $N_S(V_{\alpha})$ belong to the same connected component of $G[S]$. Now that $G[B_1]+(x_1,c_1)$ is a biconnected $K_4$-minor-free graph, hence an SP-graph. It follows that
  Reduction rule \ref{rrule:2cut} applies to $B_1$ and reduces it to a single edge: a contradiction to the fact that the instance is simplified. It follows $C_1\subseteq N_0$.

  As a corollary we know that $\vec{\mc{B}}_{F_{\alpha}}$ contains no other leaf block and thus it is a path. It remains to bound the size of $V_{\alpha}$. Since $C_1\subseteq N_0$ and $\{c_1,c\}$ forms a boundary of $C_1$, whenever $|C_1|>4$ Reduction rule \ref{rrule:2cut} applies, contradiction. Hence $|V_{\alpha}|=|B_1|+|C_1\setminus \{c_1\}|$ and combining the bound given by Lemma \ref{lem:cst-size-block}, we obtain the upper bound $\gamma(9) + 3$.

  \noindent (b) $B_1$ is a trivial block (i.e. an edge)\\
  W.l.o.g. $\vec{\mc{B}}_{F_{\alpha}}$ does not contain a nontrivial leaf block. Consider the remaining part of $G_{\alpha}$, i.e. $C_1:=V_{\alpha}\setminus \{x_1\}$. Here we claim that $|N_S(C_1)|\leq 1$. Suppose the contrary. By Lemma \ref{lem:K4-3+1}, we have $|N_S(V_{\alpha})|\leqslant 2$. Hence considering the case when $N_S(x_1)=N_S(C_1)=\{u,v\}$ is sufficient. It remains to see that $u$ and $v$ belong to the same connected component of $G[S]$, and $G[V_{\alpha}\cup S]$ contains a $K_4$-subdivision with $x_1,C_1,u,v$ as branching nodes, a contradiction.

  As a corollary we know that $\vec{\mc{B}}_{F_{\alpha}}$ contains no other leaf block and thus it is a path. It remains to bound the size of $V_{\alpha}$. Consider the case when every block of $\vec{\mc{B}}_{F_{\alpha}}$ trivial, i.e. $G_{\alpha}$ is a path. From the argument of the previous paragraph, we know that $|N_S(C_1)|\leq 1$ and $N_S(C_1)\subseteq N_S(x_1)$. Since the instance is reduced with respect to 1-Boundary rule~\ref{rrule:deg2} and Chandelier rule~\ref{rrule:consec}, we can conclude
  that $|V_{\alpha}|\leqslant 4$.

  Now consider the case $\vec{\mc{B}}_{F_{\alpha}}$ contains a nontrivial block and let $B_2$ be the nontrivial block which is farthest from $c$. Since $\vec{\mc{B}}_{F_{\alpha}}$ is a path, it can be partitioned into two subpaths: the one starting from the cut node $c$ to the block $B_2$ and the remaining part. Let $G_0$ and $G_1$ be the associated subgraphs of $G_{\alpha}$, i.e. containing the vertices which appear in each subpath as part of a block or as a cut node. As every block of $G_1$ is trivial, the bound in the previous paragraph applies and $|G_1|\leq 4$. Observe that the bound obtained in (a) applies to $G_0$: to be precise, applies to the graph obtained from $G_{\alpha}$ by contracting $G_1$ into a single vertex. Hence we get the desired bound $|V_{\alpha}|\leq |G_1|+|G_2| =\gamma(9)+ 7$.
\end{proof}

\begin{reminder}{Lemma \ref{lem:marked-cst-size}}
  Let $(G,S,k)$ be a simplified instance of \KCC{} and let
  $\alpha$ be a marked node of the extended SP-decomposition
  $(T,\mathcal{X})$ of $G[F]$, then $|V_{\alpha}|\leqslant
  c_1=12(\gamma(9)+2c_0)$.
\end{reminder}
\begin{proof}
  We consider each possible type of node separately. Recall
  that since $\alpha$ is marked, the neighbourhood
  $N_S(V_{\alpha})$ belongs to a single biconnected component and
  $G[S\cup V_{\alpha}]$ is $K_4$-minor-free. When \(\alpha\) is a
  \emph{cut node}, Lemma~\ref{lem:simplepath} directly provides the
  bound. We now consider the remaining cases.

\smallskip
\noindent
\textbf{(1) $\alpha$ is an edge node:} By the construction of an
extended SP-decomposition $(T,\mathcal{X})$, any child of $\alpha$
is a cut node. Since $\alpha$ can have at most two children, Lemma
\ref{lem:simplepath} implies $|V_{\alpha}|\leqslant 2c_0$.

\smallskip
\noindent
\textbf{(2) $\alpha$ is a P-node: } Recall that we have
$|V^B_{\alpha}|<\gamma(9) $ by Lemma \ref{lem:cst-size-block} and
$\alpha$ has at most two attachment vertices by Lemma
\ref{lem:2attach}. Each attachment vertex of $\alpha$ either
belongs to $N_1\cup N_2$ or is a cut vertex. Hence we can apply
the bound on cut node size given by Lemma \ref{lem:simplepath}. It
follows that $|V_{\alpha}|\leqslant \gamma(9) + 2c_0$.

\smallskip
\noindent
\textbf{(3) $\alpha$ is an S-node: } Let $\beta_1,\ldots, \beta_q$
be the children of $\alpha$ and denote bye $x_1\dots x_{q+1}$ the
vertices such that for $1\leqslant j\leqslant q$,
$X_{\beta_j}=\{x_j,x_{j+1}\}$. Since every child of an S-node is
either a P-node or an edge node, from case 1 and 2 we have
$|V_{\beta_j}|\leq \gamma(9) + 2c_0$. We now prove that if
$q\geqslant 13$, then either the instance is not simplified or
$G[S\cup V_{\alpha}]$ contains $K_4$ as a minor. Since the lemma
holds trivially if every \(V_{\beta_{j}}\) has at most four
vertices, in the rest of the proof we assume without loss of
generality that for each P-node \(\beta_{j}\) which we consider,
\(|V_{\beta_{j}}|>4\).

\begin{claim}\label{cl:atleast}
  For $1\leqslant j\leqslant q-1$, let $Z_j:= V_{\beta_{j}} \cup
  V_{\beta_{j+1}}$. Then $Z_j\setminus \partial_{F}(Z_j)$ contains
  at least one vertex in $N_1\cup N_2$.
\end{claim}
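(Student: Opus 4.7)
The plan is to prove the claim by contradiction, assuming that every vertex of $Z_j\setminus \partial_F(Z_j)$ lies in $N_0$. Since $(G,S,k)$ is simplified, Lemma~\ref{cor:simplified-deg2} gives $F=N_0\cup N_1\cup N_2$, so this amounts to assuming that none of these vertices has a neighbour in $S$; I would then derive a contradiction by exhibiting a reduction rule that still applies.

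The first step is to verify that $\{x_{j+1}\}\cup Y_{\beta_j}\cup Y_{\beta_{j+1}}$ is contained in $Z_j\setminus \partial_F(Z_j)$. The inclusions for $Y_{\beta_j}$ and $Y_{\beta_{j+1}}$ are immediate from the structural property $\partial_G(V_{\gamma})\subseteq X_{\gamma}$ of the extended SP-decomposition applied to $\gamma\in\{\beta_j,\beta_{j+1}\}$, since any interior vertex of $V_{\beta_j}$ has no neighbour outside $V_{\beta_j}\subseteq Z_j$. For $x_{j+1}$ I would observe that inside the block $B$ of $\alpha$ the only edges incident to $x_{j+1}$ are those represented by $\beta_j$ and $\beta_{j+1}$: since $x_{j+1}\notin X_{\alpha}$, it is already internal at $\alpha$'s series composition and hence never reappears as a terminal in any ancestor node of $T^B$. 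Moreover, any further block of $G[F]$ containing $x_{j+1}$ must, by the construction of $(T,\mathcal{X})$, be attached through a cut node sitting below $\beta_j$ or $\beta_{j+1}$; consequently every $F$-neighbour of $x_{j+1}$ lies in $V_{\beta_j}\cup V_{\beta_{j+1}}=Z_j$.

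I would then split into two cases. \emph{Case 1:} at least one of $\beta_j,\beta_{j+1}$ is a non-leaf of $(T,\mathcal{X})$; this covers both P-nodes and those edge nodes which carry a cut-node child. Lemma~\ref{lem:withattach} applied to that node produces a vertex in $Y_{\beta_j}\cup Y_{\beta_{j+1}}$ belonging to $N_1\cup N_2$, a contradiction. \emph{Case 2}, which I expect to be the main obstacle, is when both $\beta_j$ and $\beta_{j+1}$ are genuine leaves of $(T,\mathcal{X})$; then $Y_{\beta_j}=Y_{\beta_{j+1}}=\emptyset$, and the only potential witness is $x_{j+1}$ itself. I would first argue that $x_{j+1}$ cannot be a cut vertex of $G[F]$: if it were, the unique cut node labelled $\{x_{j+1}\}$ in $(T,\mathcal{X})$ would be attached below a leaf of $T^{B(x_{j+1})}$ carrying $x_{j+1}$ as a terminal, and by the analysis above the only such leaves in $T^B$ are $\beta_j$ and $\beta_{j+1}$ (and one checks $c(B)\ne x_{j+1}$, so $B(x_{j+1})=B$), contradicting that $\beta_j,\beta_{j+1}$ are leaves of $(T,\mathcal{X})$. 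Hence $x_{j+1}$ has only two $F$-neighbours, namely $x_j,x_{j+2}\in F$, and together with $x_{j+1}\in N_0$ this forces $\deg_G(x_{j+1})=2$, so the bypassing rule (Reduction Rule~\ref{rrule:deg2}) would apply, contradicting the assumption that $(G,S,k)$ is simplified.
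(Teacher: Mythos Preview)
Your argument is correct and follows essentially the same route as the paper's proof: the paper also splits on whether one of $\beta_j,\beta_{j+1}$ is a P-node (then invoke Lemma~\ref{lem:withattach}) versus both being edge nodes (then $x_{j+1}$ would have degree~2 in $G$ and Reduction rule~\ref{rrule:deg2} applies). Your case split ``non-leaf vs.\ leaf of $(T,\mathcal{X})$'' is slightly cleaner since it matches the hypothesis of Lemma~\ref{lem:withattach} directly, and you are more explicit than the paper about why $x_{j+1}$ cannot be a cut vertex in Case~2 and about the inclusion $\{x_{j+1}\}\cup Y_{\beta_j}\cup Y_{\beta_{j+1}}\subseteq Z_j\setminus\partial_F(Z_j)$; the paper treats these points implicitly.
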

\begin{proofof}
Suppose one of $\beta_j$ and $\beta_{j+1}$, say $\beta_j$, is a P-node. By Lemma \ref{lem:withattach}, $Y_{\beta_j}= V_{\beta_j}\setminus X_{\beta_j}$ contains a vertex of $N_1\cup N_2$. If both of $\beta_j$ and $\beta_{j+1}$ are edge nodes, then $x_{j+1}\in N_1\cup N_2$, since otherwise its degree in $G$ is two and we can apply Reduction Rule~\ref{rrule:deg2}, a contradiction.
\end{proofof}

Suppose that $q\geqslant 13$. First, suppose there exists $j$,
$3\leqslant j \leqslant q-2$, such that $\beta_j$ is a P-node. By
Lemma \ref{lem:withattach}, we have $Y_{\beta_j}\cap (N_1\cup
N_2)\neq \emptyset$. On the other hand, Claim \ref{cl:atleast}
says that the subsets $Z_{j-2}$ and $Z_{j+1}$ both contain at
least one vertex in $N_1\cup N_2$ each. Since $G[V^B_{\beta_j}]$
is biconnected  and $G[(S\cup Z_{j-2}
\cup Z_{j+1})\setminus X_{\beta_j}]$ is connected, Lemma
\ref{lem:K4-3+1} applies to these two graphs and there is a
$K_4$-subdivision in $G[S\cup V_{\alpha}]$, a contradiction.

Therefore, we can assume that for every $j$, $3\leqslant j \leq q-2$, $\beta_j$ is an edge node. It follows that
$G[X']$, with $X'=\{x_j:3\leqslant j \leqslant q-2\}$, is a chordless path. Claim \ref{cl:atleast} implies that every internal vertex of $X'$ is an attachment vertex, that is, either it belongs to $N_1\cup N_2$ or it is a cut vertex belonging to some $A(\beta_j)$. We consider the two sets $X_1:=\bigcup_{3\leqslant j\leqslant 6}V_{\beta_j}$ and $X_2:=\bigcup_{8\leqslant j\leqslant 11}V_{\beta_j}$.

\begin{claim}\label{cl:2neighb}
$|N_S(X_1)|\geqslant 2$ and $|N_S(X_2)|\geqslant 2$.
\end{claim}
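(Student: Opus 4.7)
The plan is to prove $|N_S(X_1)| \ge 2$ (and symmetrically $|N_S(X_2)| \ge 2$) by contradiction: if the inequality fails, I will show that a four-vertex stretch of the series path lies in $N_1$ with a common $S$-neighbour $x$, so that the Chandelier Reduction Rule~\ref{rrule:consec} fires on $G[F]$, contradicting the hypothesis that $(G,S,k)$ is simplified.

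First I would exploit the structure already identified in the surrounding proof: since $\beta_j$ is an edge node for each $j$ with $3 \le j \le q-2$, each such node contributes $V_{\beta_j} = \{x_j, x_{j+1}\}$ in the generic situation where no cut-node subtree is attached, so $X_1 = \{x_3, x_4, x_5, x_6, x_7\}$. For every $j \in \{3,4,5,6\}$, the set $Z_j$ is the three-vertex window $\{x_j, x_{j+1}, x_{j+2}\}$ whose only vertex outside $\partial_F(Z_j)$ is the interior terminal $x_{j+1}$; applying Claim~\ref{cl:atleast} at each such $j$ therefore places $\{x_4, x_5, x_6, x_7\}$ inside $X_1 \cap (N_1 \cup N_2)$. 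If $|N_S(X_1)| = 0$, then $X_1 \subseteq N_0$, contradicting $x_4 \in N_1 \cup N_2$; so I may assume $N_S(X_1) = \{x\}$, forcing each of $x_4, x_5, x_6, x_7$ to lie in $N_1$ with $N_S = \{x\}$. The edges $(x_4,x_5),(x_5,x_6),(x_6,x_7)$ supplied by $\beta_4, \beta_5, \beta_6$ give the path in $G[\{x_4,x_5,x_6,x_7\}]$, and the two interior vertices $x_5, x_6$ are interior terminals of the series chain with no further $F$-neighbours, so they have degree exactly $2 + |N_S| = 3$ in $G$. Taking $\ell = 4$ and $u_i = x_{i+3}$ in the Chandelier Rule then yields a contradiction with reducedness. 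The symmetric argument for $X_2$ uses Claim~\ref{cl:atleast} on $j \in \{7,8,9,10\}$, which is legitimate because $q - 2 \ge 11$ makes $\beta_7, \ldots, \beta_{11}$ all edge nodes, placing $\{x_8, x_9, x_{10}, x_{11}\}$ inside $X_2 \cap (N_1 \cup N_2)$ and then running the identical Chandelier attack on this quadruple.

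The delicate point I anticipate to be the main obstacle is the case in which an interior terminal $x_i \in \{x_5, x_6\}$ is itself a cut vertex of $G[F]$, so that a cut-node subtree $V_\gamma$ hangs below $\beta_{i-1}$ or $\beta_i$; in this situation $x_i$ has degree strictly greater than $3$ in $G$ and the Chandelier Rule does not fire directly on the quadruple above. To handle this I would invoke Lemma~\ref{lem:withattach} inside $V_\gamma$ to locate an $(N_1 \cup N_2)$-vertex in $Y_\gamma \subseteq X_1$, which under $|N_S(X_1)| \le 1$ must again lie in $N_1$ with $N_S = \{x\}$, and iterate this descent: either one eventually finds four consecutive such degree-$3$ vertices on a path inside $V_\gamma$ on which Chandelier fires, or $V_\gamma$ turns out to be a protrusion of boundary size at most four (one cut vertex $x_i$ plus at most three vertices of $N_1 \cap V_\gamma$) to which the generic protrusion rule (Reduction Rule~\ref{rrule:protrusion}) or the 2-boundary rule (Reduction Rule~\ref{rrule:2cut}) would already have applied. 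Either branch yields the contradiction with reducedness needed to conclude Claim~\ref{cl:2neighb}.
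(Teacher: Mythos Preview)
Your first case --- when none of the interior terminals $x_4,\ldots,x_7$ is a cut vertex of $G[F]$ --- is correct and coincides with the paper's argument: all four lie in $N_1\cup N_2$, and if $|N_S(X_1)|\le 1$ they all lie in $N_1$ with common $S$-neighbour, so the Chandelier rule fires on the degree-$3$ middle vertices $x_5,x_6$.

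The gap is in your handling of the cut-vertex case. Your proposed ``iterative descent'' is not made precise, and the fallback to the protrusion rule does not work: Reduction Rule~\ref{rrule:protrusion} is \emph{not} an ambient reduction rule that has already been exhaustively applied to every small-boundary protrusion; it is invoked only inside {\sc Branch-or-reduce}, and only on sets of the form $V^B_{\alpha}$ for P-nodes $\alpha$. So the fact that $V_\gamma$ happens to be a $4$-protrusion gives you no contradiction with the instance being simplified. Likewise, nothing forces $V_\gamma$ to be large enough for the protrusion rule to fire, nor to contain a chandelier of length~$4$.

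What the paper does instead is structural. If some $x\in X'_1=\{x_4,\ldots,x_7\}$ is a cut vertex with cut node $\alpha_x$, then by Lemma~\ref{lem:simplepath} the block tree of $G_{\alpha_x}$ is a path with a unique leaf block $B_x$. If $B_x$ is a single edge, its pendant vertex $y$ must satisfy $|N_S(y)|=2$ (otherwise the $1$-boundary or bypassing rule applies), so $|N_S(X_1)|\ge 2$ directly. If $B_x$ is nontrivial, one shows $\partial_G(B_x)=\{c,y\}$ (a third boundary vertex would trigger a $K_4$ via Lemma~\ref{lem:K4-3+1}); then reducedness under Rule~\ref{rrule:2cut} forces $B_x$ to be a $\theta_3$ with $c,y$ as subdividing nodes, and a $c,y$-path through $S\cup(V_\alpha\setminus B_x)$ completes a $K_4$-subdivision in $G[S\cup V_\alpha]$, contradicting that $\alpha$ is marked. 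You should replace your descent/protrusion sketch by this block-tree analysis.
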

\begin{proofof}
  Consider $X'_1=\{x_4,x_5,x_6,x_7\}$. Suppose that every vertex
  on $X'_1$ belongs $N_1\cup N_2$. As the instance is reduced with
  respect to Rule \ref{rrule:consec} and $|X'_1|=4$, clearly we
  have $|N_S(X_1)|\geqslant 2$. Hence we may assume there exists a
  cut vertex $x\in X'_1$ and let $\alpha_x$ be the cut node of
  $(T,\mathcal{X})$ with $X_{\alpha_x}=\{x\}$. By Lemma
  \ref{lem:simplepath}, there is only one leaf block $B_x$ in
  $G_{\alpha_x}$. If $B_x$ is a single edge, $B_x$ contains a
  pendant vertex $y$. Observe that $N_S(y)=2$ and the claim
  holds. Consider the case $B_x$ is a nontrivial block. By Lemma
  \ref{lem:withattach}, $B_x$ contains a vertex $y\neq c$ in
  $N_1\cup N_2$, where $c$ is the unique cut vertex contained in
  $B_x$. In fact, $B_x$ does not contain $z\neq y$ such that $z\in
  N_1\cup N_2$, since otherwise $|\partial_G(B_x)|\geq 3$ and
  applying Lemma~\ref{lem:K4-3+1} on $Y:=B_x$, $W:=C\cup
  (V_{\alpha}\setminus B_x)$ (with $C$ the connected component of
  $N_S(V_{\alpha})$) witnesses $K_4$-subdivision in $G[S\cup
  V_{\alpha}]$, a contradiction. So we have
  $\partial_G(B_x)=\{c,y\}$. As we assume that the instance is
  reduced, in particular with respect to Reduction rule
  \ref{rrule:2cut}, and $B_x$ is a nontrivial block, we conclude
  that $B_x$ is a $\theta_3$ with $c$ and $y$ as subdividing
  nodes. On the other hand, it is not difficult to see that
  $G[S\cup V_{\alpha}]$ contains a $c,y$-path $P$ avoiding
  $B_x$. It remains to observe that $G[B_x\cup P]$ is a
  $K_4$-model, a contradiction.
\end{proofof}

If $|N_S(V_{\alpha})|\geqslant 3$\, then Lemma \ref{lem:K4-3+1} applies to the biconnected component of $N_S(V_{\alpha})$ and $V_{\alpha}$, thus we obtain a $K_4$-subdivision, a contradiction. If $|N_S(V_{\alpha})|=2$, then $N_S(X_1)=N_S(X_2)$ and $G[S\cup V_{\alpha}]$ contains a $K_4$-model with branching nodes being the following four connected subsets, a contradiction: $X_1$, $X_2$, each of the two vertices of $N_S(X)$. That is, we have a $K_4$-model in $G[S\cup V_{\alpha}]$ whenever $q\geqslant 13$. Therefore, we have $q\leqslant 12$ if $\alpha$ is marked.
\end{proof}

\section{Deferred proof of Lemma \ref{lem:poly-branching}}

\begin{reminder}{Lemma \ref{lem:poly-branching}} Let $(G,S,k)$ be a
  simplified instance of \KCC{} and let $\alpha$ be a lowest
  unmarked node of $(T,\mathcal{X})$
  of $G[F]$.  In polynomial time, one can find
\begin{enumerate}
\vspace{-0.2cm}
\item[(a)] a path $X$ of size at most $2c_1$ satisfying the conditions
  of line~\ref{l:Xbr2} (resp. line~\ref{l:Xbr3}) if the test at
  line~\ref{l:br2-test} (resp.~\ref{l:br3-test}) succeeds;
\vspace{-0.2cm}
\item[(b)] a subset $X\subseteq V_{\alpha}$ of size bounded by
  $2c_1$ satisfying the condition of line~\ref{l:Xbr1} if the test at
  line~\ref{l:br1-test} succeeds;
\end{enumerate}
\end{reminder}
\begin{proof}

  Suppose that $\alpha$ is a cut node. If the test at line
  \ref{l:br2-test} or at line \ref{l:br3-test} succeeds, then
  there are two children $\beta_1$ and $\beta_2$ of $\alpha$ such
  that $X:=V_{\beta_1}\cup V_{\beta_2}$ satisfies the conditions
  of line~\ref{l:br2} or line~\ref{l:br3}, respectively. In case
  of (b), the proof of Lemma \ref{lem:simplepath} shows that if
  $\alpha$ has two children $\beta_1$ and $\beta_2$, then the
  subgraph $G[X\cup S]$ contains $K_4$ as a minor, where
  $X:=V_{\beta_1}\cup V_{\beta_2}$. With the bound provided by
  Lemma \ref{lem:marked-cst-size}, now it suffices to argue that
  $X$ is a connected set. We claim that $c\in X_{\beta_1}\cap
  X_{\beta_2}$. Indeed, $\beta_i$ is either a P-node or an edge
  node. Obviously, $c\in X_{\beta_i}$ if $\beta_i$ is an edge
  node. If $\beta_i$ is a P-node, recall that this is the root
  node of the canonical SP-tree $(T^B, \mc{X}^B)$ from which
  $\beta_i$ is inherited. Since $(c,G^B_{\beta_i})\in
  E(\vec{\mc{B}}_G)$, the construction of $(T^B, \mc{X}^B)$
  requires that $c\in X_{\beta_i}$. As a result, $c\in
  X_{\beta_1}\cap X_{\beta_2}$ and the subgraph $G[V_{\beta_1}\cup
  V_{\beta_2}]$ is connected.

  If $\alpha$ is an edge node, $\alpha$ can have at most two
  children, all of which are cut nodes. Take
  $X=V_{\alpha}$. Since every child of $\alpha$ is marked already,
  the bound of Lemma \ref{lem:simplepath} holds and $|X|\leqslant
  2c_0$. In $G[X]$, one can identify a path or a subset satisfying
  the condition (a) or (b).

If $\alpha$ is a P-node, let $\beta_1$ and $\beta_2$ be its two children. By Lemma \ref{lem:marked-cst-size}, we know that $|V_{\beta_1}|, |V_{\beta_2}|\leqslant c_1$. Take $X=V_{\alpha}$. In $G[X]$, one can identify a path or a subset satisfying the condition (a) or (b) if this is the case.

Let us consider the case when $\alpha$ is an S-node with
$\beta_1,\ldots , \beta_q$ as its children. Suppose that there are
$u,v \in V_{\alpha}\cap (N_1\cup N_2)$ which have neighbors in
distinct connected components of $G[S]$. 
Then there exist $1\leqslant k < k' \leqslant q$ such that $u\in
V_{\beta_k}$ and $v\in V_{\beta_{k'}}$. Choose $k$ and $k'$ such
that $k'-k$ is minimized. We claim that $k'-k\leqslant 2$. Suppose
not. Then we can find an alternative vertex $w \in Z_{k+1}\cap
(N_1\cup N_2)$ due to Claim \ref{cl:atleast} in the proof of Lemma
\ref{lem:marked-cst-size} and decrease $k'-k$, a
contradiction. Therefore, there exists $k$ such that
$X:=V_{\beta_k}\cup V_{\beta_{k+1}} \cup V_{\beta_{k+2}}$ contains
$u,v$. It remains to observe that $|X|\leqslant 3\times (\gamma(9)
+2c_0)$ and we can find a path $P$ between $u$ and $v$ within $X$,
satisfying (a). The proof remains the same when there are $u,v \in
V_{\alpha}\cap (N_1\cup N_2)$ with $bc_S(u)\neq bc_S(v)$. On the
other hand if the test at line~\ref{l:br1-test} succeeds, the
proof of Case (3) in Lemma \ref{lem:marked-cst-size} shows one can
find a bounded-size subset $X$. Indeed, if $q\leqslant 12$, one
can take $X:=V_{\alpha}$ and observe that $|X|\leqslant
12(\gamma(9) + 2c_0)\leqslant 2c_1$. If $q\geqslant 13$, take $X:=
\bigcup_{j=1}^{13} V_{\beta_j}$ and observe that $|X|\leqslant
13(\gamma(9) +2c_0)\leqslant 2c_1$.
\end{proof}


\begin{thebibliography}{10}

\bibitem{BidyukD07}
B.~Bidyuk and R.~Dechter.
\newblock Cutset sampling for bayesian networks.
\newblock {\em J. Artif. Intell. Res. (JAIR)}, 28:1--48, 2007.

\bibitem{Bodlaender98}
H.~L. Bodlaender.
\newblock A partial k-arboretum of graphs with bounded treewidth.
\newblock In {\em J. Algorithms}, pages 1--16. Springer, 1998.

\bibitem{BodlaenderF96}
H.~L. Bodlaender and B.~de~Fluiter.
\newblock Parallel algorithms for series parallel graphs.
\newblock In {\em Proceedings of ESA 1996}, pages 277--289, 1996.

\bibitem{BodlaenderFLPST09}
H.~L. Bodlaender, F.~V. Fomin, D.~Lokshtanov, E.~Penninkx, S.~Saurabh, and
  D.~M. Thilikos.
\newblock (meta) kernelization.
\newblock In {\em Proceedings of FOCS 2009}, pages 629--638, 2009.

\bibitem{CCL10}
Y.~Cao, J.~Chen, and Y.~L. 0002.
\newblock On feedback vertex set new measure and new structures.
\newblock In {\em Proc. of the 12th Scandinavian Symposium and Workshops on
  Algorithm Theory (SWAT)}, volume 6139 of {\em LNCS}, pages 93--104, 2010.

\bibitem{CCF05}
J.~Chen, B.~Chor, M.~Fellows, X.~Huang, D.~W. Juedes, I.~A. Kanj, and G.~Xia.
\newblock Tight lower bounds for certain parameterized {NP}-hard problems.
\newblock {\em Information and Computation}, 201(2):216--231, 2005.

\bibitem{ChenFLLV08}
J.~Chen, F.~V. Fomin, Y.~L. 0002, S.~Lu, and Y.~Villanger.
\newblock Improved algorithms for feedback vertex set problems.
\newblock {\em J. Comput. Syst. Sci.}, 74(7):1188--1198, 2008.

\bibitem{Cou90}
B.~Courcelle.
\newblock The monadic second-order logic of graphs. i. recognizable sets of
  finite graphs.
\newblock {\em Information and Computation}, 85:12--75, 1990.

\bibitem{CNP11}
M.~Cygan, J.~Nederlof, M.~Pilipczuk, M.~Pilipczuk, J.~M.~M. van Rooij, and
  J.~O. Wojtaszczyk.
\newblock Solving connectivity problems parameterized by treewidth in single
  exponential time.
\newblock Accepted at the 52nd Annual IEEE Symposium on Foundations of Computer
  Science (FOCS 2011), 2011.

\bibitem{DFL07}
F.~K. H.~A. Dehne, M.~R. Fellows, M.~A. Langston, F.~A. Rosamond, and
  K.~Stevens.
\newblock {An O(2$^{\mbox{O(k)}}$n$^{\mbox{3}}$) FPT Algorithm for the
  Undirected Feedback Vertex Set Problem}.
\newblock {\em Theory of Computing Systems}, 41(3):479--492, 2007.

\bibitem{DFH05}
E.~Demaine, F.~Fomin, M.~Hajiaghayi, and D.~Thilikos.
\newblock Subexponential parameterized algorithms on bounded-genus graphs and
  $h$-minor-free graphs.
\newblock {\em Journal of ACM}, 52(6):866--893, 2005.

\bibitem{diestel}
R.~Diestel.
\newblock {\em Graph theory}, volume 173 of {\em Graduate Texts in
  Mathematics}.
\newblock Springer, 2010.

\bibitem{DF99}
R.~Downey and M.~Fellows.
\newblock {\em Parameterized complexity}.
\newblock Springer, 1999.

\bibitem{Eppstein92}
D.~Eppstein.
\newblock Parallel recognition of series-parallel graphs.
\newblock {\em Inf. Comput.}, 98(1):41--55, 1992.

\bibitem{FG06}
J.~Flum and M.~Grohe.
\newblock {\em Parameterized complexity theorey}.
\newblock Texts in Theoretical Computer Science. Springer, 2006.

\bibitem{FLST10}
F.~Fomin, D.~Lokshtanov, S.~Saurabh, and D.~Thilikos.
\newblock Bidimensionality and kernels.
\newblock In {\em Annual ACM-SIAM symposium on Discrete algorithms (SODA)},
  pages 503--510, 2010.

\bibitem{FominLMPS11}
F.~V. Fomin, D.~Lokshtanov, N.~Misra, G.~Philip, and S.~Saurabh.
\newblock Hitting forbidden minors: Approximation and kernelization.
\newblock In {\em Proceedings of STACS 2011}, pages 189--200, 2011.

\bibitem{FominLMS11}
F.~V. Fomin, D.~Lokshtanov, N.~Misra, and S.~Saurabh.
\newblock Nearly optimal fpt algorithms for planar-$\mathcal{F}$-deletion.
\newblock In {\em unpublished manuscript}, 2011.

\bibitem{GGH06}
J.~Guo, J.~Gramm, F.~H{\"u}ffner, R.~Niedermeier, and S.~Wernicke.
\newblock Compression-based fixed-parameter algorithms for feedback vertex set
  and edge bipartization.
\newblock {\em Journal of Computer and System Sciences}, 72(8):1386--1396,
  2006.

\bibitem{HHLP11}
P.~Heggernes, P.~van~'t Hof, D.~Lokshtanov, and C.~Paul.
\newblock Obtaining a bipartite graph by contracting few edges.
\newblock In {\em Proceedings of FSTTCS}, 2011.

\bibitem{HopcroftT73}
J.~E. Hopcroft and R.~E. Tarjan.
\newblock Efficient algorithms for graph manipulation [h] (algorithm 447).
\newblock {\em Commun. ACM}, 16(6):372--378, 1973.

\bibitem{JPS11}
G.~Joret, C.~Paul, I.~Sau, S.~Saurabh, and S.~Thomass\'e.
\newblock Hitting and harvesting pumpkims.
\newblock In {\em European Symposium on Algorithms (ESA)}, Lecture Notes in
  Computer Science, 2011.

\bibitem{LokshtanovMS11}
D.~Lokshtanov, D.~Marx, and S.~Saurabh.
\newblock Slightly superexponential parameterized problems.
\newblock In {\em SODA}, pages 760--776, 2011.

\bibitem{LewisYannakakis1980}
J.~{M}.~{L}ewis and M.~{Y}annakakis.
\newblock {T}he node-deletion problem for hereditary properties is
  {NP}-complete.
\newblock {\em Journal of Computer and System Sciences}, 20(2):219 -- 230,
  1980.

\bibitem{Nie06}
R.~Niedermeier.
\newblock {\em Invitation to fixed parameter algorithms}, volume~31 of {\em
  Oxford Lectures Series in Mathematics and its Applications}.
\newblock Oxford University Press, 2006.

\bibitem{ReedSmithVetta2004}
B.~Reed, K.~Smith, and A.~Vetta.
\newblock Finding odd cycle transversals.
\newblock {\em Operations Research Letters}, 32(4):299 -- 301, 2004.

\bibitem{RS04}
N.~Roberston and P.~Seymour.
\newblock Graph minors xx: Wagner's conjecture.
\newblock {\em Journal of Combinatorial Theory B}, 92(2):325--357, 2004.

\bibitem{VTL82}
J.~Valdes, R.~Tarjan, and E.~Lawler.
\newblock The recognition of series-parallel graphs.
\newblock {\em SIAM Journal on Computing}, 11:298--313, 1982.

\end{thebibliography}
\end{document}